	\newcommand{\lone}{1}
	\newcommand{\ltens}{\otimes}
	\newcommand{\lbot}{\bot}
	\newcommand{\lpar}{\parr}
\def\restriction#1#2{\mathchoice
              {\setbox1\hbox{${\displaystyle #1}_{\scriptstyle #2}$}
              \restrictionaux{#1}{#2}}
              {\setbox1\hbox{${\textstyle #1}_{\scriptstyle #2}$}
              \restrictionaux{#1}{#2}}
              {\setbox1\hbox{${\scriptstyle #1}_{\scriptscriptstyle #2}$}
              \restrictionaux{#1}{#2}}
              {\setbox1\hbox{${\scriptscriptstyle #1}_{\scriptscriptstyle #2}$}
              \restrictionaux{#1}{#2}}}
\def\restrictionaux#1#2{{#1\,\smash{\vrule height .8\ht1 depth .85\dp1}}_{\,#2}}
\def\N{{{\rm I}\!{\rm N}}}
\theoremstyle{plain}
\newtheorem{theorem}{Theorem}
\theoremstyle{plain}
\newtheorem{definition}[theorem]{Definition}
\theoremstyle{plain}
\newtheorem{fact}[theorem]{Fact}
\theoremstyle{plain}
\newtheorem{corollary}[theorem]{Corollary}
\theoremstyle{plain}
\newtheorem{lemma}[theorem]{Lemma}
\theoremstyle{plain}
\newtheorem{remark}{Remark}
\theoremstyle{plain}
\newtheorem{example}[theorem]{Example}
\theoremstyle{plain}
\theoremstyle{plain}
\newtheorem{proposition}[theorem]{Proposition}
\newcommand{\integers}[0]{\N}
\newcommand{\finitesequences}[1]{{#1}^{< \omega}}
\newcommand{\card}[1]{\textit{Card}(#1)}
\newcommand{\strong}[1]{\textit{strong}(#1)}
\newcommand{\nonerasingstrong}[1]{\textit{strong}_{\nonerasing}(#1)}
\newcommand{\sequence}[1]{\langle #1 \rangle}
\newcommand{\exhaustive}[1]{{#1}^\textit{ex}}
\newcommand{\sm}[1]{\llbracket #1 \rrbracket}
\newcommand{\smbis}[1]{\Lparen #1 \Rparen}
\newcommand{\sizepoint}[1]{s(#1)}
\newcommand{\sizepointbis}[1]{s_\mathcal{W}(#1)}
\newcommand{\sizeexperiment}[1]{s_{\sm{}}(#1)}
\newcommand{\sizebisinf}[1]{{s_{\mathcal{W}}}_\textit{inf}(#1)}
\newcommand{\sizeexperimentbis}[1]{s_{\smbis{}}(#1)}
\newcommand{\sizenet}[1]{\lVert #1 \rVert}
\newcommand{\cutnets}[4]{(#1 \vert #2)_{#3, #4}}
\newcommand{\compl}[1]{\sharp #1}
\newcommand{\experimentbis}[2]{\Lparen \Rparen\textrm{-experiment }#1\textrm{ of }#2}
\newcommand{\Cut}[1]{\textit{Cut}(#1)}
\newcommand{\result}[1]{\vert #1 \vert}
\newcommand{\finitemultisets}[1]{\mathcal{M}_\textit{fin}(#1)}
\newcommand{\finitesubsets}[1]{\mathcal{P}_\textit{fin}(#1)}
\newcommand{\subsets}[1]{\mathcal{P}(#1)}
\newcommand{\multiset}[1]{[#1]}
\newcommand{\atomic}[1]{#1_\textit{At}}
\newcommand{\weakeningsofexperiment}[1]{\mathcal{W}(#1)}
\newcommand{\depth}[1]{\textit{depth}(#1)}
\newcommand{\rank}[1]{\textit{rank}(#1)}
\newcommand{\Supp}[1]{\textit{Supp}(#1)}
\newcommand{\seq}[1]{(#1)}
\newcommand{\length}[1]{\textit{length}(#1)}
\newcommand{\ground}[1]{\textit{ground}(#1)}
\newcommand{\onecut}{\rightsquigarrow}
\newcommand{\erasing}{e}
\newcommand{\nonerasing}{\neg e}
\newcommand{\stratnonerasing}{{{(\nonerasing)}_{s}}}
\newcommand{\antistraterasing}{{\erasing}_{as}}
\newcommand{\oneerasing}{{\onecut}_{\erasing}}
\newcommand{\onestratnonerasing}{{\onecut}_\stratnonerasing}
\newcommand{\onenonerasing}{{\onecut}_{\nonerasing}}
\newcommand{\oneantistraterasing}{{\onecut}_{\antistraterasing}}
\newcommand{\cutred}{{\onecut}^\ast}
\newcommand{\stratnonerasingred}{{\onecut}_{\stratnonerasing}^\ast}
\newcommand{\erasingred}{{\oneerasing}^\ast}
\newcommand{\nonerasingred}{{\onenonerasing}^\ast}
\newcommand{\gnet}{g-structure}
\newcommand{\grnet}{ground-structure}
\newcommand{\scalefact}{0.67}
\newenvironment{minilist}{\begin{list}{$\bullet$}{
  \setlength{\parsep}{0pt}
  \setlength{\topsep}{-10pt}
  \setlength{\itemsep}{-\parsep}
  \setlength{\labelsep}{0.4em}
  \setlength{\leftmargin}{1.3em}}}{\end{list}}
\newenvironment{minienum}{\begin{enumerate}{
  \setlength{\parsep}{0pt}
  \setlength{\topsep}{-10pt}
  \setlength{\itemsep}{-\parsep}
  \setlength{\labelsep}{0.4em}
  \setlength{\leftmargin}{1.3em}}}{\end{enumerate}}
\title{A semantic account of strong normalization in Linear Logic}
\author[1]{Daniel de Carvalho}
\author[2]{Lorenzo Tortora de Falco}
\affil[1]{Datalogisk Institut, K\o benhavns Universitet}
\affil[2]{Dipartimento di Matematica e Fisica, Roma~III}
\begin{document}

\maketitle

\begin{abstract}
We prove that given two cut-free nets of linear logic, by means of their relational interpretations one can: 1) first determine whether or not the net obtained by cutting the two nets is strongly normalizable 2) then (in case it is strongly normalizable) compute the maximum length of the reduction sequences starting from that net. 
\end{abstract}

\section{Introduction}

Linear Logic (LL,~\cite{ll}) originated from the coherent model of typed $\lambda$-calculus: the category of coherent spaces and linear maps was ``hidden'' behind the category of coherent spaces and stable maps. It then turned out that the coherence relation was not necessary to interpret linear logic proofs (proof-nets), and this remark led to the so-called multiset based relational model of LL: the interpretation of proof-nets in the category $\mathbf{Rel}$ of sets and relations. Since then, many efforts have been done to understand to which extent the relational interpretation of a proof-net is nothing but a different representation of the proof itself: in Girard's original paper (\cite{ll}), with every proof-net was associated the set of ``results of experiments'' of the proof-net, a set proven to be invariant with respect to cut elimination. Later on these ``results'' have been represented as nets themselves, and through Taylor's expansion a proof-net can been represented as an infinite linear combination of nets (see~\cite{EhrhardRegnier:DiffNets} and~\cite{EhrhardRegnier:UnifTaylor}). On the other hand, we proved in~\cite{MR2926280} that (in the absence of weakenings) one can always recover, from the relational interpretation of a cut-free proof-net, the proof-net itself.

This paper establishes another tight link between the relational model and LL proof-nets. We follow the approach to the semantics of bounded time complexity consisting in measuring by semantic means the execution of any program, regardless of its computational complexity. The aim is to compare different computational behaviors and to learn something afterwards on the very nature of bounded time complexity. Following this approach and inspired by~\cite{bohmKrivine}, in \cite{phddecarvalho,Carvalhoexecution} one of the authors of the present paper could compute the execution time of an untyped $\lambda$-term from its interpretation in the Kleisli category of the comonad associated with the finite multisets functor on the category of sets and relations. Such an interpretation is the same as the interpretation of the net encoding the $ \lambda$-term in the multiset based relational model of linear logic. The execution time is measured there in terms of elementary steps of the so-called Krivine machine. Also, \cite{phddecarvalho,Carvalhoexecution} give a precise relation between an intersection types system introduced in~\cite{CDV} and experiments in the multiset based relational model. Experiments are  a tool introduced by Girard in~\cite{ll} allowing to compute the interpretation of proofs pointwise. An experiment corresponds to a type derivation and the result of an experiment corresponds to a type. This same approach was applied in~\cite{CarvPagTdF10} to LL to show how it is possible to compute the number of steps of cut elimination by semantic means (notice that the measure being now the number of cut elimination steps, here is a first difference with \cite{phddecarvalho,Carvalhoexecution} where Krivine's machine was used to measure execution time). The results of~\cite{CarvPagTdF10} are presented in the framework of proof-nets, that we call nets in this paper: if $\pi'$ is a net obtained by applying some steps of cut elimination to $\pi$, the main property of any model is that the interpretation $\sm{\pi}$ of $\pi$ is the same as the interpretation $\sm{\pi'}$ of $\pi'$, so that from $\sm{\pi}$ it is clearly impossible to determine the number of steps leading from $\pi$ to $\pi'$. Nevertheless, in~\cite{CarvPagTdF10} it is shown that if $\pi_1$ and $\pi_2$ are two cut-free nets connected by means of a cut-link, one can answer the two following questions by only referring to the interpretations $\sm{\pi_1}$ and $\sm{\pi_2}$ in the relational model:
\begin{itemize}
\item\label{uno1}
is it the case that the net obtained by cutting $\pi_{1}$ and $\pi_{2}$ is weakly normalizable?
\item\label{due2}
if the answer to the previous question is positive,
what is the number of cut reduction steps
leading from the net with cut to a cut-free one?
\end{itemize}

In the present paper, still by only referring to the interpretations $\sm{\pi_1}$ and $\sm{\pi_2}$ in the relational model, we answer the two following variants of the previous questions:
\begin{enumerate}
\item\label{uno1bis}
is it the case that the net obtained by cutting $\pi_{1}$ and $\pi_{2}$ is strongly normalizable?
\item\label{due2bis}
if the answer to the previous question is positive, what is the maximum length (i.e.\ the number of cut reduction steps) of the reduction sequences starting from the net obtained by cutting $\pi_{1}$ and $\pi_{2}$?
\end{enumerate}

Despite the fact that the new questions are just little variations on the old ones, the answers \emph{are not} variants of the old ones, and require the development of new tools (see for example the new $\smbis{}$-interpretation of Definition~\ref{def:experiment}). The first question makes sense only in an untyped framework (in the typed case, cut elimination is strongly normalizing, see~\cite{ll,phddanos,SNLL10} and...Subsection~\ref{subsection:conservation}!), and we thus study in Section~\ref{sect:nets} nets and their stratified reduction in an untyped framework. Subsection~\ref{subsect:netsDefinition} mainly recalls definitions and notations coming from~\cite{CarvPagTdF10}, while in Subsection~\ref{subsect:netsReduction}, we prove two syntactic results that will be used in the sequel: 1) Proposition~\ref{proposition:SN=SNnonerasing} reduces strong normalization to ``non erasing'' strong normalization (and will be used in Section~\ref{sect:SN}), and 2) Proposition~\ref{proposition:nonerasing-erasing} shows that when a net is strongly normalizable there exists a ``canonical'' reduction sequence of maximum length, consisting first of ``non erasing stratified'' steps and then of ``erasing antistratified'' steps (and will be used in Section~\ref{sect:SNquantitative}).

In Section~\ref{sect:experiments}, we introduce the standard notion of experiment (called $\sm{}$-experiment in this paper) leading to the usual interpretation (called $\sm{}$-interpretation in this paper) of a net in the category of sets and relations (the multiset based relational model of linear logic). In the same Definition~\ref{def:experiment}, we introduce $\smbis{}$-experiments, leading to the $\smbis{}$-interpretation of nets: the main difference between $\sm{}$-experiments and $\smbis{}$-experiments is the behavior w.r.t.\ weakening links. And indeed, the main difference between weak and strong normalization lies in the fact that to study the latter property we cannot ``forget pieces of proofs'' (and this is actually what the usual $\sm{}$-interpretation does by assigning the empty multiset as label to the conclusion of weakening links). The newly defined $\smbis{}$-interpretation \emph{does not} yield a model of linear logic: it is invariant only w.r.t.\ \emph{non erasing} reduction steps (Proposition~\ref{prop:invariancesmbis}).

In Section~\ref{sect:SN}, we point out an intrinsic difference between the semantic characterization of strong normalization and the one of weak normalization proven in~\cite{CarvPagTdF10} (here Theorem~\ref{th:qualitatifWN}): there exist nets $\pi$ and $\pi'$ such that $\sm{\pi}=\sm{\pi'}$ and $\pi$ is strongly normalizing while $\pi'$ is not, which clearly shows that there is no hope (in the general case) to extract the information on the strong normalizability of a net from its $\sm{}$-interpretation (Remark~\ref{rem:NoSNsem}). We then prove that in case $\pi$ is a cut-free net, its $\smbis{}$-interpretation $\smbis{\pi}$ can be computed from its ``good old'' $\sm{}$-interpretation $\sm{\pi}$ (Proposition~\ref{prop:SembisFromSem}). This implies that to answer Questions~\ref{uno1bis} and~\ref{due2bis} by only referring to the interpretations $\sm{\pi_1}$ and $\sm{\pi_2}$ in the ``good old'' relational model of linear logic, we are allowed to use the newly defined $\smbis{}$-interpretations $\smbis{\pi_1}$ and $\smbis{\pi_2}$. We then accurately adapt the notion of size of an $\sm{}$-experiment of the relational model to $\smbis{}$-experiments, in order to obtain a variant of the ``Key Lemma'' (actually Lemmata 17 and 20) of~\cite{CarvPagTdF10}: Lemma~\ref{lemma : key-lemma : strat} measures the difference between the size of (suitable) experiments of a net and the size of (suitable) experiments of any of its one step reducts. We can thus answer Question~\ref{uno1bis} (Corollary~\ref{corollary : cut strongly normalizable}).\\ 
Our qualitative results of Subsection~\ref{subsect:SNsem} allow to give a new proof of the so called ``Conservation Theorem'' (here Theorem~\ref{theorem: conservation}) for Multiplicative Exponential Linear Logic ($MELL$). Such a result is a crucial step in the traditional proof of strong normalization for Linear Logic (\cite{ll,phddanos,SNLL10}) and it is usually proven using confluence (\cite{phddanos,SNLL10}): our semantic approach does not rely on confluence and yields thus a proof of strong normalization for $MELL$ which does not use confluence (Corollary~\ref{corollary:SN} of Subsection~\ref{subsection:conservation}).

In Section~\ref{sect:SNquantitative}, we answer Question~\ref{due2bis}: thanks to Proposition~\ref{proposition:nonerasing-erasing} it is enough from $\sm{\pi_1}$ and $\sm{\pi_2}$ to predict the length of a ``canonical'' reduction sequence, and by Proposition~\ref{prop:SembisFromSem} we can substitute $\smbis{\pi_{1}}$ and $\smbis{\pi_{2}}$ for $\sm{\pi_1}$ and $\sm{\pi_2}$. We first measure the length of the longest ``non erasing stratified'' reduction sequence, by means of the size of (suitable) experiments, and we then shift to the size of results of $\smbis{}$-experiments, that is elements of the $\smbis{}$-interpretation. We then measure the length of the longest ``erasing antistratified'' reduction sequence starting from a ``non erasing normal'' net, relating this length to the number of (erasing) cuts of the net, and counting this number using the $\smbis{}$-interpretation. The precise answer to Question~\ref{due2bis} is Theorem~\ref{theorem:exactSN}. We end the section by giving a concrete example (Example~\ref{example:LessThanInjectivity}), showing also that only a little part of $\sm{\pi_1}$ and $\sm{\pi_2}$ is used in Theorem~\ref{theorem:exactSN} to compute the maximum length of the reduction sequences starting from the net obtained by cutting $\pi_{1}$ and $\pi_{2}$.\\
In a parallel non communicating work (\cite{DBLP:conf/fossacs/BernadetL11,bernadetleng11b,bernadetlengrand13}), a semantic bound of the number of $\beta$-reductions of a given $\lambda$-term is given. We briefly point out some differences and similarities between the two approaches in Remark~\ref{rem:IntersectTpesRex} and in the conclusion of the paper; it would probably worth comparing more precisely our result with those papers in future work.

\subsection*{Notations}
For a set $X$, $\subsets{X}$ denotes the set of the subsets of $X$, $\finitesubsets{X}$ denotes the set of the finite subsets of $X$ and $\finitemultisets{X}$ denotes the set of finite multisets of elements of $X$. 
The number of elements of a finite set $X$ is denoted by $\card{X}$. 
As usual, a finite multiset of elements of $X$ is a function with domain $X$ and codomain the set $\integers$ of natural numbers; when $m\in\finitemultisets{X}$, we denote by $\Supp{m}$ the subset of $X$ having as elements those $a\in X$ such that $m(a)>0$, and more generally for any $x\in X$, the integer $m(x)$ is sometimes called the multiplicity of $x$ in $m$.
We write $a+b$ for the sum of the two finite multisets $a$ and $b$, and for a finite multiset $m$ of elements of the set $X$ we denote by $\card{m}$ the integer $\Sigma_{x\in \Supp{m}} m(x)$.

Given any set $X$, we denote by $\finitesequences{X}$ the set of finite sequences of elements of $X$, and by $\mathbf{x}$ a generic element of $\finitesequences{X}$. For example, a sequence $\seq{c_1,\dots,c_n}$ may be denoted simply by $\mathbf{c}$.

\section{Nets and their normalization}\label{sect:nets}

In this section, we introduce nets and their cut elimination in an untyped framework (Subsection~\ref{subsect:netsDefinition}), mainly following~\cite{CarvPagTdF10}. We then study normalization of these nets (Subsection~\ref{subsect:netsReduction}): the two main results that will be used in the sequel are 1) a net is strongly normalizable iff every \emph{non erasing} reduction sequence starting from it is finite (Proposition~\ref{proposition:SN=SNnonerasing}) and 2) whenever a net $\pi$ is strongly normalizing, there exist ``canonical'' reduction sequences of maximum length starting from $\pi$ that first reduce stratified non erasing cuts and then erasing cuts (Proposition~\ref{proposition:nonerasing-erasing}).

\subsection{Nets}\label{subsect:netsDefinition}

The theory of proof-nets has rather changed since the introduction of this crucial concept of linear logic in~\cite{ll}: we choose here the syntax of~\cite{CarvPagTdF10}, where we already discussed such a choice. Let us just recall here that untyped nets in our sense have been first introduced in~\cite{lics06} in order to encode polytime computations (inspired by the ``light'' untyped $\lambda$-calculus of~\cite{phdterui}). One of the novelties of the untyped \emph{classical} framework of~\cite{lics06} w.r.t. the intuitionistic framework of~\cite{phdterui} is the presence of \emph{clashes}, that is cuts which cannot be reduced (see Definition~\ref{def:clash} and Figure~\ref{fig:clashes}). Following~\cite{hilbert} we consider $?$-links with $n\geq 0$ premises (these links are often represented by a tree of contractions and weakenings), while our $\flat$-node is a way to represent dereliction: these choices allowed in~\cite{CarvPagTdF10} a strict correspondence between the number of steps of the cut elimination of a net and its interpretation, which is still relevant here (see Theorem~\ref{theorem:qualitativeSN} and Theorem~\ref{theorem:exactSN}).

\begin{definition}[Ground-structure]\label{def:gnet}
A \emph{\grnet}, or \emph{\gnet} for short, is a finite (possibly empty) labelled directed acyclic graph whose nodes (also called links) are defined together with an arity and a coarity, i.e.\ a given number of incident edges called the \emph{premises} of the node and a given number of emergent edges called the \emph{conclusions} of the node. The valid nodes are:
\begin{center}
\scalebox{\scalefact}{\begin{picture}(0,0)%
\includegraphics{linkax.pstex}%
\end{picture}%
\setlength{\unitlength}{3947sp}%
\begingroup\makeatletter\ifx\SetFigFontNFSS\undefined%
\gdef\SetFigFontNFSS#1#2#3#4#5{%
  \reset@font\fontsize{#1}{#2pt}%
  \fontfamily{#3}\fontseries{#4}\fontshape{#5}%
  \selectfont}%
\fi\endgroup%
\begin{picture}(684,924)(2359,-523)
\put(2706,-109){\makebox(0,0)[b]{\smash{{\SetFigFontNFSS{10}{12.0}{\rmdefault}{\mddefault}{\updefault}{\color[rgb]{0,0,0}$ax$}%
}}}}
\end{picture}%
}\quad
\scalebox{\scalefact}{\begin{picture}(0,0)%
\includegraphics{linkcut.pstex}%
\end{picture}%
\setlength{\unitlength}{3947sp}%
\begingroup\makeatletter\ifx\SetFigFontNFSS\undefined%
\gdef\SetFigFontNFSS#1#2#3#4#5{%
  \reset@font\fontsize{#1}{#2pt}%
  \fontfamily{#3}\fontseries{#4}\fontshape{#5}%
  \selectfont}%
\fi\endgroup%
\begin{picture}(624,924)(2389,-523)
\put(2706,-109){\makebox(0,0)[b]{\smash{{\SetFigFontNFSS{10}{12.0}{\rmdefault}{\mddefault}{\updefault}{\color[rgb]{0,0,0}$cut$}%
}}}}
\end{picture}%
}\quad
\scalebox{\scalefact}{\begin{picture}(0,0)%
\includegraphics{linktens.pstex}%
\end{picture}%
\setlength{\unitlength}{3947sp}%
\begingroup\makeatletter\ifx\SetFigFontNFSS\undefined%
\gdef\SetFigFontNFSS#1#2#3#4#5{%
  \reset@font\fontsize{#1}{#2pt}%
  \fontfamily{#3}\fontseries{#4}\fontshape{#5}%
  \selectfont}%
\fi\endgroup%
\begin{picture}(684,924)(2359,-523)
\put(2706,-109){\makebox(0,0)[b]{\smash{{\SetFigFontNFSS{10}{12.0}{\rmdefault}{\mddefault}{\updefault}{\color[rgb]{0,0,0}$\ltens$}%
}}}}
\end{picture}%
}\quad
\scalebox{\scalefact}{\begin{picture}(0,0)%
\includegraphics{linkpar.pstex}%
\end{picture}%
\setlength{\unitlength}{3947sp}%
\begingroup\makeatletter\ifx\SetFigFontNFSS\undefined%
\gdef\SetFigFontNFSS#1#2#3#4#5{%
  \reset@font\fontsize{#1}{#2pt}%
  \fontfamily{#3}\fontseries{#4}\fontshape{#5}%
  \selectfont}%
\fi\endgroup%
\begin{picture}(684,924)(2359,-523)
\put(2706,-109){\makebox(0,0)[b]{\smash{{\SetFigFontNFSS{10}{12.0}{\rmdefault}{\mddefault}{\updefault}{\color[rgb]{0,0,0}$\lpar$}%
}}}}
\end{picture}%
}\quad
\scalebox{\scalefact}{\begin{picture}(0,0)%
\includegraphics{linkone.pstex}%
\end{picture}%
\setlength{\unitlength}{3947sp}%
\begingroup\makeatletter\ifx\SetFigFontNFSS\undefined%
\gdef\SetFigFontNFSS#1#2#3#4#5{%
  \reset@font\fontsize{#1}{#2pt}%
  \fontfamily{#3}\fontseries{#4}\fontshape{#5}%
  \selectfont}%
\fi\endgroup%
\begin{picture}(684,924)(2359,-523)
\put(2706,-109){\makebox(0,0)[b]{\smash{{\SetFigFontNFSS{10}{12.0}{\rmdefault}{\mddefault}{\updefault}{\color[rgb]{0,0,0}$\lone$}%
}}}}
\end{picture}%
}\quad
\scalebox{\scalefact}{\begin{picture}(0,0)%
\includegraphics{linkbot.pstex}%
\end{picture}%
\setlength{\unitlength}{3947sp}%
\begingroup\makeatletter\ifx\SetFigFontNFSS\undefined%
\gdef\SetFigFontNFSS#1#2#3#4#5{%
  \reset@font\fontsize{#1}{#2pt}%
  \fontfamily{#3}\fontseries{#4}\fontshape{#5}%
  \selectfont}%
\fi\endgroup%
\begin{picture}(684,924)(2359,-523)
\put(2706,-109){\makebox(0,0)[b]{\smash{{\SetFigFontNFSS{10}{12.0}{\rmdefault}{\mddefault}{\updefault}{\color[rgb]{0,0,0}$\lbot$}%
}}}}
\end{picture}%
}\quad
\scalebox{\scalefact}{\input{linkbang.pstex_t}}\quad
\scalebox{\scalefact}{\input{linkf.pstex_t}}\quad
\scalebox{\scalefact}{\input{linkwhy.pstex_t}}\quad
\scalebox{\scalefact}{\begin{picture}(0,0)%
\includegraphics{linknew.pstex}%
\end{picture}%
\setlength{\unitlength}{3947sp}%
\begingroup\makeatletter\ifx\SetFigFont\undefined%
\gdef\SetFigFont#1#2#3#4#5{%
  \reset@font\fontsize{#1}{#2pt}%
  \fontfamily{#3}\fontseries{#4}\fontshape{#5}%
  \selectfont}%
\fi\endgroup%
\begin{picture}(684,619)(2359,-218)
\put(2706,-109){\makebox(0,0)[b]{\smash{{\SetFigFont{10}{12.0}{\rmdefault}{\mddefault}{\updefault}{\color[rgb]{0,0,0}$\circ$}%
}}}}
\end{picture}%
}
\end{center}
An edge may have or may not have a $\flat$ label: an edge with no label (resp. with a $\flat$ label) is called \emph{logical} (resp. \emph{structural}). The $\flat$-nodes have a logical premise and a structural conclusion, the $?$-nodes have $k\geq 0$ structural premises and one logical conclusion, the $!$-nodes have no premise, exactly one logical conclusion, also called \emph{main} conclusion of the node, and $k\geq0$ structural conclusions, called \emph{auxiliary} conclusions of the node. Premises and conclusions of the nodes $ax$, $cut$, $\ltens$, $\lpar$, $1$, $\bot$ are logical edges. 
Premises of the nodes $\circ$ are called \emph{conclusions} of the \gnet; we consider that a \gnet{} is given with an order $(c_1,\dots,c_n)$ of its conclusions.\\
We denote by $!(\alpha)$ the set of $!$-links of a \gnet{} $\alpha$. 
\end{definition}

When drawing a \gnet{} we order its conclusions from left to right. Also we represent edges oriented top-down so that we speak of moving upwardly or downwardly in the graph, and of nodes or edges ``above'' or ``under'' a given node/edge. In the sequel we will not write explicitly the orientation of the edges. Moreover we will not represent the $\circ$-nodes. 
In order to give more concise pictures, when not misleading, we may represent an arbitrary number of $\flat$-edges (possibly zero) as a $\flat$-edge with a diagonal stroke drawn across (see Fig~\ref{fig:notationpicture}). In the same spirit, a $?$-link with a diagonal stroke drawn across its conclusion represents an arbitrary number of $?$-links, possibly zero (see Fig~\ref{fig:notationpicture}). 

\begin{figure}
\centering
\subfigure{$$\xymatrix@C=3pt{\scalebox{\scalefact}{\input{linkwhystroke1.pstex_t}}&=&\scalebox{\scalefact}{\input{linkwhystroke2.pstex_t}}}$$}
\qquad\qquad
\subfigure{$$\xymatrix@C=3pt{\scalebox{\scalefact}{\input{linkbangstroke2.pstex_t}}&=&\scalebox{\scalefact}{\input{linkbangstroke1.pstex_t}}}$$}
\qquad\qquad
\subfigure{$$\xymatrix@C=3pt{\scalebox{\scalefact}{\input{linkwhystroke3.pstex_t}}&=&\scalebox{\scalefact}{\input{linkwhystroke4.pstex_t}}}$$}
\caption{Some conventions to picture an arbitrary number of nodes/edges}\label{fig:notationpicture}
\hrulefill
\end{figure}

\begin{definition}[Untyped $\flat$-structure, untyped nets]\label{def:struct}
For any $d \in \integers$, we define, by induction on $d$, the set of \emph{untyped $\flat$-structures} of depth $d$. 

An \emph{untyped $\flat$-structure}, or simply \emph{$\flat$-structure}, $\pi$ \emph{of depth $0$} is a \gnet{} without $!$-nodes; in this case, we set $\ground{\pi}=\pi$. An \emph{untyped $\flat$-structure} $\pi$ \emph{of depth $d+1$} is a \gnet{} $\alpha$, denoted by $\ground{\pi}$, with a function that assigns to every $!$-link $o$ of $\alpha$ with $n_o+1$ conclusions a $\flat$-structure of depth at most $d$, that we denote $\pi^o$ and we call the \emph{box of $o$}, with $n_o$ structural conclusions, also called \emph{auxiliary conclusions} of $\pi^o$, and exactly one logical conclusion, called the \emph{main conclusion} of $\pi^o$, and a bijection from the set of the $n_o$ structural conclusions of the link $o$ to the set of the $n_o$ structural conclusions of the $\flat$-structure $ \pi^o $. Moreover $\alpha$ has at least one $!$-link with a box of depth $d$.\\
We say that $\ground{\pi}$ is the \emph{\gnet{} of depth $0$ of $\pi$}; a \emph{\gnet{} of depth $d+1$ in $\pi$} is a \gnet{} of depth $d$ of the box associated by $\pi$ with a $!$-node of $\ground{\pi}$. A \emph{link $l$ of depth $d$ of $\pi$} is a link of a \gnet{} of depth $d$ of $\pi$; we denote by $\depth{l}$ the depth $d$ of $l$. We refer more generally to a link/\gnet{} of $\pi$ meaning a link/\gnet{} of some depth of $\pi$. 

A \emph{switching} of a \gnet{} $\alpha$ is an undirected subgraph of $\alpha$ obtained by forgetting the orientation of $\alpha$'s edges, by deleting one of the two premises of 
each $\lpar$-node, and for every $?$-node $l$ with $n\geq 1$ premises, by erasing all but one premises of $l$.\\
An \emph{untyped $\flat$-net}, \emph{$\flat$-net} for short, is a $\flat$-structure $\pi$ s.t. every switching of every \gnet{} of $\pi$ is an acyclic graph. An \emph{untyped net}, \emph{net} for short, is a $\flat$-net with no structural conclusion.
\end{definition}

In order to make visual the correspondence between a conclusion of a $!$-link and the associated conclusion of the box of that $!$-link, we represent the two edges by a single line crossing the border of the box (for example see Fig.~\ref{fig:cut}). 

Notice that with every structural edge $b$ of a net is associated exactly one $\flat$-node (above it) and one $\wn$-node (below it): we will refer to these nodes as \emph{the $\flat$-node/$?$-node associated with $b$}. Observe that the $\flat$-node and the $?$-node associated with a given edge might have a different depth. 

Concerning the presence of empty nets, notice that the empty net does exist and it has no conclusion. Its presence is required by the cut elimination procedure (Definition~\ref{def:Scutreduction}): the elimination of a cut between a $1$-link and a $\bot$-link yields the empty graph, and similarly for a cut between a $!$-link with no auxiliary conclusion and a $0$-ary $?$-link.
On the other hand, notice also that with a $!$-link $o$ of a net, it is \emph{never} possible to associate the empty net: $o$ has at least one conclusion and this has also to be the case for the net associated with $o$.

\begin{definition}[Size of nets]\label{def:sizeps}
The \emph{size $\sizenet{\alpha}$ of a \gnet{} $\alpha$} is the number of logical edges of $\alpha$. The \emph{size $\sizenet{\pi}$ of a $\flat$-structure $\pi$} is defined by induction on the depth of $\pi$, as follows: $\sizenet{\pi} = \sizenet{\ground{\pi}} + \sum_{o \in !(\ground{\pi})} \sizenet{\pi^o}$.
\end{definition}

\begin{wrapfigure}{l}{3.9cm}
\centering
\vspace{-5pt}
\subfigure{\scalebox{\scalefact}{\input{clash1.pstex_t}}}\quad
\subfigure{\scalebox{\scalefact}{\input{clash2.pstex_t}}}\hspace{-10pt}
\caption{Two clashes}\label{fig:clashes}
\end{wrapfigure}

Since we are in an untyped framework, nets may contain ``pathological'' cuts which are not reducible. They are called \emph{clashes} and their presence is in contrast with what happens in $\lambda$-calculus, where the simpler grammar of terms avoids clashes also in an untyped framework.

\begin{definition}[Clash]\label{def:clash}
The two edges premises of a cut-link are \emph{dual} when:
\begin{itemize}
\item they are conclusions of resp.\ a $\otimes$-node and of a $\lpar$-node, or
\item they are conclusions of resp.\ a $1$-node and of a $\bot$-node, or
\item they are conclusions of resp.\ a $!$-node and of a $?$-node.
\end{itemize}
A cut-link is a \emph{clash}, when the premises of the cut-node are not dual edges and none of the two is the conclusion of an $ax$-link.
\end{definition}

\begin{figure}
$\xymatrix@C=5pt@R=5pt{
\mathbf{(ax)}: &\scalebox{\scalefact}{\input{cutax1.pstex_t}} & \onecut & \scalebox{\scalefact}{\input{cutax2.pstex_t}}\\
\mathbf{(\ltens/\lpar)}: &\scalebox{\scalefact}{\input{cutmul1.pstex_t}} &\onecut & \scalebox{\scalefact}{\input{cutmul2.pstex_t}}\\
\mathbf{(1/\bot)}: &\scalebox{\scalefact}{\input{cut1bot.pstex_t}} &\onecut & \text{empty graph}}$
\caption{Cut elimination for nets (multiplicatives).}\label{fig:cut:multiplicatives}
\end{figure}

\begin{definition}[Cut elimination, Figures~\ref{fig:cut:multiplicatives} and~\ref{fig:cut}]\label{def:Scutreduction}
The cut elimination procedure (\cite{CarvPagTdF10}) actually comes from~\cite{hilbert}. 
To eliminate a cut $t$ in a net $\pi$ means in general to transform $\pi$ into a net\footnote{The fact that $t(\pi)$ is indeed a net should be checked, see for example~\cite{phdregnier}.} $t(\pi)$ by substituting a specific subgraph $\beta$ of $\pi$ with a graph $\beta'$ having the same pending edges (i.e. edges with no target or no source) as $\beta$. The graphs $\beta$ and $\beta'$ depend on the cut $t$ and are described in Figures~\ref{fig:cut:multiplicatives} and~\ref{fig:cut}. We also refer to $t(\pi)$ as a one step reduct of $\pi$, and to the transformations associated with the different types of cut-link as the \emph{reduction steps}. 

When one of the two premises of $t$ is a $?$-link with no premises and the other one is a $!$-link, we say that $t$ is \emph{erasing} and the reduction step is an erasing step. We write $\pi \onecut \pi'$, when $\pi'$ is the result of one reduction step and $\pi \oneerasing \pi'$ (resp.\ $\pi \onenonerasing \pi'$) in case the reduction step is (resp.\ is not) erasing. 

A cut-link $t$ of $\pi$ is \emph{stratified non-erasing}, when it is non-erasing and, for every non erasing cut (except clashes) $t'$ of $\pi$, we have $\textrm{depth}(t) \leq \textrm{depth}(t')$.  A stratified non-erasing reduction step is a step reducing a stratified non-erasing cut; we write $\pi \onestratnonerasing \pi'$ when $\pi'$ is the result of one stratified non-erasing reduction step.\\
A cut-link $t$ of $\pi$ is \emph{antistratified erasing}, when every cut-link of $\pi$ is erasing and for every cut-link $t'$ of $\pi$ we have $\textrm{depth}(t')\leq\textrm{depth}(t)$. An antistratified erasing reduction step is a step reducing an antistratified erasing cut; we write $\pi \oneantistraterasing \pi'$ when $\pi'$ is the result of one antistratified erasing reduction step.

The reflexive and transitive closure of the rewriting rules previously defined is denoted by adding a $\ast$: for example $\stratnonerasingred$ is the reflexive and transitive closure of $\onestratnonerasing$. A net $\pi$ is \emph{normalizable} if there exists a cut-free net $\pi_0$ such that $\pi \cutred \pi_0$. We denote by $\textbf{WN}$ the set of normalizable nets.\\
A \emph{reduction sequence} $R$ from $\pi$ to $\pi'$ is a sequence (possibly empty in case $\pi=\pi'$) of reduction steps $\pi \onecut \pi_1 \onecut \dots \onecut \pi_n=\pi'$. The integer $n$ is the \emph{length} of the reduction sequence. A reduction sequence $R$ is a \emph{stratified non-erasing reduction} (resp.\ an \emph{antistratified erasing reduction}) when every step of $R$ is stratified non-erasing (resp.\ antistratified erasing). A net is $\nonerasing$-normal when it contains only erasing cut-links. We denote by $\textbf{WN}^{\nonerasing}$ the set of nets $\pi$ such that there exists a $\nonerasing$-reduction sequence from $\pi$ to some $\nonerasing$-normal net.

We denote by $\textbf{SN}$ (resp.\ $\textbf{SN}^{\nonerasing}$, $\textbf{SN}^{\stratnonerasing}$) the set of nets $\pi$ such that
every reduction sequence (resp.\ $\nonerasing$-reduction sequence, $\stratnonerasing$-reduction sequence) from $\pi$ is finite and none of the reducts (resp.\ $\nonerasing$-reducts, $\stratnonerasing$-reducts) of $\pi$ contains a clash. The nets of $\textbf{SN}$ are also called \emph{strongly normalizable}.

For any net $\pi$, we set\footnote{We use here (and we will use in the sequel) K\"onig's lemma applied to countable graphs, since all the reduction relations we consider in the paper are finitely branching.} 
\begin{itemize}
\item $\nonerasingstrong{\pi} = \left\lbrace \begin{array}{ll} 
\max \{ \length{R} ; R \textrm{ is a $\nonerasing$-reduction sequence from $\pi$} \} & \textrm{if $\pi \in \textbf{SN}^{\nonerasing}$;}\\
\infty & \textrm{otherwise;} \end{array} \right.$
\item and $\strong{\pi} = \left\lbrace \begin{array}{ll} 
\max \{ \length{R} ; R \textrm{ is a reduction sequence from $\pi$} \} & \textrm{if $\pi \in \textbf{SN}$;}\\
\infty & \textrm{otherwise.} \end{array} \right.$
\end{itemize}
\end{definition}

\begin{figure}
$\xymatrix@C=5pt@R=5pt{
\mathbf{(!/?)}: &\scalebox{\scalefact}{\input{cutbangwhy1.pstex_t}} & \onecut & \scalebox{\scalefact}{\input{cutbangwhy3.pstex_t}} 
}$
\caption{Cut elimination for nets. In the $(!/?)$ case what happens is that the $!$-link $o$ dispatches $k$ copies of $\pi^o$ ($k\geq 0$ being the arity of the $?$-node $w$ premise of the cut) inside the $!$-boxes (if any) containing the $\flat$-nodes associated with the premises of $w$; notice also that the reduction duplicates $k$ times the premises of $?$-nodes which are associated with the auxiliary conclusions of $o$.}\label{fig:cut}
\hrulefill
\end{figure}

\begin{remark}\label{rem:ClashesSemantics}
Notice that the presence of clashes induces a slight difference between the definition of  ``normalizable net'' and that of  ``strongly normalizable net'': a normalizable net $\pi$ (so as its reducts) might contain a clash, which is not the case of a strongly normalizable net (nor of its reducts). This is consistent with the basic intuition behind these two notions: from a normalizable net one should be able (by means of ``correct'' computations) to reach a normal form, while from a strongly normalizable net one should be able by reducing at any time any cut to reach a normal form, so that such nets can never contain clashes.

In a pure rewriting approach, one could consider a different notion of weakly and strongly normalizable net: in~\cite{SNLL10} normal nets can contain clashes (see Subsection 2.4 p.420 of~\cite{SNLL10}). This cannot be accepted here (and was already excluded in~\cite{CarvPagTdF10} for the same reasons), since a clash in a net immediately yields an empty interpretation of the net (see the next Section~\ref{sect:experiments}), from which no information can be extracted, and certainly not the number of steps leading to a normal form.
\end{remark}

\begin{definition}[Ancestor, residue]\label{def:ancestors}
Let $\pi \onecut \pi'$. When an edge $d$ (resp. a node $l$) of $\pi'$ comes from a (unique) edge $\overleftarrow{d}$ (resp. node $\overleftarrow{l}$) of $\pi$, we say that $\overleftarrow{d}$ (resp. $\overleftarrow{l}$) is the \emph{ancestor} of $d$ (resp. $l$) in $\pi$ and that $d$ (resp. $l$) is a \emph{residue} of $\overleftarrow{d}$ (resp. $\overleftarrow{l}$) in $\pi'$. If this is not the case, then $d$ (resp. $l$) has no ancestor in $\pi$, and we say it is a \emph{created} edge (resp. node). We indicate, for every type of cut elimination step of Fig.~\ref{fig:cut}, which edges (resp. links) are created in $\pi'$ (meaning that the other edges/nodes of $\pi'$ are residues of some $\pi$'s edge/node). We use the notations of Figures~\ref{fig:cut:multiplicatives} and~\ref{fig:cut}:
\begin{minilist}
\item
$(ax)$: 
there are no created edges, nor created nodes in $\pi'$. Remark that $a,b$ are erased in $\pi'$, so that we consider $c$ in $\pi'$ as the residue of $c$ in $\pi$;
\item
$(\otimes/\lpar)$: 
there are no created edges, while the two new cut-links between the two left (resp.\ right) premises of the $\lpar$- and $\otimes$-links are created nodes;
\item
$(\lone/\bot)$: 
there are no created edges, nor created nodes in $\pi'$;
\item
$(!/?)$:
every auxiliary conclusion added to the $!$-links containing one copy of $\pi^{o}$ is a created edge; every cut link between (a copy of) $\pi^{o}$'s main conclusion and $c_i$ is a created node.\footnote{Notice that every $!$-link of $\pi'$ which contains a copy of $\pi^o$ is considered a residue of the corresponding $!$-link of $\pi$, even though it has different auxiliary conclusions. Notice also that the edges/nodes in each copy of $\pi^{o}$ are considered residues of the corresponding edges/nodes in $\pi^o$.}
\end{minilist}
\end{definition}

\subsection{The non-erasing normalization and the stratified normalization}\label{subsect:netsReduction}

In order to prove our main qualitative result (Theorem~\ref{theorem:qualitativeSN}), we reduce strong normalization to non erasing strong normalization: this is Proposition~\ref{proposition:SN=SNnonerasing}. We actually prove a variant of a very similar result proven in~\cite{SNLL10}: the difference is related to the way one handles clashes (Remark~\ref{rem:ClashesSemantics}).\\
In order to measure by semantic means the exact length of the longest reduction sequence(s) starting from a given strongly normalizable net (Theorem~\ref{theorem:exactSN}), we show that there always exists such a sequence consisting first of non erasing stratified steps and then of erasing antistratified steps: this is Proposition~\ref{proposition:nonerasing-erasing}. 

\bigskip

The first step is rather standard in spirit: one proves that erasing steps can always be ``postponed'' (Proposition~\ref{proposition:postponingerasing}).

\begin{lemma}\label{lemma:postponing:erasing}
Assume that $\pi \oneerasing \pi_1$ and $\pi_1 \onenonerasing \pi''$. Then there exist $\pi'$ such that $\pi \onenonerasing \pi'$ and a reduction sequence $\pi' \cutred \pi''$:
\begin{diagram}
\pi & \rTo^{\erasing} & \pi_1 \\
\dDashTo^{\nonerasing} & & \dTo_{\nonerasing}^{} \\
\pi' & \rDashtrTo_{} & \pi''
\end{diagram}
\end{lemma}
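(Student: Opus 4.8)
The plan is to prove the statement by tracking the two redexes across the two reductions and then performing them in the opposite order. Write $R_{e}$ for the erasing redex of $\pi$ reduced in $\pi \oneerasing \pi_{1}$: it is a cut between a $!$-link $o$ and a $?$-link $w$ with no premise, and reducing it erases the box $\pi^{o}$ together with $o$, $w$ and the cut itself, and removes from each $?$-node associated with an auxiliary conclusion of $o$ the premise coming from $o$. Write $R_{n}$ for the non-erasing redex of $\pi_{1}$ reduced in $\pi_{1} \onenonerasing \pi''$. The overall strategy is to reduce, in $\pi$, the ancestor of $R_{n}$ first, and then to clean up the (erasing) residues of $R_{e}$.

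The first thing I would record is that an erasing step creates no cut-link and can only \emph{decrease} the arity of the $?$-nodes it touches. Consequently $R_{n}$ has a well-defined ancestor $t$ in $\pi$ (by the created/residue analysis of Definition~\ref{def:ancestors}, the cut-link of $R_{n}$ is a residue), which is again a cut-link of the same type with dual premises, hence reducible. Moreover, since arities only decrease along $\pi \oneerasing \pi_{1}$, if $R_{n}$ is a $(!/?)$ cut with $?$-node of arity $\geq 1$ in $\pi_{1}$, then its ancestor $?$-node already has arity $\geq 1$ in $\pi$; and if $R_{n}$ is of any other type then so is $t$. In both cases $t$ is a \emph{non erasing} cut of $\pi$, and I would define $\pi \onenonerasing \pi'$ to be its reduction, providing the left vertical arrow of the diagram.

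It remains to produce the bottom reduction $\pi' \cutred \pi''$, and here I would argue that $R_{e}$ survives the reduction of $t$. Since $t$ is non erasing, reducing it deletes no box; moreover reducing $t$ can neither add a premise to the weakening $w$ nor destroy $o$ or $w$. Hence $R_{e}$ has at least one residue in $\pi'$, and every such residue is still a cut between a $!$-link and a $0$-ary $?$-node, i.e.\ an erasing cut. I would then check that reducing \emph{all} residues of $R_{e}$ in $\pi'$ yields exactly $\pi''$, so that in fact $\pi' \erasingred \pi''$. This is a comparison of the two ways of performing $R_{e}$ and $t$ on $\pi$, organised according to how the two redexes overlap; in all the disjoint cases (in particular whenever $t$ is an $ax$, multiplicative or $1/\bot$ cut, noting that $t$ cannot lie inside $\pi^{o}$, for otherwise $R_{n}$ would not exist in $\pi_{1}$) the redexes simply commute and $\pi' \oneerasing \pi''$ in a single step.

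The main obstacle is precisely this last verification in the two genuinely interacting cases, both requiring $t$ to be a $(!/?)$ cut on a $!$-link $o_{t}\neq o$ with $?$-node $w_{t}\neq w$. First, $t$'s box $\pi^{o_{t}}$ may contain $R_{e}$: then reducing $t$ duplicates $R_{e}$ into $k$ copies ($k$ the arity of $w_{t}$), and one must use that cut elimination performed inside a box commutes with duplication of that box, so that reducing the $k$ erasing residues reconstructs $\pi''$. Second, an auxiliary conclusion of $o$ may be a premise of $w_{t}$: then reducing $t$ in $\pi$ dispatches into $\pi^{o}$ one copy of $\pi^{o_{t}}$ more than the reduction of $R_{n}$ does in $\pi_{1}$ (where that premise of $w_{t}$ has already disappeared), but this extra copy sits inside $\pi^{o}$ and is wiped out when the residue of $R_{e}$ is reduced, again recovering $\pi''$. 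These two configurations are mutually exclusive by a depth argument, and once the residue bookkeeping of Definition~\ref{def:ancestors} is in place the remaining calculations are routine; this is where essentially all the work lies.
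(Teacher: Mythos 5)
Your proposal cannot be compared line by line with the paper's own proof, because the paper does not actually prove this lemma: it defers entirely to Lemma 4.4 p.~431 of~\cite{SNLL10}. What you have written is a self-contained reconstruction of the standard syntactic argument (identify the ancestor $t$ in $\pi$ of the non-erasing redex, observe it is again a reducible non-erasing cut because an erasing step creates no cut-links and only decreases $?$-arities, fire $t$ first, then fire the residues of the erasing redex $R_e$, and verify commutation by cases on how the two redexes overlap). This is essentially the argument of the cited reference, and your version even establishes the stronger conclusion $\pi' \erasingred \pi''$, i.e.\ that the bottom side of the square consists of erasing steps only, which the statement does not require but which is what actually holds and is useful for the companion Proposition~\ref{proposition:SN=SNnonerasing}.

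One refinement is needed for full rigor, in the enumeration of the interacting configurations. Your case (b) is stated as ``an auxiliary conclusion of $o$ is a premise of $w_t$'', which forces $\depth{R_e} = \depth{t}$. But a structural path may cross several box borders (the $\flat$-node and the $?$-node associated with an edge can have different depths), so there is a hereditary variant: a premise of $w_t$ may be an auxiliary conclusion of some $!$-link $o' \neq o$ whose box contains $\pi^{o}$, the path then continuing through an auxiliary conclusion of $o$ to a $\flat$-node inside $\pi^{o}$. Here $\depth{R_e} > \depth{t}$ yet $R_e$ is not inside $\pi^{o_t}$, so this configuration is neither your (a) nor your (b), and it is not ``disjoint'' either: erasing $R_e$ first deletes that premise of $w_t$, while firing $t$ first dispatches a copy of $\pi^{o_t}$ inside $\pi^{o}$. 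It escapes your depth-based dichotomy, although it is handled verbatim by your case (b) argument (the extra copy is wiped out by the unique residue of $R_e$, and one still gets a single erasing step $\pi' \oneerasing \pi''$). Relatedly, the mutual exclusivity of (a) and of this generalized (b) is not a pure depth argument: it follows from switching acyclicity, since a premise of $w_t$ whose structural path enters $\pi^{o_t}$ would have to be an auxiliary conclusion of $o_t$, yielding a cycle through the cut $t$, the $!$-link $o_t$ and the $?$-link $w_t$, which is impossible in a net. With these two adjustments your case analysis is exhaustive and the proof goes through.
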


\begin{proof}
See Lemma 4.4 p. 431 of~\cite{SNLL10}.
\end{proof}

\begin{proposition}[postponing erasing steps]\label{proposition:postponingerasing}
For any net $\pi_0$ such that there is no infinite reduction sequence from $\pi_0$, for any finite reduction sequence $R$ from $\pi_0$ to $\pi'$, there exist a $\nonerasing$-reduction sequence $R'$ from $\pi_0$ to some net $\pi$ and an $\erasing$-reduction sequence $R_0$ from $\pi$ to $\pi'$ such that $\length{R} \leq \length{R'} + \length{R_0}$.
\end{proposition}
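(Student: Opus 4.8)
The plan is to argue by induction on the maximal length $m(\pi_0)$ of a reduction sequence starting from $\pi_0$. Since there is no infinite reduction sequence from $\pi_0$ and $\onecut$ is finitely branching, König's lemma makes $m(\pi_0)$ a well-defined natural number (and the same holds for every reduct of $\pi_0$), while $m(\sigma)<m(\pi_0)$ whenever $\pi_0\onecut\sigma$; this strict decrease is what will license every appeal to the induction hypothesis. If $R$ is empty (in particular if $m(\pi_0)=0$), I take $R'$ and $R_0$ empty. If the first step of $R$ is non-erasing, I write $R$ as $\pi_0\onenonerasing\pi_1$ followed by $R_1\colon\pi_1\cutred\pi'$; since $m(\pi_1)<m(\pi_0)$, the induction hypothesis gives a non-erasing $R_1'\colon\pi_1\nonerasingred\pi$ and an erasing $R_0\colon\pi\erasingred\pi'$ with $\length{R_1}\le\length{R_1'}+\length{R_0}$. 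Prepending the initial step to $R_1'$ yields a non-erasing $R'\colon\pi_0\nonerasingred\pi$ with $\length{R'}=1+\length{R_1'}$, and then $\length{R'}+\length{R_0}\ge 1+\length{R_1}=\length{R}$.

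The only real work is when the first step is erasing, say $\pi_0\oneerasing\pi_1$ followed by $R_1\colon\pi_1\cutred\pi'$. Here I would first apply the induction hypothesis to $\pi_1$ (legitimate since $m(\pi_1)<m(\pi_0)$), replacing $R_1$ by a non-erasing segment $S\colon\pi_1\nonerasingred\tau$ followed by an erasing segment $T\colon\tau\erasingred\pi'$ with $\length{R_1}\le\length{S}+\length{T}$. The whole reduction now has the shape $\pi_0\oneerasing\pi_1$, then $S$ (non-erasing), then $T$ (erasing): a single erasing step in front of a non-erasing block. If $S$ is empty, then $\tau=\pi_1$ and the concatenation of the initial erasing step with $T$ is already an erasing reduction $\pi_0\erasingred\pi'$; taking $R'$ empty and $R_0$ this concatenation gives $\length{R_0}=1+\length{T}\ge 1+\length{R_1}=\length{R}$, since $\length{S}=0$. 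If $S$ is non-empty, I write it as $\pi_1\onenonerasing\rho$ followed by $S'\colon\rho\nonerasingred\tau$ and apply Lemma~\ref{lemma:postponing:erasing} to the pattern $\pi_0\oneerasing\pi_1\onenonerasing\rho$, producing $\sigma$ with $\pi_0\onenonerasing\sigma$ and a reduction $\sigma\cutred\rho$ of some length $\ell$. As $m(\sigma)<m(\pi_0)$, I apply the induction hypothesis once more, to $\sigma$ and the reduction $\sigma\cutred\rho$ followed by $S'$ and $T$ (a reduction from $\sigma$ to $\pi'$), obtaining a non-erasing $V\colon\sigma\nonerasingred\pi$ and an erasing $W\colon\pi\erasingred\pi'$ with $\ell+(\length{S}-1)+\length{T}\le\length{V}+\length{W}$. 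Setting $R'=(\pi_0\onenonerasing\sigma)\cdot V$ and $R_0=W$ gives a decomposition of the required shape with the endpoint $\pi'$ preserved.

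It remains to verify the length inequality in this last case, and this is where the one genuinely delicate point lies. I compute $\length{R'}+\length{R_0}=1+\length{V}+\length{W}\ge 1+\ell+\length{S}-1+\length{T}=\ell+\length{S}+\length{T}$, and since $\length{S}+\length{T}\ge\length{R_1}=\length{R}-1$, the desired bound $\length{R'}+\length{R_0}\ge\length{R}$ holds precisely when $\ell\ge 1$, i.e.\ when the reduction $\sigma\cutred\rho$ supplied by Lemma~\ref{lemma:postponing:erasing} is non-empty. The main obstacle is therefore to rule out $\sigma=\rho$: otherwise the swap could shorten the sequence by one step and break the count. I expect to settle this by a residue argument. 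The step $\pi_0\oneerasing\pi_1$ erases a $!$-box $B$ (cut against a $0$-ary $?$-link), so no residue of $B$ survives in $\pi_1$, nor in $\rho$. On the other hand $\pi_0\onenonerasing\sigma$ is a single non-erasing step, and no non-erasing step can erase a $!$-box: the $(!/?)$ step keeps $k\ge 1$ copies while the $(ax)$, $(\otimes/\lpar)$ and $(1/\bot)$ steps do not remove boxes. Hence $\sigma$ still contains at least one residue of $B$, whereas $\rho$ contains none, so $\sigma\ne\rho$ and $\ell\ge 1$. Since in every case the produced $R'$ is a concatenation of non-erasing steps, $R_0$ a concatenation of erasing steps, and the target $\pi'$ is kept fixed throughout, the construction delivers exactly the decomposition claimed by the proposition.
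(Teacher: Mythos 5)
Your overall strategy is sound and is essentially the paper's: induction on the maximal length of reduction sequences from $\pi_0$ (well defined by K\"onig's lemma), a case analysis on the first step of $R$, and Lemma~\ref{lemma:postponing:erasing} to commute a leading erasing step past a non-erasing one. Your bookkeeping (apply the induction hypothesis to standardize the tail, one application of the Lemma, then a second application of the induction hypothesis to $\sigma$) is a legitimate variant of the paper's, which instead applies the Lemma $r$ times to bring the first non-erasing step of $R$ to the front and then invokes the induction hypothesis once. You have also correctly located the one delicate point: the whole count collapses unless the sequence $\sigma\cutred\rho$ supplied by the Lemma is non-empty, i.e.\ $\ell\geq 1$. (To your credit, the paper's own proof needs exactly the same fact --- each of its $r$ applications of the Lemma must not shorten the sequence --- and passes over it in silence.)

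However, the argument you give for $\ell\geq 1$ has a genuine gap. ``Containing a residue of $B$'' is not an intrinsic property of a net: residues are defined relative to a particular reduction path (Definition~\ref{def:ancestors}), so the fact that $B$ has a residue in $\sigma$ along the path $\pi_0\onenonerasing\sigma$ while it has none in $\rho$ along the path $\pi_0\oneerasing\pi_1\onenonerasing\rho$ does not imply that the nets $\sigma$ and $\rho$ are distinct: two different paths can perfectly well reach the same net (the paper's Figure~\ref{fig:exanetnoWN} even exhibits a net that reduces to itself), so path-relative bookkeeping can never, by itself, establish an inequality of nets. The problem is compounded by your use of Lemma~\ref{lemma:postponing:erasing} as a black box: $\sigma$ is merely \emph{some} net with $\pi_0\onenonerasing\sigma$ and $\sigma\cutred\rho$, and your argument produces no intrinsic invariant (size, number of links, \dots) separating such a $\sigma$ from $\rho$. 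The standard repair is to prove (or cite) the Lemma in the stronger form that its proof actually yields, exactly parallel to the paper's own Lemma~\ref{lemma:confluence:nonerasing-stratnonerasing}: the non-erasing step of the conclusion reduces the (unique) ancestor in $\pi_0$ of the cut reduced in $\pi_1\onenonerasing\rho$, and the closing sequence $\sigma\cutred\rho$ reduces precisely the residues of the erased cut --- of which there are $n\geq 1$, since a non-erasing step leaves at least one residue of every other cut, and these residues are still erasing cuts. In particular the closing sequence is a non-empty sequence of \emph{erasing} steps, so $\ell\geq 1$ holds by construction (and one even recovers $\sigma\neq\rho$, since erasing steps strictly decrease $\sizenet{\cdot}$). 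With that strengthening, the rest of your proof goes through verbatim.
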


\begin{proof}
By induction on $\max \{ \length{R} ; R \textrm{ is a reduction sequence from $\pi_0$} \}$. Let $R$ be a finite reduction sequence $\pi_0 \onecut \pi_1 \onecut \ldots \pi_{n-1} \onecut \pi_n = \pi'$. If $R$ has no $\nonerasing$-reduction steps, then we set $\pi = \pi_0$ and $R_0 = R$. Otherwise, we set $k = \min \{ i \in \integers ; \pi_i \oneerasing \pi_{i+1} \}$: if $k > 0$, then we apply the induction hypothesis to $\pi_1$; if $k = 0$, then we set $r = \min \{ j \in \integers ; \pi_{j} \onenonerasing \pi_{j+1} \}$; we apply $r$ times Lemma~\ref{lemma:postponing:erasing}, we thus obtain a reduction sequence $R_1$ from $\pi_0$ to $\pi_{r+1}$ 
in which the first reduction step $\pi_0 \onecut \pi'_1$ is non-erasing. We can thus consider the reduction sequence $R_1$ followed by the reduction sequence $\pi_{r+1} \onecut \pi_{r+2} \onecut \ldots \pi_{n-1} \onecut \pi_n$ and apply the induction hypothesis to $\pi'_1$.
\end{proof}

To prove $\textbf{SN} = \textbf{SN}^{\nonerasing}$, we apply the techniques of~\cite{SNLL10}, taking care of clashes (Fact~\ref{fact:erasing steps do not produce clashs}).

\begin{fact}\label{fact:erasing steps do not produce clashs}
If $\pi \erasingred \pi'$ and $\pi'$ contains some clash, then the net $\pi$ contains some clash too.
\end{fact}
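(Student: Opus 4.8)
The plan is to reduce the statement to its one-step version and then iterate. Concretely, I would first establish the following claim: if $\pi \oneerasing \pi'$ is a \emph{single} erasing step and $\pi'$ contains a clash, then $\pi$ contains a clash. Writing the hypothesis $\pi \erasingred \pi'$ as a sequence $\pi = \rho_0 \oneerasing \rho_1 \oneerasing \cdots \oneerasing \rho_m = \pi'$, a straightforward induction on the length $m$ then propagates the clash backwards from $\rho_m$ to $\rho_0 = \pi$, which is exactly what we want.

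For the one-step claim, the key observation is that an \emph{erasing} step creates no cut-links. Indeed, an erasing step is a $(!/?)$ step in which the $?$-premise $w$ of the cut has arity $k = 0$; consulting the $(!/?)$ clause of Definition~\ref{def:ancestors}, the only created nodes are the cut-links between (copies of) $\pi^o$'s main conclusion and the edges $c_i$, and there are no such copies when $k = 0$. Hence every cut-link of $\pi'$ is a residue of a unique cut-link of $\pi$.

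Now let $t'$ be a clash of $\pi'$, let $t$ be the erasing cut reduced in $\pi$ (sitting between the $!$-link $o$ and the $0$-ary $?$-link $w$), and let $t''$ be the ancestor of $t'$ in $\pi$; it exists by the previous paragraph, and $t'' \neq t$ because $t$ is reduced, hence does not survive, while $t'$ does. I would then argue that $t''$ is itself a clash. The only nodes deleted by the step are $o$, $w$, and the nodes of the erased box $\pi^o$, and the only surviving nodes whose incident edges change are the $?$-nodes fed by the auxiliary conclusions of $o$, which merely lose a structural premise and remain $?$-nodes. The two premises of $t'$ are residues of the two premises of $t''$; since $t''$ survives it does not lie inside $\pi^o$, and neither do the source-nodes of its premises (which share its depth). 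Moreover, since the premises of a cut are logical edges, neither premise of $t''$ can be an auxiliary (structural) conclusion of $o$; and neither can be the main conclusion of $o$ nor the conclusion of $w$, for these are the two premises of $t \neq t''$ and each edge is the premise of at most one cut. Consequently the source-node of each premise of $t''$ survives with its kind unchanged. Since duality of a pair of cut premises depends only on the kinds of their source-nodes, and ``being the conclusion of an $ax$-link'' likewise, both the non-duality of the premises and the fact that neither is an $ax$-conclusion transfer from $t'$ back to $t''$. Hence $t''$ is a clash of $\pi$.

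The main obstacle I expect is purely bookkeeping: verifying that the ancestor $t''$ is genuinely distinct from $t$ and that its premises are not among the erased edges, so that their source-nodes really do survive with unchanged kind. Once this is pinned down, the preservation of duality and of $ax$-conclusion status is immediate, and the induction on the length of the erasing sequence closes the proof.
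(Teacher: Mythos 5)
Your proposal is correct and takes essentially the same route as the paper's own (two-line) proof: reduce to a single erasing step, observe that such a step creates no nodes so every cut of $\pi'$ has an ancestor in $\pi$, and check that the ancestor of a clash is itself a clash. You simply spell out in detail the bookkeeping (which nodes are deleted, why the sources of the ancestor cut's premises survive with unchanged kind) that the paper leaves implicit in the sentence ``the ancestor of a clash is always a clash too.''
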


\begin{proof}
If $\pi \oneerasing \pi'$, then every edge of $\pi'$ has an ancestor in $\pi$. Now, the ancestor of a clash is always a clash too.
\end{proof}

\begin{proposition}\label{proposition:SN=SNnonerasing}
We have $\textbf{SN} = \textbf{SN}^{\nonerasing}$.
\end{proposition}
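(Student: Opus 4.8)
The plan is to prove the two inclusions separately. The inclusion $\textbf{SN} \subseteq \textbf{SN}^{\nonerasing}$ is immediate: a $\nonerasing$-reduction sequence is in particular a reduction sequence, and a $\nonerasing$-reduct is in particular a reduct, so the two defining conditions of $\textbf{SN}$ (finiteness of every reduction sequence, absence of clashes in every reduct) specialize directly to the corresponding conditions of $\textbf{SN}^{\nonerasing}$. The whole content is therefore the converse inclusion $\textbf{SN}^{\nonerasing} \subseteq \textbf{SN}$, which I would split into a \emph{termination} part (no infinite reduction sequence starts from $\pi$) and a \emph{clash-freeness} part (no reduct of $\pi$ contains a clash).

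For termination, fix $\pi \in \textbf{SN}^{\nonerasing}$ and argue by contradiction from an infinite reduction sequence $\pi = \sigma_0 \onecut \sigma_1 \onecut \cdots$. First I would observe that every erasing step strictly decreases the size $\sizenet{\cdot}$ (an erasing step deletes a box, which contains at least its logical main conclusion), so no infinite reduction sequence can end with a suffix consisting only of erasing steps; hence the infinite sequence must contain infinitely many $\nonerasing$-steps. The core of the argument is a ``push-to-front'' step, proved by induction on the length of the erasing chain using Lemma~\ref{lemma:postponing:erasing}: whenever $\pi \erasingred \rho \onenonerasing \rho'$, there is some $\pi'$ with $\pi \onenonerasing \pi'$ and $\pi' \cutred \rho'$. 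The general $\cutred$ tail of Lemma~\ref{lemma:postponing:erasing} is harmless here, precisely because the erasing chain being pushed through is finite (size decreases), so no termination hypothesis is needed. Applying this to the finite erasing prefix preceding the first $\nonerasing$-step of the infinite sequence yields $\pi \onenonerasing \pi'$ such that $\pi'$ again carries an infinite reduction sequence with infinitely many $\nonerasing$-steps. Iterating, I build an infinite $\nonerasing$-reduction sequence from $\pi$, contradicting $\pi \in \textbf{SN}^{\nonerasing}$.

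Once termination is established, the clash-freeness part is short. Given any finite reduction $\pi \cutred \pi'$ with $\pi'$ containing a clash, I may now invoke Proposition~\ref{proposition:postponingerasing} (its hypothesis ``no infinite reduction sequence from $\pi$'' is exactly what the previous paragraph provides) to factor the reduction as a $\nonerasing$-reduction $\pi \nonerasingred \bar\pi$ followed by an $\erasing$-reduction $\bar\pi \erasingred \pi'$. Since $\pi'$ contains a clash, Fact~\ref{fact:erasing steps do not produce clashs} forces $\bar\pi$ to contain a clash as well; but $\bar\pi$ is a $\nonerasing$-reduct of $\pi$, contradicting $\pi \in \textbf{SN}^{\nonerasing}$. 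Hence no reduct of $\pi$ contains a clash, and together with termination this gives $\pi \in \textbf{SN}$.

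I expect the main obstacle to be the termination part, and specifically making the iterated push-to-front construction rigorous: one must maintain, along the extracted sequence of $\nonerasing$-steps, the invariant that the current net still admits an infinite reduction with infinitely many $\nonerasing$-steps, and one must check that pushing a single $\nonerasing$-step through a finite erasing chain never requires the termination hypothesis that Proposition~\ref{proposition:postponingerasing} would demand. Relying only on the local Lemma~\ref{lemma:postponing:erasing} (whose tail is a general reduction) is what keeps the termination argument free of any circular appeal to strong normalization, which is the delicate point; Proposition~\ref{proposition:postponingerasing} is then safely reused only afterwards, for the clash-freeness part, where termination is already available.
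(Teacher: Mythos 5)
Your proposal is correct and takes essentially the same route as the paper: the clash-freeness part coincides with the paper's first case (Proposition~\ref{proposition:postponingerasing} combined with Fact~\ref{fact:erasing steps do not produce clashs}, with the termination hypothesis available at that point), and your termination part is precisely the argument the paper delegates to Proposition 4.5 of~\cite{SNLL10}, namely the extraction of an infinite $\nonerasing$-reduction sequence by iterating Lemma~\ref{lemma:postponing:erasing}. The only differences are presentational: you prove the inclusion $\textbf{SN}^{\nonerasing} \subseteq \textbf{SN}$ directly rather than by contraposition, and you spell out the cited termination argument (strict size decrease under erasing steps, hence infinitely many $\nonerasing$-steps, then the push-to-front induction), whereas the paper leaves that half as a citation.
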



\begin{proof}
If $\pi \notin \textbf{SN}$, then we are in one the two following cases:
\begin{enumerate}
\item 
\begin{itemize}
\item there is no infinite reduction sequence from $\pi$
\item and there is some net $\pi'$ with some clash such that $\pi \cutred \pi'$,
\end{itemize}
\item or there exists an infinite reduction sequence from $\pi$.
\end{enumerate}

Assume that we are in the first case. Then, by Proposition~\ref{proposition:postponingerasing}, there exist a $\nonerasing$-reduction sequence $R$ from $\pi$ to $\pi_1$ and an $\erasing$-reduction sequence from some net $\pi_1$ to $\pi'$. Since $\pi'$ is a net containing some clash, by Fact~\ref{fact:erasing steps do not produce clashs}, the net $\pi_1$ contains some clash too, hence $\pi \notin \textbf{SN}^{\nonerasing}$.

Now, if we are in the second case, one can show that there exists an infinite $\nonerasing$-reduction sequence from $\pi$. This has been proven in~\cite{SNLL10} using Lemma~\ref{lemma:postponing:erasing}: see Proposition 4.5 p. 431 of~\cite{SNLL10}.
\end{proof}

We now turn to the proof of Proposition~\ref{proposition:nonerasing-erasing}, which essentially consists, given a strongly normalizing net $\pi$, in turning any reduction sequence starting from $\pi$ into a ``canonical'' reduction sequence: a $\stratnonerasing$-reduction sequence followed by an antistratified erasing reduction sequence. We show that this transformation never shortens the length of reduction sequences, which entails that among the longest reduction sequences starting from $\pi$, there always exists a canonical one. 
The first step is to prove that one can always reach a $\nonerasing$-normal net by means of a $\stratnonerasing$-reduction sequence of maximum length (Proposition~\ref{proposition:stratisworse}), the second step is to relate the number of cut-links of a (strongly normalizable) net to the length of canonical reduction sequences (Lemma~\ref{lemma:nonerasing-erasing}).

\begin{lemma}\label{lemma:confluence:nonerasing-stratnonerasing}
Assume that $\pi \onenonerasing \pi_1$ and $\pi \onestratnonerasing \pi'$ with $\pi' \not= \pi_1$. Then there exist $\pi''$ such that $\pi_1 \onestratnonerasing \pi''$ and a non-empty reduction sequence $\pi' \nonerasingred \pi''$:
\begin{diagram}
\pi & \rTo^{\nonerasing} & \pi_1 \\
\dTo^{\stratnonerasing} & & \dDashTo_{\stratnonerasing} \\
\pi' & \rDashtrTo_{\nonerasing}^{\geq 1} & \pi''
\end{diagram}
\end{lemma}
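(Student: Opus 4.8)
The plan is to establish this as a refined local-confluence (strong-diamond) property. Denote by $s$ the cut reduced in $\pi \onenonerasing \pi_1$ and by $t$ the cut reduced in $\pi \onestratnonerasing \pi'$. First I would record two facts. Since reducing one and the same cut would give $\pi_1 = \pi'$, the hypothesis $\pi' \neq \pi_1$ forces $s \neq t$; and since $s$ is non-erasing and, being reducible, is not a clash, the very definition of ``stratified non-erasing'' applied to $t$ yields $\depth{t} \leq \depth{s}$. This inequality is the engine of the argument: a box duplicated by $s$ has its content at depth $> \depth{s} \geq \depth{t}$, so the $!$-link carrying the box of $t$, which sits at depth $\depth{t}$, cannot lie inside it. Hence $s$ neither erases nor duplicates $t$, and $t$ has a \emph{unique} residue $t_1$ in $\pi_1$. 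I would then check that $t_1$ is again stratified non-erasing: the reduction of $s$ takes place at depth $\geq \depth{t}$, so it leaves the links above $t$ untouched and creates no cut of depth $< \depth{t}$; thus $t_1$ is a non-erasing cut of depth $\depth{t}$, still of minimal depth among the non-erasing non-clash cuts of $\pi_1$. Reducing $t_1$ is therefore a legitimate $\onestratnonerasing$ step, and it will be the single step on the left edge of the square.

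Next I would analyse how the redexes of $s$ and $t$ can interact. Since each edge is the premise of at most one cut-link, the two redexes are disjoint unless either (i) $s$ and $t$ cut the two conclusions of one common axiom link, or (ii) $t$ is a $(!/?)$ cut whose box contains $s$ (the symmetric possibility, $s$'s box containing $t$, being ruled out by $\depth{t} \leq \depth{s}$). In the disjoint case the square closes by the usual one-step diamond: the residue of $s$ in $\pi'$ is unique (as $s$ lies outside $t$'s box, so $t$ does not duplicate it), reducing it gives $\pi' \onenonerasing \pi''$, reducing $t_1$ gives $\pi_1 \onestratnonerasing \pi''$, and the two results coincide. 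Case (i) is exactly the configuration in which both conclusions of one axiom are cut, one by $s$ and one by $t$; but there reducing $s$ and reducing $t$ both short-circuit the axiom and produce the very same net, so $\pi_1 = \pi'$, contradicting the hypothesis. Hence this case does not occur, and this is precisely the point where the assumption $\pi' \neq \pi_1$ is used.

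The remaining, and genuinely interesting, case is (ii), when $t$ is a $(!/?)$ cut and $s$ lies inside its box, so $\depth{s} > \depth{t}$. Since $t$ is non-erasing, the $?$-link of $t$ has arity $k \geq 1$, and reducing $t$ dispatches $k$ copies of that box; consequently $\pi'$ contains exactly $k$ pairwise-disjoint residues $s_1,\dots,s_k$ of $s$, each non-erasing. Reducing them one after another yields a non-erasing reduction sequence $\pi' \nonerasingred \pi''$ of length $k \geq 1$, which is the non-empty bottom edge of the square. On the other side, reducing $s$ inside the box and \emph{then} reducing $t_1$ duplicates the already-reduced box $k$ times (the arity $k$ is unaffected by the in-box reduction). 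The heart of the proof is to verify that these two orders produce the same net, i.e.\ that reduction strictly inside a box commutes with the exponential duplication performed by the enclosing $(!/?)$ cut, the $k$ copies of $s$ reduced in $\pi'$ matching the single occurrence of $s$ reduced before duplication in $\pi_1$. This commutation, together with the bookkeeping of residues across the two reduction orders, is the main technical obstacle; everything else is a routine inspection of the redexes. Granting it, we obtain $\pi_1 \onestratnonerasing \pi''$ and a non-empty $\pi' \nonerasingred \pi''$, which is exactly the square to be proved.
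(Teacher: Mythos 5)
Your proof is correct and follows essentially the same route as the paper's: the paper's own argument is exactly the residue analysis you give --- the stratified cut $t$ has a unique residue in $\pi_1$ (a non-erasing step, necessarily at depth $\geq \depth{t}$, can neither erase nor duplicate it), the non-erasing cut $s$ has $n \geq 1$ residues in $\pi'$ (since the stratified step is non-erasing but may duplicate $s$), and the square closes by reducing these residues. Your additional bookkeeping --- the axiom-overlap case showing where the full hypothesis $\pi' \neq \pi_1$ (and not merely $s \neq t$) is needed, the check that the residue of $t$ is still stratified non-erasing, and the box-duplication commutation, which you grant and the paper likewise asserts without proof --- only makes explicit what the paper leaves implicit.
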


\begin{proof}
Let $x$ (resp.\ $y$) be the cut-link reduced by the step $\pi \onenonerasing \pi_1$ (resp.\ $\pi \onestratnonerasing \pi'$): we know by hypothesis that $x\neq y$. Since $x$ is non erasing and $y$ is stratified, there exists a unique residue $y^{1}$ of $y$ in $\pi_{1}$. Since $y$ is non erasing and $x$ needs not being stratified, there exist $n\geq 1$ residues $x'_{1},\ldots,x'_{n}$ of $x$ in $\pi'$. The net $\pi''$ can be obtained both by reducing $y^{1}$ in $\pi_{1}$ and by reducing $x'_{1},\ldots,x'_{n}$ in $\pi'$.
\end{proof}

In the sequel, we use the (obvious) fact that whenever there exists a non erasing cut-link in a net, there also exists a \emph{stratified} non erasing cut-link in that same net.

\begin{proposition}\label{proposition:stratisworse}
For any $\pi_0 \in \textbf{SN}^{\nonerasing}$, for any $\nonerasing$-reduction sequence $R'''$ from $\pi_0$ to a $\nonerasing$-normal form $\pi$, there exists a $\stratnonerasing$-reduction sequence $R_1$ from $\pi_0$ to $\pi$ such that $\length{R'''} \leq \length{R_1}$.
\end{proposition}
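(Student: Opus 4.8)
The plan is to prove Proposition~\ref{proposition:stratisworse} by well-founded induction on the integer $\nonerasingstrong{\pi_0}$, which is finite because $\pi_0 \in \textbf{SN}^{\nonerasing}$. The only genuinely new ingredient is a \emph{standardization} step that commutes a stratified non-erasing reduction to the front of an arbitrary non-erasing reduction without losing length: the local commutation is exactly Lemma~\ref{lemma:confluence:nonerasing-stratnonerasing}, and the global statement is obtained by iterating it. Throughout I would use the observation recalled just before the statement, namely that as soon as a net contains a non-erasing cut it also contains a stratified non-erasing one.

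First I would isolate the following sub-claim, which is the heart of the argument and which needs no strong normalization hypothesis: if $R'''$ is a non-erasing reduction $\pi_0 \onenonerasing \cdots \onenonerasing \pi$ of length $m \geq 1$ reaching a $\nonerasing$-normal net $\pi$, and if $\pi_0 \onestratnonerasing \pi'$ is \emph{any} stratified non-erasing step, then there is a non-erasing reduction from $\pi'$ to $\pi$ of length at least $m-1$. I would prove this by induction on $m$. Writing $\pi_0 \onenonerasing \pi_1$ for the first step of $R'''$: if $\pi' = \pi_1$, the tail of $R'''$ already is such a reduction (of length exactly $m-1$); otherwise Lemma~\ref{lemma:confluence:nonerasing-stratnonerasing} provides $\pi''$ with $\pi_1 \onestratnonerasing \pi''$ and a reduction $\pi' \nonerasingred \pi''$ of length $\geq 1$. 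For $m \geq 2$, applying the induction hypothesis to the tail $\pi_1 \onenonerasing \cdots \onenonerasing \pi$ (of length $m-1$) together with the stratified step $\pi_1 \onestratnonerasing \pi''$ yields a non-erasing reduction $\pi'' \nonerasingred \pi$ of length $\geq m-2$; prefixing it with $\pi' \nonerasingred \pi''$ gives total length $\geq m-1$, as required.

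With the sub-claim in hand the proposition follows quickly. If $R'''$ is empty then $\pi = \pi_0$ and $R_1$ is empty. Otherwise $\pi_0$ has a non-erasing cut, hence a stratified one, so there is a step $\pi_0 \onestratnonerasing \pi'$; the sub-claim gives a non-erasing reduction from $\pi'$ to the $\nonerasing$-normal net $\pi$ of length $\geq \length{R'''} - 1$. Since $\pi' \in \textbf{SN}^{\nonerasing}$ and $\nonerasingstrong{\pi'} < \nonerasingstrong{\pi_0}$, the induction hypothesis turns this into a $\stratnonerasing$-reduction from $\pi'$ to $\pi$ of length $\geq \length{R'''} - 1$; prefixing the step $\pi_0 \onestratnonerasing \pi'$ produces the desired $R_1$, of length $\geq \length{R'''}$.

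The main obstacle is twofold. First, one must reach exactly the \emph{same} normal net $\pi$, not merely some normal net: this is what forces the sub-claim to keep $\pi$ fixed and $\nonerasing$-normal, and the normality of $\pi$ is precisely what rules out the degenerate base case $m=1$ with $\pi' \neq \pi_1$ (there Lemma~\ref{lemma:confluence:nonerasing-stratnonerasing} would exhibit a stratified non-erasing cut in the normal net $\pi_1 = \pi$, a contradiction, so this configuration cannot occur). Second, the length must not decrease; the crucial quantitative input is that the lower reduction $\pi' \nonerasingred \pi''$ supplied by Lemma~\ref{lemma:confluence:nonerasing-stratnonerasing} has length at least $1$, which is exactly what compensates for the single step ``spent'' moving the stratified cut to the front. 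These two points are what make the statement assert $\length{R'''} \leq \length{R_1}$ rather than an equality.
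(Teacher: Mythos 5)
Your proof is correct, and it is built from the same ingredients as the paper's: Lemma~\ref{lemma:confluence:nonerasing-stratnonerasing} as the local commutation step, the observation that a net with a non-erasing cut also has a stratified one, and a well-founded induction on $\nonerasingstrong{\pi_0}$. The difference is in the decomposition. The paper runs a \emph{single} induction on $\nonerasingstrong{\pi_0}$, proving the strengthened statement that for every step $\pi_0 \onestratnonerasing \pi'$ there is a $\stratnonerasing$-reduction from $\pi'$ to $\pi$ of length at least $\length{R'''}-1$; in its inductive step the induction hypothesis is invoked \emph{twice}, once on the target $\pi_1$ of the first step of $R'''$ (to push the stratified step past the tail of $R'''$) and once on $\pi'$ (to straighten the resulting $\nonerasing$-reduction into a $\stratnonerasing$-one). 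You split these two roles apart: your sub-claim performs the commutation by induction on $\length{R'''}$ alone and returns only a $\nonerasing$-reduction from $\pi'$ to $\pi$, so it needs no strong-normalization hypothesis and is valid for arbitrary nets; the well-founded induction on $\nonerasingstrong{\pi_0}$ is then used exactly once, for the straightening. Your organization buys modularity, and it isolates precisely where the hypothesis $\pi_0 \in \textbf{SN}^{\nonerasing}$ is really needed; the paper's organization avoids the nested induction and keeps the whole argument inside one induction. Incidentally, your explicit exclusion of the degenerate configuration ($\length{R'''}=1$ with $\pi' \neq \pi_1$, impossible because Lemma~\ref{lemma:confluence:nonerasing-stratnonerasing} would then exhibit a stratified non-erasing cut in the $\nonerasing$-normal net $\pi$) is a detail the paper's proof leaves implicit, and you handle it correctly.
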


\begin{proof}
We prove, by induction on $\nonerasingstrong{\pi_0}$, that, for any $\pi_0 \in \textbf{SN}^{\nonerasing}$, for any $\nonerasing$-reduction sequence $R'''$ from $\pi_0$ to a $\nonerasing$-normal form $\pi$, for any $\pi'$ such that $\pi \onestratnonerasing \pi'$, there exists a $\stratnonerasing$-reduction sequence $R_1$ from $\pi'$ to $\pi$ such that $\length{R'''} \leq \length{R_1} +1$. 
\begin{itemize}
\item If $\textit{strong}_{\nonerasing}(\pi_0) = 0$, then there is no such $\pi'$.
\item If $\textit{strong}_{\nonerasing}(\pi_0) > 0$, then we apply Lemma~\ref{lemma:confluence:nonerasing-stratnonerasing} and the induction hypothesis. More precisely, suppose that $R'''$ is such that $\pi_{0} \onenonerasing \pi_{1} \nonerasingred \pi$. If $\pi' = \pi_{1}$, then we apply the induction hypothesis to $\pi_{1}$. Otherwise, $\pi' \neq \pi_{1}$ and $\pi_{0} \onestratnonerasing \pi'$, so we can apply Lemma~\ref{lemma:confluence:nonerasing-stratnonerasing}: there exist $\pi''$ such that $\pi_1 \onestratnonerasing \pi''$ and a non-empty reduction sequence $\pi' \nonerasingred \pi''$. We can call $R'''_{1}$ the $\nonerasing$-reduction sequence leading from $\pi_{1}$ to $\pi$ and apply the induction hypothesis to $\pi_{1}$: there exists a $\nonerasing$-reduction sequence $R_{1}^1$ from $\pi''$ to $\pi$ such that $\length{R'''_1} \leq\length{R_1^1}+1$. Now, since there exists a non-empty reduction sequence $\pi' \nonerasingred \pi''$, there also exists a $\nonerasing$-reduction sequence $R'''_{2}$ from $\pi'$ to $\pi$ such that $\length{R'''_{2}} \geq \length{R_{1}^1} + 1$. By applying the induction hypothesis to $\pi'$, there exists a $\stratnonerasing$-reduction sequence $R'''_3$ from $\pi'$ to $\pi$ such that $\length{R'''_3} \geq \length{R'''_2}$. We consider $R_1$ defined by $\pi_0 \onestratnonerasing \pi'$ followed by $R'''_3$. We have $\length{R_1} = \length{R'''_3} +1 \geq \length{R'''_2} +1 \geq \length{R_1^1} + 1 +1 \geq \length{R'''_1} +1 = \length{R'''}$.
\end{itemize}
\end{proof}

\begin{fact}\label{fact:nonerasing reduction do not erase cut-links}
If $\pi \onenonerasing \pi'$, then $\pi'$ has at least $n-1$ cut-links, where $n$ is the number of cut-links in $\pi$.
\end{fact}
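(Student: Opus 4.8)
The plan is to argue by a direct case analysis on the type of the cut-link $t$ reduced in the step $\pi \onenonerasing \pi'$, keeping track of how the total number of cut-links (counted at every depth) changes. Since a single reduction step only modifies the subgraph $\beta$ described in Figures~\ref{fig:cut:multiplicatives} and~\ref{fig:cut}, it suffices to compare the cut-links of $\beta$ with those of its replacement, together with their (possibly duplicated) residues elsewhere. The statement to establish is that a non-erasing step destroys \emph{at most one} cut-link, namely $t$ itself.

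I would treat the four reduction rules in turn, using the list of created nodes recorded in Definition~\ref{def:ancestors}. In the $(ax)$ case the reduction deletes the axiom together with $t$ and reconnects the edge $c$, and no cut-link other than $t$ disappears while none is created: if $c$ becomes a premise of a pre-existing cut, that cut merely survives as a residue, so $\pi'$ has exactly $n-1$ cut-links. In the $(\lone/\bot)$ case reducing $t$ yields the empty graph in place of $\beta$, again removing only $t$, so $\pi'$ has $n-1$ cut-links. In the $(\otimes/\lpar)$ case $t$ is removed but the reduction creates the two cut-links between the respective premises of the $\otimes$- and $\lpar$-nodes, so $\pi'$ has $n+1$ cut-links.

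The main point is the $(!/?)$ case, which here is necessarily non-erasing, so the arity $k$ of the $?$-node $w$ premise of $t$ satisfies $k \geq 1$ (the case $k=0$ is precisely the erasing step excluded by hypothesis). Reducing $t$ removes the single cut $t$ but creates $k$ new cut-links, one between each copy of the main conclusion of $\pi^o$ and the corresponding edge $c_i$; moreover, the $k$ copies of the box $\pi^o$ replace its single occurrence, so each of the $m$ cut-links internal to $\pi^o$ is duplicated into $k$ residues. Hence $\pi'$ has $n - 1 + k + (k-1)m = n + (k-1)(m+1)$ cut-links, which is at least $n$ since $k \geq 1$.

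Collecting the four cases, the number of cut-links decreases by at most $1$ under any non-erasing step, which is exactly the bound claimed. The only genuinely delicate point is the bookkeeping in the $(!/?)$ case: I would make sure that the $k$ freshly created cuts with the $c_i$ are counted as created nodes and that the cut-links internal to $\pi^o$ are counted with multiplicity $k$ in $\pi'$, so that, apart from $t$, no cut-link of $\pi$ is actually lost.
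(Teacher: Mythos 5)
Your proposal is correct and takes essentially the same route as the paper: the paper's proof is the one-line observation that, since $t$ is non-erasing, every cut-link of $\pi$ except $t$ has at least one residue in $\pi'$, and your rule-by-rule case analysis (including the exact counts in the $(\otimes/\lpar)$ and non-erasing $(!/?)$ cases) is just a detailed verification of that same residue statement via Definition~\ref{def:ancestors}.
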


\begin{proof}
If $\pi' = t(\pi)$ with $t$ a non-erasing cut-link, then every cut-link of $\pi$, except $t$, has at least one residue in $\pi'$.
\end{proof}

\begin{lemma}\label{lemma:nonerasing-erasing}
Let $\pi_0 \in \textbf{SN}$ with at least $n$ cut-links. Then there exist
\begin{itemize}
\item a $\nonerasing$-normal net $\pi$;
\item a $\nonerasing$-reduction sequence $R_1$ from $\pi_0$ to $\pi$;
\item and an antistratified $\erasing$-reduction sequence $R_2$ from $\pi$
\end{itemize}
such that $n \leq \length{R_1} + \length{R_2}$.
\end{lemma}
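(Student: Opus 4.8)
The plan is to prove the statement by induction on $\nonerasingstrong{\pi_0}$, which is a legitimate well-founded measure: since $\pi_0\in\textbf{SN}$, Proposition~\ref{proposition:SN=SNnonerasing} gives $\pi_0\in\textbf{SN}^{\nonerasing}$, so $\nonerasingstrong{\pi_0}$ is finite. Write $m$ for the number of cut-links of $\pi_0$; since the claim only weakens as $n$ decreases, it suffices to treat an arbitrary $n\le m$, and I split according to whether $\pi_0$ has a non-erasing cut-link.

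For the inductive step, suppose $\pi_0$ has a non-erasing cut-link $t$ (if moreover $n=0$ the statement is vacuous, so assume $n\ge 1$). Reduce it, $\pi_0\onenonerasing\pi_0'$; then $\pi_0'\in\textbf{SN}$ and $\nonerasingstrong{\pi_0'}<\nonerasingstrong{\pi_0}$. By Fact~\ref{fact:nonerasing reduction do not erase cut-links}, $\pi_0'$ has at least $m-1\ge n-1$ cut-links, so the induction hypothesis applied to $\pi_0'$ and $n-1$ yields a $\nonerasing$-normal net $\pi$, a $\nonerasing$-reduction $R_1'$ from $\pi_0'$ to $\pi$, and an antistratified $\erasing$-reduction $R_2$ from $\pi$ with $n-1\le\length{R_1'}+\length{R_2}$. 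Prepending the step $\pi_0\onenonerasing\pi_0'$ to $R_1'$ gives a $\nonerasing$-reduction $R_1$ from $\pi_0$ to the same $\nonerasing$-normal $\pi$, and $\length{R_1}+\length{R_2}=1+\length{R_1'}+\length{R_2}\ge n$.

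For the base case, if $\nonerasingstrong{\pi_0}=0$ then $\pi_0$ has no non-erasing cut-link, hence (no clash can occur, as $\pi_0\in\textbf{SN}$) all its cut-links are erasing; take $\pi=\pi_0$ and $R_1$ empty. The crux is the claim that an antistratified erasing reduction is forced to lower the cut count by exactly one at each step while preserving the property ``all cut-links are erasing''. Indeed, an antistratified erasing step reduces a cut $t$ of maximal depth $d$ between a $!$-link $o$ and a $0$-ary $?$-link; the contents of the box $\pi^o$ live at depth $\ge d+1$, so by maximality of $d$ the box $\pi^o$ is cut-free. Hence reducing $t$ (Figure~\ref{fig:cut} with $k=0$) erases only the cut-free box and the two links of $t$, creates no cut (no copy of $\pi^o$ is produced), and merely lowers the arity of some $?$-links sitting below the auxiliary conclusions of $o$; as those $?$-links have positive arity, they were not premises of a cut (every cut being erasing, i.e.\ of the form $!/?_{0}$), and the step does not change which link the conclusion of such a $?$-node feeds into, so no further cut is created or destroyed and all surviving cuts remain of the form $!/?_{0}$. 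Iterating from $\pi_0$ thus produces an antistratified $\erasing$-reduction $R_2$ reaching a cut-free net in exactly $m$ steps, whence $\length{R_1}+\length{R_2}=m\ge n$.

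The main obstacle is precisely this depth/maximality argument in the base case: one must verify that, because an antistratified erasing step always reduces a \emph{deepest} cut, the erased box cannot itself contain a cut, so the step removes exactly one cut-link, creates none, produces no clash (guaranteed by $\pi_0\in\textbf{SN}$), and keeps all remaining cuts erasing. Everything else—well-foundedness of the induction via Proposition~\ref{proposition:SN=SNnonerasing}, the length bookkeeping, and the ``$\ge n-1$'' step—follows routinely from Fact~\ref{fact:nonerasing reduction do not erase cut-links}. Alternatively, one could argue directly without induction: along the canonical sequence ($R_1$ then $R_2$) every cut-link of $\pi_0$ has at least one residue that is actually reduced, since no residue can be erased (erasing steps delete only cut-free boxes), and distinct original cuts are reduced by distinct steps because a single step reduces the residue of a unique ancestor; this injection yields $\length{R_1}+\length{R_2}\ge m\ge n$.
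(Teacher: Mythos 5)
Your proof is correct and takes essentially the same route as the paper's: induction on a strong-normalization measure (the paper uses $\strong{\pi_0}$, you use $\nonerasingstrong{\pi_0}$, an immaterial difference), with the inductive case handled by Fact~\ref{fact:nonerasing reduction do not erase cut-links} plus the induction hypothesis, and the base case by an antistratified $\erasing$-reduction sequence that removes exactly one cut-link per step. Your depth-maximality argument (a deepest erasing cut has a cut-free box, so the step deletes exactly one cut, creates none, and preserves the invariant that all cuts are erasing) is a correct filling-in of the detail that the paper's second case merely asserts.
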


\begin{proof}
By induction on $\strong{\pi_0}$. We distinguish between two cases:
\begin{itemize}
\item There exists $\pi_1$ such that $\pi_0 \onenonerasing \pi_1$: we apply Fact~\ref{fact:nonerasing reduction do not erase cut-links} and the induction hypothesis on $\pi_1$.
\item The net $\pi_0$ is $\nonerasing$-normal: we take for $R_1$ the empty reduction sequence from $\pi_0$ to $\pi_0$ and for $R_2$ an antistratified $\erasing$-reduction sequence $\pi_0 \oneerasing \pi_1 \ldots \oneerasing \pi_n$ such that, for any $i \in \{ 0, \ldots, n \}$, the net $\pi_i$ has exactly $k-i$ erasing cut-links, where $k$ is the number of cut-links of $\pi_0$.
\end{itemize}
\end{proof}

\begin{fact}\label{fact:number of cut-links}
Let $R_0$ be an $\erasing$-reduction sequence from $\pi'$. Then $\pi'$ has at least $\length{R_0}$ cut-links.
\end{fact}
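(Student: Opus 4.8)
The plan is to induct on $\length{R_0}$, reducing everything to the single observation that \emph{one} erasing step strictly decreases the number of cut-links. Write $R_0$ as $\pi' = \sigma_0 \oneerasing \sigma_1 \oneerasing \cdots \oneerasing \sigma_n$ with $n = \length{R_0}$. The base case $n=0$ is immediate, since every net has at least $0$ cut-links. For the inductive step, I would peel off the first step $\sigma_0 \oneerasing \sigma_1$: the tail $\sigma_1 \oneerasing \cdots \oneerasing \sigma_n$ is an $\erasing$-reduction sequence of length $n-1$ from $\sigma_1$, so by the induction hypothesis $\sigma_1$ has at least $n-1$ cut-links. It then suffices to prove that $\sigma_0$ has at least one more cut-link than $\sigma_1$, for then $\sigma_0 = \pi'$ has at least $(n-1)+1 = n = \length{R_0}$ cut-links.

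The heart of the argument is therefore the local claim that a single erasing step $\sigma_0 \oneerasing \sigma_1$ removes at least one cut-link and creates none. This is read off from Definition~\ref{def:ancestors} applied to the $(!/?)$ case of Figure~\ref{fig:cut} in the erasing subcase $k=0$. First, no node is created: the only nodes listed as created in the $(!/?)$ case are the cuts between a copy of $\pi^o$'s main conclusion and some premise $c_i$ of the $?$-node, and in the erasing case there are no copies of $\pi^o$ and no premises $c_i$ at all; this is exactly the fact already used in the proof of Fact~\ref{fact:erasing steps do not produce clashs}, namely that every edge (hence every node) of $\sigma_1$ has an ancestor in $\sigma_0$. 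Consequently every cut-link of $\sigma_1$ is the residue of a cut-link of $\sigma_0$. Second, the step is non-duplicating: taking $k=0$ copies of the box $\pi^o$ means the box is erased outright and the premises of the $?$-nodes attached to the auxiliary conclusions of $o$ are removed (duplicated zero times) rather than copied, so each surviving node of $\sigma_0$ has exactly one residue and no node is duplicated. Hence the residue relation restricts to an \emph{injection} from the cut-links of $\sigma_1$ into the cut-links of $\sigma_0$.

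Finally I would close the local claim by exhibiting a cut-link of $\sigma_0$ that lies outside the image of this injection, namely the erasing cut $t$ reduced at this step: $t$ is removed by the reduction and thus has no residue in $\sigma_1$. Combining injectivity with the omission of $t$ gives that the number of cut-links of $\sigma_1$ is at most that of $\sigma_0$ minus one (and strictly less still if the erased box $\pi^o$ itself contained cut-links, which only helps the inequality). Plugging this strict decrease into the induction yields the statement. I expect the only delicate point to be the bookkeeping of the previous paragraph, that is, justifying from the precise description of the $(!/?)$ rule with $k=0$ that the step neither creates nor duplicates cut-links; once that is pinned down the counting is routine.
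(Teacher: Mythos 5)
Your proof is correct and follows essentially the same route as the paper: the paper's argument is exactly your local claim (every cut-link of the reduct has an ancestor, the reduced cut $t$ has no residue, hence each erasing step strictly decreases the number of cut-links), with the induction on $\length{R_0}$ left implicit. Your additional care about injectivity of the residue relation (no duplication when $k=0$) is a point the paper glosses over, but it is the same argument.
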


\begin{proof}
If $\pi \oneerasing \pi' = t(\pi)$, then
\begin{itemize}
\item every cut-link of $\pi'$ has an ancestor in $\pi$
\item and $t$ has no residue in $\pi'$;
\end{itemize}
hence the number of cut-links in $\pi'$ is strictly smaller than the number of cut-links in $\pi$.
\end{proof}

\begin{proposition}\label{proposition:nonerasing-erasing}
For any $\pi_0 \in \textbf{SN}$, there exist a $\stratnonerasing$-reduction sequence $R_1 : \pi_0 \stratnonerasingred \pi$ with $\pi$ $\nonerasing$-normal and an antistratified $\erasing$-reduction sequence $R_2$ from $\pi$ such that $\strong{\pi_0} = \length{R_1} + \length{R_2}$.
\end{proposition}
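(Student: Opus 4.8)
The plan is to prove the two inequalities $\length{R_1} + \length{R_2} \le \strong{\pi_0}$ and $\strong{\pi_0} \le \length{R_1} + \length{R_2}$ for a suitable canonical pair $(R_1, R_2)$. The first one is immediate: any $\stratnonerasing$-reduction $R_1$ from $\pi_0$ to a $\nonerasing$-normal net $\pi$ followed by an antistratified $\erasing$-reduction $R_2$ from $\pi$ is in particular a reduction sequence from $\pi_0$, so its length $\length{R_1}+\length{R_2}$ is bounded by the maximum $\strong{\pi_0}$. All the work goes into exhibiting a canonical pair whose total length is at least $\strong{\pi_0}$.

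For the lower bound I would start from a reduction sequence $R$ from $\pi_0$ of maximum length, so that $\length{R} = \strong{\pi_0}$ (this maximum exists because $\pi_0 \in \textbf{SN}$). First I would apply Proposition~\ref{proposition:postponingerasing} (legitimate since $\pi_0 \in \textbf{SN}$ forbids infinite reduction sequences) to split $R$ into a $\nonerasing$-reduction $R'$ from $\pi_0$ to some net $\mu$ and an $\erasing$-reduction $R_0$ from $\mu$, with $\strong{\pi_0} = \length{R} \le \length{R'} + \length{R_0}$. The intermediate net $\mu$ is itself in $\textbf{SN}$, being a reduct of $\pi_0$.

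The delicate point is that $\mu$ need not be $\nonerasing$-normal and $R_0$ need not be antistratified, so the pair $(R', R_0)$ is not yet canonical. Rather than manipulate $R_0$ directly, I would measure it only through its length: by Fact~\ref{fact:number of cut-links} the net $\mu$ has at least $\length{R_0}$ cut-links. Applying Lemma~\ref{lemma:nonerasing-erasing} to $\mu$ with $n = \length{R_0}$ then produces a $\nonerasing$-normal net $\pi$, a $\nonerasing$-reduction $S_1$ from $\mu$ to $\pi$, and an antistratified $\erasing$-reduction $R_2$ from $\pi$ with $\length{R_0} \le \length{S_1} + \length{R_2}$. Concatenating $R'$ and $S_1$ gives a $\nonerasing$-reduction $R_{ne}$ from $\pi_0$ to the $\nonerasing$-normal net $\pi$, and so far $\strong{\pi_0} \le \length{R'} + \length{R_0} \le \length{R_{ne}} + \length{R_2}$.

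Finally I would stratify the non-erasing part: since $\pi_0 \in \textbf{SN} = \textbf{SN}^{\nonerasing}$ (Proposition~\ref{proposition:SN=SNnonerasing}), Proposition~\ref{proposition:stratisworse} applied to $R_{ne}$ yields a $\stratnonerasing$-reduction $R_1$ from $\pi_0$ to the same $\nonerasing$-normal net $\pi$ with $\length{R_{ne}} \le \length{R_1}$. Chaining the estimates gives $\strong{\pi_0} \le \length{R_1} + \length{R_2}$, and together with the trivial upper bound this forces equality. The pair $(R_1, R_2)$ is canonical by construction: $R_1$ is $\stratnonerasing$ reaching a $\nonerasing$-normal net and $R_2$ is antistratified $\erasing$. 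The main obstacle is precisely the non-canonical shape of the postponed tail $R_0$; the trick that unlocks the argument is to forget $R_0$ itself and retain only the lower bound $\length{R_0}$ on the number of cut-links of $\mu$, which is exactly the quantity that Lemma~\ref{lemma:nonerasing-erasing} converts back into a genuinely canonical reduction of at least that length.
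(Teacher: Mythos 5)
Your proof is correct and follows essentially the same route as the paper's: starting from a maximal-length reduction sequence, you apply Proposition~\ref{proposition:postponingerasing}, then Fact~\ref{fact:number of cut-links} together with Lemma~\ref{lemma:nonerasing-erasing}, and finally Proposition~\ref{proposition:stratisworse}, chaining exactly the same inequalities. The only differences are cosmetic: you make explicit a few hypotheses the paper leaves implicit (the trivial upper bound $\length{R_1}+\length{R_2}\leq\strong{\pi_0}$, the fact that the intermediate net is in $\textbf{SN}$, and the appeal to Proposition~\ref{proposition:SN=SNnonerasing} before using Proposition~\ref{proposition:stratisworse}).
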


\begin{proof}
Let $\pi_0 \in \textbf{SN}$ and let $R$ be a reduction sequence from $\pi_0$. 
By Proposition~\ref{proposition:postponingerasing}, there exist a $\nonerasing$-reduction sequence $R'$ from $\pi_0$ to some net $\pi'$ and an $\erasing$-reduction sequence $R_0$ from $\pi'$ such that $\length{R}\leq\length{R'} + \length{R_0}$. By Fact~\ref{fact:number of cut-links}, the net $\pi'$ has at least $\length{R_0}$ cut-links, hence, by Lemma~\ref{lemma:nonerasing-erasing}, there exist
\begin{itemize}
\item a $\nonerasing$-normal net $\pi$;
\item a $\nonerasing$-reduction sequence $R''$ from $\pi'$ to $\pi$;
\item and an antistratified $\erasing$-reduction sequence $R_2$ from $\pi$
\end{itemize}
such that $\length{R_0} \leq \length{R''} + \length{R_2}$. We consider $R'''$ defined by $R'$ followed by $R''$. By Proposition~\ref{proposition:stratisworse}, there exists a $\stratnonerasing$-reduction sequence $R_{1}$ from $\pi_{0}$ to $\pi$ such that $\length{R_1}\geq\length{R'''}$. We thus have: $\length{R_1}+\length{R_2}\geq\length{R'}+\length{R''}+\length{R_{2}}\geq\length{R'}+\length{R_{0}}\geq\length{R}$. By taking as $R$ any reduction sequence such that $\length{R}=\strong{\pi_0}$, we obtain the required $R_{1}$ and $R_{2}$.
\end{proof}

When $ \pi $ (resp.\ $ \pi' $) is a net having $c$ (resp.\ $c'$) among its conclusions, we denote in the sequel by $\cutnets{\pi}{\pi'}{c}{c'}$ the net obtained by connecting $ \pi $ and $ \pi' $ by means of a $cut$-link with premises $ c $ and $ c' $.

\begin{corollary}\label{corollary:postponingerasing}
Let $ \pi $ (resp.\ $ \pi' $) be a net having $c$ (resp.\ $c'$) among its conclusions, and assume that $\cutnets{\pi}{\pi'}{c}{c'}$ is strongly normalizable. 

There exists $R_1 : \cutnets{\pi}{\pi'}{c}{c'} \stratnonerasingred \pi_1$ and $R_2 : \pi_1 \erasingred \pi_2$ antistratified such that
\begin{itemize}
\item $\pi_1$ is $\nonerasing$-normal;
\item $\pi_2$ is cut-free;
\item $\strong{\cutnets{\pi}{\pi'}{c}{c'}} = \length{R_1} + \length{R_2}$.
\end{itemize}
\end{corollary}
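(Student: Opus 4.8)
The plan is to obtain Corollary~\ref{corollary:postponingerasing} as an essentially immediate specialization of Proposition~\ref{proposition:nonerasing-erasing} to the particular net $\cutnets{\pi}{\pi'}{c}{c'}$, adding only the verification that the target nets of the two reduction sequences have the claimed extra properties. First I would set $\pi_{0} = \cutnets{\pi}{\pi'}{c}{c'}$, which is a net by construction and belongs to $\textbf{SN}$ by hypothesis. Applying Proposition~\ref{proposition:nonerasing-erasing} directly to this $\pi_{0}$ yields a $\stratnonerasing$-reduction sequence $R_{1} : \cutnets{\pi}{\pi'}{c}{c'} \stratnonerasingred \pi_{1}$ with $\pi_{1}$ being $\nonerasing$-normal, together with an antistratified $\erasing$-reduction sequence $R_{2}$ from $\pi_{1}$, such that $\strong{\cutnets{\pi}{\pi'}{c}{c'}} = \length{R_{1}} + \length{R_{2}}$. (Here I rename the intermediate net $\pi$ of the proposition as $\pi_{1}$ and the final net of $R_{2}$ as $\pi_{2}$, to match the statement.) This already gives the third bullet and the $\nonerasing$-normality of $\pi_{1}$ verbatim.

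What remains is to check the second bullet, namely that $\pi_{2}$, the net reached at the end of $R_{2}$, is cut-free. The point I would make is that, because $\cutnets{\pi}{\pi'}{c}{c'}$ is strongly normalizable, every reduction sequence starting from it is finite and reaches a cut-free net; in particular the composite reduction sequence $R_{1}$ followed by $R_{2}$ reaches $\pi_{2}$, and since $\strong{\cutnets{\pi}{\pi'}{c}{c'}}$ is the \emph{maximum} length of a reduction sequence, this composite sequence is of maximal length and hence cannot be properly extended. Thus $\pi_{2}$ contains no reducible cut. The only subtlety is that, a priori, a net which is not cut-free but all of whose cuts are clashes cannot be reduced further either; here, however, $\pi_{2}$ is a reduct of a strongly normalizable net, so by the definition of $\textbf{SN}$ none of the reducts of $\cutnets{\pi}{\pi'}{c}{c'}$ contains a clash. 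Therefore $\pi_{2}$ has no clash, and since it has no reducible cut, it has no cut at all: it is cut-free.

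The first bullet, that $\pi_{1}$ is $\nonerasing$-normal, comes straight from Proposition~\ref{proposition:nonerasing-erasing}, so no extra work is needed there. I do not expect any serious obstacle in this argument: the entire content has been absorbed into Proposition~\ref{proposition:nonerasing-erasing}, and the corollary is just the instantiation at a net of the specific shape $\cutnets{\pi}{\pi'}{c}{c'}$ together with the remark that maximality of the reduction length, combined with the absence of clashes in reducts of $\textbf{SN}$-nets (Remark~\ref{rem:ClashesSemantics}), forces the endpoint $\pi_{2}$ to be genuinely cut-free rather than merely irreducible. If anything deserves care it is precisely this last implication, distinguishing ``no reducible cut'' from ``cut-free'', which is exactly where the clash-freeness built into the definition of strong normalizability is used.
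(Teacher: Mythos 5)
Your proposal is correct and matches what the paper intends: the corollary is stated without proof precisely because it is the instantiation of Proposition~\ref{proposition:nonerasing-erasing} at $\cutnets{\pi}{\pi'}{c}{c'}$, plus the observation you spell out. Your handling of the only delicate point is exactly right: maximality of $\length{R_1}+\length{R_2}=\strong{\cutnets{\pi}{\pi'}{c}{c'}}$ rules out reducible cuts in $\pi_2$, and the clash-freeness of reducts built into the definition of $\textbf{SN}$ upgrades ``irreducible'' to ``cut-free''.
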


\section{Experiments and the interpretations of nets}\label{sect:experiments}

We introduce experiments for nets (a well-known notion coming from~\cite{ll}), adapted to our framework (Definition~\ref{def:experiment}).

In~\cite{CarvPagTdF10, MR2926280} experiments are defined in an untyped framework; we follow here the same approach in our Definition~\ref{def:experiment}. Experiments allow to compute the semantics of nets:  the \emph{interpretation} $\sm{\pi}$ of a net $\pi$ is the set of the results of $\pi$'s experiments (Definition~\ref{def:experiment}). Like in~\cite{CarvPagTdF10, MR2926280}, in the following definition the set $\{+, -\}$ is used in order to ``semantically distinguish'' cells of type $\ltens$ from cells of type $\parr$, which is mandatory in an untyped framework. The function $(\ )^{\perp}$ (which is the semantic version of linear negation) flips polarities (see Definition~\ref{def:Dfunction}).

We also introduce here another ``ad hoc interpretation'' of $\pi$, denoted by $\smbis{\pi}$, which (like $\sm{\pi}$) is a set of points that can be computed starting from $\pi$ (Definition~\ref{def:experiment}). Intuitively, every element of $\smbis{\pi}$ keeps trace of all the ``weakenings'' (the $?$-links with no premise) of $\pi$, which is not the case of all the elements of $\sm{\pi}$ (see Remark~\ref{rem:exp+expbis} for a more technical comparison): this difference will be essential in the next sections. A crucial property of $\smbis{\pi}$ is the invariance under \emph{non erasing} cut elimination (Proposition~\ref{prop:invariancesmbis}). 

\bigskip

\begin{definition}\label{def:D}
We define $D_n$ by induction on $n$:
\begin{minilist}
\item $D_0 := \{+,-\} \times (A \cup \{\ast\})$
\item $D_{n+1} := D_0 \cup (\{+,-\} \times D_n \times D_n) \cup (\{ +,- \} \times \finitemultisets{D_n})$
\end{minilist}

\bigskip

We set $D \mathrel{:=} \bigcup_{n\in\textrm{N}} D_n$, and we call \emph{rank of an element $x \in D$} (and we denote by $\rank{x}$) the least $n$ such that $x \in D_n$.

When $(+,\multiset{})$ does not appear in $x\in D$, we say that $x$ is \emph{exhaustive}\footnote{We mean here that the ordered sequence of characters $(+, [])$ is not a subsequence of $x$ (as a word).}. We denote by $\exhaustive{X}$ the set of the exhaustive elements of any given subset $X$ of $D$. When $X\subseteq D^{n}$, we denote by $\exhaustive{X}$ the set $\{(x_{1},\ldots,x_{n})\in X: x_{i}\textrm{ is exhaustive for every }i\in\{1,\ldots,n\}\}$.
\end{definition}

\begin{definition}\label{def:Dfunction}
Let $+^\perp = -$ and $-^\perp = +$. We define $x^\perp$ for any $x \in D$, by induction on $\rank{x}$:
\begin{itemize}
\item for $a \in A\cup\{\ast\}$, $(p, a)^\perp = (p^\perp, a)$;
\item for $a \in \{\ast\}$, $(p, a)^\perp = (p^\perp, a)$;
\item else, $(p, x, y)^\perp = (p^\perp, x^\perp, y^\perp)$, and $(p, [x_1,\dots,x_n])^\perp = (p^\perp, [x_1^\perp,\dots, x_n^\perp])$.
\end{itemize}
\end{definition}

A key feature is that, for every $x\in D$, one has $x\neq x^\perp$, a property already used in the proof of the main qualitative result of~\cite{CarvPagTdF10} (here Theorem~\ref{th:invariance}).

Now, we show how to compute the interpretation of an untyped net directly, without passing through a sequent calculus. This is done by adapting the notion of experiment to our untyped framework. For a net $\pi$ with $n$ conclusions, we define the \emph{$\sm{}$-interpretation of $\pi$}, denoted by $\sm{\pi}$, as a subset of $D^n$, that can be seen as a morphism of the category \textbf{Rel} of sets and relations from the interpretation of $1$ to the interpretation of $\bigparr_{i=1}^n D$. We compute $\sm{\pi}$ by means of the \emph{$\sm{}$-experiments of $\pi$}, a notion introduced by Girard in \cite{ll} and central in this paper. We introduce also a variant of this notion, the \emph{$\smbis{}$-experiments of $\pi$} that allow to compute $\smbis{\pi}$. We define, by induction on the depth of $\pi$, what the $\sm{}$-experiments and $\smbis{}$-experiments of $\pi$ are:
\begin{figure}
\centering
\scalebox{\scalefact}{\input{expax.pstex_t}}\quad\,
\scalebox{\scalefact}{\input{expcut.pstex_t}}\quad\,
\scalebox{\scalefact}{\input{expone.pstex_t}}\quad\,
\scalebox{\scalefact}{\input{expbot.pstex_t}}\quad\,
\scalebox{\scalefact}{\input{exptens.pstex_t}}\quad\,
\scalebox{\scalefact}{\input{exppar.pstex_t}}\quad\,
\scalebox{\scalefact}{\input{expf.pstex_t}}\quad\,
\scalebox{\scalefact}{\input{expwhy.pstex_t}}\quad\,
\scalebox{\scalefact}{\input{expbang.pstex_t}}
\caption{$\sm{}$-experiments of $\flat$-nets, with $x$, $y$, $x_i$ $\in D$ and $\mu_i \in \finitemultisets{D}$.}\label{fig:experiment}
\hrulefill
\end{figure}

\begin{definition}[Experiment]\label{def:experiment}
An \emph{$\sm{}$-experiment $e$ of a $\flat$-net $\pi$}, denoted by $e :_{\sm{}} \pi$, is a function which associates with every $!$-link $o$ of $\ground{\pi}$ a multiset $\multiset{e^o_1,..., e^o_k}$ of $k \geq 0$ $\sm{}$-experiments of $\pi^o$, and with every edge $a$ of $\ground{\pi}$ an element of $D$. 

An \emph{$\smbis{}$-experiment $e$ of a $\flat$-net $\pi$}, denoted by $e :_{\smbis{}} \pi$, is a function which associates with every $!$-link $o$ of $\ground{\pi}$ a multiset $\multiset{e^o_1,..., e^o_k}$ of $k > 0$ $\smbis{}$-experiments of $\pi^o$, and with every edge $a$ of $\ground{\pi}$ an element of $D$.

In the cases of $ax$-links, cut-links, $\lone$-links, $\lbot$-link, $\ltens$-links, $\lpar$-links, $\flat$-links, $!$-links and $?$-links with $n \geq 1$ premises, the standard conditions of Figure~\ref{fig:experiment} hold both for $\sm{}$-experiments and $\smbis{}$-experiments; more precisely, if $a,b,c$ are edges of $\ground{\pi}$ the following conditions hold:
\begin{itemize}
\item if $a, b$ are the conclusions (resp. the premises) of an $ax$-link (resp. cut-link), then $e(a)=e(b)^\perp$;
\item if $c$ is the conclusion of a $\lone$-link (resp. $\lbot$-link), then $e(c)=\sequence{+, *}$ (resp. $e(c)=\sequence{-, *}$);
\item if $c$ is the conclusion of a $\ltens$-link (resp. $\lpar$-link) with premises $a, b$, then $e(c)=\sequence{+,e(a),e(b)}$ (resp. $e(c)=\sequence{-,e(a),e(b)}$);
\item if $c$ is the conclusion of a $\flat$-link with premise $a$, then $e(c) =\sequence{-,[e(a)]}$; 
\item if $c$ is the conclusion of a $?$-link with premises $a_1, \dots, a_n$ where $n\geq 1$, and for every $i\leq n$, $e(a_i)=\sequence{-,\mu_i}$, where $\mu_i$ is a finite multiset of elements of $D$, then $e(c) = \sequence{-, \sum_{i \leq n} \mu_i}$;
\item if $c$ is a conclusion of a $!$-link $o$ of $\ground{\pi}$, let $\pi^o$ be the box of $o$ and $e(o)=[e^o_1, \dots, e^o_n]$. If $c$ is the logical conclusion of $o$, let $c^o$ be the logical conclusion of $\pi^o$, then $e(c) = \sequence{+,[e^o_1(c^o), \dots ,e^o_n(c^o)]}$, if $c$ is a structural conclusion of $o$, let $c^o$ be the structural conclusion of $\pi^o$ associated with $c$, and for every $i\leq n$, let 
$ e^o_i(c^o) = (-, \mu_i) $, then $e(c) = \sequence{-, \sum_{i \leq n} \mu_i}$.
\end{itemize}

In the case of a $?$-link with no premise and the edge $c$ as conclusion, we require that:
\begin{minilist}
\item
$e(c) = (-, \multiset{})$, for an $\sm{}$-experiment $e$
\item
$e(c) = (-, a)$ with $a \in \finitemultisets{D}$ 
for an $\smbis{}$-experiment $e$.
\end{minilist}

\bigskip

When $e$ is an $\sm{}$-experiment (resp. an $\smbis{}$-experiment), we set\footnote{Notice that when $e$ is an $\sm{}$-experiment one always has $\weakeningsofexperiment{e} = \multiset{}$.}:
\begin{eqnarray*}
\weakeningsofexperiment{e} & = & \sum_{\begin{array}{c} c \textrm{ is the conclusion of a $?$-link of $\ground{\pi}$ with no premise} \\ e(c) = (-, \mu) \end{array}} \mu \\
& & + \sum_{o \textrm{ is a $!$-link of $\ground{\pi}$}} \sum_{e^o \in e(o)} \weakeningsofexperiment{e^o} \enspace .
\end{eqnarray*}

If $c_1, \dots, c_n$ are the conclusions of $\pi$, then the \emph{result of $e$}, denoted by $|e|$, is the element\footnote{Recall that a \gnet, hence a $\flat$-net, is given together with an order on its conclusions, so the sequence $\sequence{e(c_1), \ldots, e(c_n)}$ is uniquely determined by $e$ and $\pi$.} $\sequence{e(c_1), \ldots, e(c_n)}$ of $D^n$. The \emph{$\sm{}$-interpretation of $\pi$} is the set of the results of its $\sm{}$-experiments. The \emph{$\smbis{}$-interpretation of $\pi$} is the set of the pairs $(\result{e}, \weakeningsofexperiment{e})$ such that $e$ is an $\experimentbis{}{\pi}$.
\begin{eqnarray*}
\sm{\pi} & \mathrel{:=} & \left\lbrace \sequence{e(c_1), \ldots, e(c_n)} \: ; \: e \text{ is an $\sm{}$-experiment of } \pi \right\rbrace \enspace ; \\
\smbis{\pi} & \mathrel{:=} & \left\lbrace (\sequence{e(c_1), \ldots, e(c_n)}, \weakeningsofexperiment{e}) \: ; \: e \text{ is an $\smbis{}$-experiment of } \pi \right\rbrace \enspace .
\end{eqnarray*}
If $\mathbf{y} = \sequence{e(c_1), \ldots, e(c_n)}$ is the result of an $\sm{}$-experiment (resp. an $\smbis{}$-experiment) $e$ of $\pi$, we denote by $\mathbf{y}_{c_i}$ the element $e(c_i)$, for every $i\leq n$. Generally, if $\mathbf{d} = \sequence{c_{i_1},\dots,c_{i_k}}$ is a sequence of conclusions of $\pi$, we note by $\mathbf{y}_{\mathbf{d}}$ the element $\sequence{e(c_{i_1}),\dots,e(c_{i_k})}$ of $\mathbf{D}$.
\end{definition}

\begin{remark}\label{rem:exp+expbis}
The difference between $\sm{}$-experiments and $\smbis{}$-experiments appears clearly in the case of a $?$-link with no premise of Definition~\ref{def:experiment}, but there is another (slightly subtler) point where it shows up: while an $\sm{}$-experiment can associate with a $!$-link of $\ground{\pi}$ an empty multiset of experiments, this cannot be the case for an $\smbis{}$-experiment. Such a (heavy) constraint forbids to ``hide'' pieces of proofs, which is mandatory if one wants to be able to speak of strong normalization.
\end{remark}

\begin{remark}\label{rem:IntersectTpesRex}
When we just consider the nets encoding $\lambda$-terms, these two different interpretations $\sm{}$ and $\smbis{}$ correspond respectively to the two following non-idempotent intersection types systems: System~R of \cite{phddecarvalho} and \cite{Carvalhoexecution} (and called System~$\mathcal{M}$ in \cite{Inhabitation}) and System~$R^\textit{ex}$:\\
\begin{itemize}
\item The set of types is defined by the following grammar:\\
$\alpha ::= \gamma \: \vert \: a \to \alpha$ (types)\\
$a ::= [\alpha_1, \ldots, \alpha_n]$ (finite multiset of types)\\
where $\gamma$ ranges over a countable set $A$ and $n \in \integers$.
\item Environments are functions from variables to finite multisets of types, assigning the empty multiset to almost all the variables. If $\Gamma_1$, \ldots, $\Gamma_m$ are $m$ environments, then we denote by $\Gamma_1, \ldots, \Gamma_m$ the environment $\Gamma$ defined by $\Gamma(x) = \sum_{i=1}^m \Gamma_i(x)$ for any variable $x$. Moreover we denote by $x : a$ the environment $\Gamma$ defined by $\Gamma(y) = \left\lbrace \begin{array}{ll} \textit{$a$} & \textit{if $y = x$;}\\ \textit{$[]$} & \textit{otherwise.} \end{array} \right.$
\item A typing judgement is a triple of the form $\Gamma \vdash_R t : \alpha$ (respectively $\Gamma \vdash_{R^\textit{ex}} t : \alpha$). The types systems are those given respectively in Figure~\ref{figure: System R} and in Figure~\ref{figure: System R^ex}.
\end{itemize}
System $R^\textit{ex}$, like the non-idempotent intersection types system considered in \cite{DBLP:conf/fossacs/BernadetL11}, \cite{bernadetleng11b} and \cite{bernadetlengrand13}, characterizes strongly normalizing $\lambda$-terms. There are some differences between the two systems. In particular, if we identify the empty multiset with the type $\omega$, then in System~$R^\textit{ex}$ then the type $\omega$ can be used for weakenings but not as a universal type.
\end{remark}

\begin{figure}
\centering
\AxiomC{}
\UnaryInfC{$x : [\alpha] \vdash_R x : \alpha$}
\DisplayProof
\begin{center} \end{center}
\AxiomC{$\Gamma, x : a \vdash_R t : \alpha$}
\RightLabel{$\Gamma(x) = []$}
\UnaryInfC{$\Gamma \vdash_R \lambda x. t : a \to \alpha$}
\DisplayProof
\begin{center} \end{center}
\AxiomC{$\Gamma_0 \vdash_R v : [\alpha_1, \ldots, \alpha_n] \to \alpha$}
\AxiomC{$(\Gamma_i \vdash_R u : \alpha_i)_{i \in \{ 1, \ldots, n\}}$}
\RightLabel{$n \in \integers$}
\BinaryInfC{$\Gamma_0, \ldots, \Gamma_n \vdash_R (v)u : \alpha$}
\DisplayProof
\caption{The type assignment system $R$ for the $\lambda$-calculus}\label{figure: System R}
\hrulefill
\end{figure}

\begin{figure}
\centering
\AxiomC{}
\RightLabel{$m \in \integers$}
\UnaryInfC{$x : [\alpha], y_1 : a_1, \ldots, y_m : a_m \vdash_{R^\textit{ex}} x : \alpha$} 
\DisplayProof
\begin{center} \end{center}
\AxiomC{$\Gamma, x : a \vdash_{R^\textit{ex}} t : \alpha$}
\RightLabel{$\Gamma(x) = []$}
\UnaryInfC{$\Gamma \vdash_{R^\textit{ex}} \lambda x. t : a \to \alpha$}
\DisplayProof
\begin{center} \end{center}
\AxiomC{$\Gamma_0 \vdash_{R^\textit{ex}} v : [\alpha_1, \ldots, \alpha_n] \to \alpha$}
\AxiomC{$(\Gamma_i \vdash_{R^\textit{ex}} u : \alpha_i)_{i \in \{ 1, \ldots, n\}}$}
\RightLabel{$n \in \integers \setminus \{ 0 \}$}
\BinaryInfC{$\Gamma_0, \ldots, \Gamma_n \vdash_{R^\textit{ex}} (v)u : \alpha$}
\DisplayProof
\caption{The type assignment system $R^\textit{ex}$ for the $\lambda$-calculus}\label{figure: System R^ex}
\hrulefill
\end{figure}

In case the net $\pi$ is cut-free, $\smbis{}$-experiments ``can choose'' the finite multiset $a$ such that $(-, a)$ is associated with the conclusion of any $0$-ary $?$-link of $\pi$: there is a ``sparing'' choice, that is to always choose $a = \multiset{}$. On the other hand, when a $0$-ary $?$-link has a conclusion which is the premise of a cut, one can never associate $(-, \multiset{})$ to this edge, since according to Definition~\ref{def:experiment} one cannot associate $(+, \multiset{})$ with the main conclusion of a $!$-link; nevertheless one can still make a ``sparing'' choice by choosing only a multiset of cardinality $1$.

\begin{definition}\label{def:w-sparing}
We define, by induction on $\depth{\pi}$, what it means to be \emph{$w$-sparing} for a $\smbis{}$-experiment $e$ of a net $\pi$:
\begin{itemize}
\item for every conclusion $c$ of a $0$-ary $?$-link of $\ground{\pi}$ which is not premise of some cut-link, we have $e(c) = (-, [])$;
\item for every conclusion $c$ of a $0$-ary $?$-link of $\ground{\pi}$ which is premise of some cut-link, we have $e(c) = (-, [\alpha])$ for some $\alpha \in D$;
\item for every $!$-link $o$ of $\ground{\pi}$, $e(o)$ is a finite multiset of $w$-sparing experiments of $\pi^o$.
\end{itemize}
\end{definition}

The following definition introduces an equivalence relation $\sim$ on the $\smbis{}$-experiments of a $\flat$-net $\pi$: intuitively $e\sim e'$ when $e$ and $e'$ associate with a given $!$-link of $\pi$ multisets of experiments with the same cardinality, and with the conclusion of a given $0$-ary $?$-link it can never happen that one of the two associates $(-, [])$ and the other one $(-, a)$ with $a \not= []$.

\begin{definition}\label{def:equivalenceSIM}
We define an \emph{equivalence $\sim$} on the set of $\smbis{}$-experiments of a $\flat$-net $\pi$, by induction on $\depth{\pi}$. Let $e,e':\pi$, we set $e \sim e'$ whenever
\begin{itemize}
\item for any weakening-link $l$ of $\ground{\pi}$, there is $m \in \integers$ such that $e(c)=(-, [\alpha_1, \ldots, \alpha_m])$ and $e'(c) = (-, [\alpha'_1, \ldots, \alpha'_m])$ for some $\alpha_1, \ldots, \alpha_m, \alpha'_1, \ldots, \alpha'_m \in D$, where $c$ is the conclusion of $l$;
\item and, for every $!$-node $o$ of $\ground{\pi}$, there is $m \in \integers$ such that $e(o) = [e_1, \ldots, e_m]$, $e'(o) = [e'_1, \ldots, e'_m]$ and, for any $j \in \{ 1, \ldots, m \}$, we have $e_j \sim e'_j$.
\end{itemize}
\end{definition}

We conclude the section by recalling the invariance of the $\sm{}$-interpretation w.r.t\ usual cut elimination, and by stating the invariance of the $\smbis{}$-interpretation w.r.t.\ non erasing cut elimination.

\begin{theorem}\label{th:invariance}
For $\pi$ and $\pi_1$ nets: if $\pi \cutred \pi_1$, then $\sm{\pi} = \sm{\pi_1}$.
\end{theorem}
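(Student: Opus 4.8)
The plan is to reduce the statement first to a single reduction step and then, for a single step, to a cut occurring at depth $0$, after which one argues by cases on the type of the reduced cut. Since $\cutred$ is by definition the reflexive--transitive closure of $\onecut$, an induction on the length of the reduction sequence reduces the claim to: if $\pi \onecut \pi_1$ then $\sm{\pi} = \sm{\pi_1}$. For this one-step claim I would argue by induction on $\depth{\pi}$: if the reduced cut lies at positive depth it is a cut of some box $\pi^{o'}$, and the induction hypothesis gives a result-preserving correspondence between the experiments of $\pi^{o'}$ and those of its reduct, which propagates to $\pi$ through the $!$-link clause of Definition~\ref{def:experiment}; so it suffices to treat a cut at depth $0$. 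There I would exhibit, in each case of Figures~\ref{fig:cut:multiplicatives} and~\ref{fig:cut}, a correspondence between the $\sm{}$-experiments of $\pi$ and those of $\pi_1$ preserving the result $\result{e}$. As the reduced cut is internal---its premises are never conclusions of the net---preservation of the values on the conclusions $c_1, \dots, c_n$ is exactly $\sm{\pi} = \sm{\pi_1}$.

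The axiom and multiplicative cases are local and essentially forced. For $(ax)$, with $a, b$ the axiom conclusions and $b, c$ the cut premises, an experiment satisfies $e(a) = e(b)^\perp$ and $e(b) = e(c)^\perp$, so by involutivity of $(\ )^\perp$ one gets $e(a) = e(c)$; identifying $a$ with $c$ after reduction therefore transports $e$ to an experiment of $\pi_1$ with the same result, and conversely. For $(\ltens/\lpar)$ the cut forces $e(a_1) = e(b_1)^\perp$ and $e(a_2) = e(b_2)^\perp$, which are precisely the two duality conditions attached to the two cuts created by the step, so the very same labelling of edges is an experiment of both $\pi$ and $\pi_1$. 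For $(\lone/\lbot)$ the forced values $\sequence{+,\ast}$ and $\sequence{-,\ast}$ are dual and the two edges simply disappear. In all three cases the correspondence is a result-preserving bijection.

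The real work is the $(!/?)$ case. Let $o$ be the reduced $!$-link with box $\pi^o$ of logical conclusion $c^o$, and $w$ the dual $?$-link with $k \geq 0$ premises. An experiment $e$ of $\pi$ assigns to $o$ a multiset $\multiset{e^o_1, \dots, e^o_n}$ of experiments of $\pi^o$, forcing the value on the main conclusion of $o$ to be $\sequence{+, \multiset{e^o_1(c^o), \dots, e^o_n(c^o)}}$, while the conclusion of $w$ carries $\sequence{-, \multiset{x_1, \dots, x_k}}$, with $x_i$ read off the premise of the $i$-th $\flat$-node feeding $w$. Duality across the cut forces the multiset equality $\multiset{e^o_1(c^o), \dots, e^o_n(c^o)} = \multiset{x_1^\perp, \dots, x_k^\perp}$, hence $n = k$ together with a bijection $\sigma$ pairing each copy index $i$ with an experiment index such that $e^o_{\sigma(i)}(c^o) = x_i^\perp$. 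The step dispatches $k$ copies of $\pi^o$, each cut against the former premise $c_i$ of $w$; I would build an experiment $e'$ of $\pi_1$ by using $e^o_{\sigma(i)}$ on the $i$-th copy---so that its main conclusion $x_i^\perp$ is dual to $e(c_i) = x_i$---and leaving every other label unchanged, checking that the auxiliary conclusions of $o$, now duplicated $k$ times and gathered by the surrounding $?$-links, recombine under the summation rule of Definition~\ref{def:experiment} to the same values as in $\pi$. The converse direction reassembles the $k$ copy-experiments into a single multiset assigned to $o$. The erasing subcase $k = 0$ forces $n = 0$, i.e.\ $e(o) = \multiset{}$ (permitted for $\sm{}$-experiments), and both the box and $w$ vanish, leaving the remaining labelling untouched.

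The main obstacle is the bookkeeping of the $(!/?)$ step: one must check that some matching $\sigma$ extracted from the multiset equality yields a genuine experiment of $\pi_1$, that conversely every experiment of $\pi_1$ arises this way, and that the redistribution of the auxiliary conclusions through duplication is faithfully tracked by the summation rule across the depth induction. The involutivity of $(\ )^\perp$, and more finely the property $x \neq x^\perp$, is what makes the duality conditions translate cleanly between $\pi$ and $\pi_1$. Once the one-step correspondence is established in every case, preservation of $\result{e}$ gives $\sm{\pi} = \sm{\pi_1}$ for a single step, and the opening induction on length upgrades this to arbitrary $\cutred$.
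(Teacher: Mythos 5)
Your overall strategy---induction on the length of the reduction, then on $\depth{\pi}$ to reduce to a cut at depth $0$, then a case analysis building result-preserving correspondences between $\sm{}$-experiments---is exactly the standard route: the paper itself gives no self-contained argument here but defers to Theorem 11 of~\cite{CarvPagTdF10}, which is an adaptation of Girard's original proof along precisely these lines. Your $(ax)$, $(\ltens/\lpar)$ and $(\lone/\lbot)$ cases are correct as stated, and the propagation of the one-step correspondence through the $!$-link clause at positive depth is also the right mechanism.

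There is, however, a genuine gap in the $(!/?)$ case, which is the only hard case. You assume that the conclusion of the $?$-link $w$ carries $(-,\multiset{x_1,\ldots,x_k})$ with exactly one element per premise of $w$, ``read off the premise of the $i$-th $\flat$-node'', so that duality forces $n=k$ and a bijection $\sigma$. This holds only when every $\flat$-node associated with a premise of $w$ lies at the same depth as the cut. In general these $\flat$-nodes sit inside $!$-boxes: the paper explicitly warns that the $\flat$-node and the $?$-node associated with a structural edge may have different depths, and the caption of Figure~\ref{fig:cut} states that the copies of $\pi^o$ are dispatched \emph{inside} the boxes containing those $\flat$-nodes. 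In that situation the value of the $i$-th premise of $w$ is $(-,\mu_i)$ where $\card{\mu_i}$ is the number of copies of the $i$-th $\flat$-node determined by the experiment's choices on the surrounding boxes; it can be $0$, $1$ or larger, so $n=\sum_{i=1}^{k}\card{\mu_i}$ need not equal $k$, there is no bijection between premises of $w$ and elements of $e(o)$, and ``using $e^o_{\sigma(i)}$ on the $i$-th copy'' is not meaningful as stated: each copy of $\pi^o$ lives inside box contents which themselves receive a whole multiset of experiments, so the $n$ experiments in $e(o)$ must be distributed among all the nested copies (and, conversely, collected from all of them when passing from an experiment of $\pi_1$ back to one of $\pi$). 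This redistribution across nested boxes is the actual content of Girard's argument; your sketch covers only the depth-$0$ special case. A minor further point: what your duality computations use is involutivity of $(\ )^\perp$; the property $x\neq x^\perp$ that you invoke plays its role elsewhere (it makes clashes semantically visible), not in this correspondence.
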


\begin{proof}
See the proof of Theorem 11 p. 1891 of~\cite{CarvPagTdF10}, which is itself an adaptation of the original proof of~\cite{ll}.
\end{proof}

The newly defined $\smbis{}$-interpretation is invariant w.r.t.\ \emph{non erasing} cut elimination; the reader can check that $\smbis{}$-interpretation \emph{is not} invariant w.r.t.\ some erasing steps. We have the following proposition, which is an immediate consequence of Lemma~\ref{lemma : key-lemma : strat} of Section~\ref{sect:SN}:

\begin{proposition}\label{prop:invariancesmbis}
For $\pi$ and $\pi_1$ nets: if $\pi \nonerasingred \pi_1$, then $\smbis{\pi} = \smbis{\pi_1}$.
\end{proposition}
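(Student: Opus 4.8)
The plan is to prove the stronger statement that a single non-erasing reduction step $\pi \onenonerasing \pi_1$ induces a correspondence $\Phi$ between the $\smbis{}$-experiments of $\pi$ and those of $\pi_1$ which preserves \emph{both} the result and the multiset of recorded weakenings, i.e.\ $\result{\Phi(e)} = \result{e}$ and $\weakeningsofexperiment{\Phi(e)} = \weakeningsofexperiment{e}$, and which is onto in both directions. Since an element of the $\smbis{}$-interpretation is exactly a pair $(\result{e}, \weakeningsofexperiment{e})$, such a value-preserving correspondence yields $\smbis{\pi} = \smbis{\pi_1}$ for one step; the general statement for $\nonerasingred$ then follows by a trivial induction on the length of the reduction sequence, the empty sequence giving equality by reflexivity. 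This is the same strategy used for the $\sm{}$-interpretation in Theorem~\ref{th:invariance}; the novelty is that we must additionally track the $\weakeningsofexperiment{}$ component.

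I would build $\Phi$ by induction on the depth of the reduced cut-link $t$. If $\depth{t} > 0$, then $t$ lies inside the box $\pi^o$ of some $!$-link $o$ of $\ground{\pi}$; an $\smbis{}$-experiment $e$ assigns to $o$ a \emph{non-empty} multiset $\multiset{e^o_1, \ldots, e^o_m}$ of $\smbis{}$-experiments of $\pi^o$, and I would apply the induction hypothesis to $\pi^o$ componentwise. Non-emptiness of the multiset is preserved, the result is untouched, and since $\weakeningsofexperiment{e}$ accumulates $\sum_j \weakeningsofexperiment{e^o_j}$ together with contributions that an internal reduction of the box leaves fixed, the weakenings are preserved too. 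This reduces everything to the case $\depth{t} = 0$, which I would settle by case analysis on the type of $t$ following Figures~\ref{fig:cut:multiplicatives} and~\ref{fig:cut}. For the $(ax)$, $(\otimes/\lpar)$ and $(1/\bot)$ cases the correspondence is essentially the identity on the values that $e$ assigns to the surviving edges: the duality and splitting conditions of Definition~\ref{def:experiment} make the transported labelling the unique $\smbis{}$-experiment of $\pi_1$ with the prescribed values, no box is created or destroyed, and no $0$-ary $?$-link is affected, so $\weakeningsofexperiment{}$ is literally unchanged.

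The real work is the $(!/?)$ case, where a $!$-link $o$ is cut against a $?$-link $w$ of arity $k$. Here I would use that the duality condition at the cut forces $e(o)$ to be a multiset $\multiset{e^o_1, \ldots, e^o_k}$ of cardinality exactly $k$: each of the $k$ premises of $w$ comes from a $\flat$-node and contributes a singleton multiset, and these must match the labels $e^o_i(c^o)$ on the logical conclusion $c^o$ of $\pi^o$. Fixing such a matching, $\Phi(e)$ assigns to the $i$-th dispatched copy of $\pi^o$ the experiment $e^o_i$ and propagates the labels along the created cuts; one checks that this is a well-defined $\smbis{}$-experiment of $\pi_1$ (the $!$-links of the copies again receive non-empty multisets) and that the construction is invertible modulo the reordering inherent in the multiset identification.

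The main obstacle, and the point where the hypothesis that the step is \emph{non-erasing} is indispensable, is the bookkeeping of $\weakeningsofexperiment{}$ in this last case. Because the step is non-erasing we have $k \geq 1$, so all $k$ copies of $\pi^o$ genuinely occur in $\pi_1$ and $\sum_{i \leq k} \weakeningsofexperiment{e^o_i}$ is carried over intact; moreover $w$ has premises and therefore never itself contributes to $\weakeningsofexperiment{}$, so nothing is gained or lost and $\weakeningsofexperiment{\Phi(e)} = \weakeningsofexperiment{e}$. By contrast, were the $(!/?)$ step erasing ($k = 0$), the box $\pi^o$ together with all weakenings recorded inside its experiments would vanish, and in fact no $\smbis{}$-experiment could assign the required non-empty multiset to $o$ in the first place; this is precisely why the $\smbis{}$-interpretation is \emph{not} invariant under erasing steps (cf.\ Remark~\ref{rem:exp+expbis}) and why the proposition is stated for $\nonerasingred$. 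I expect that, once made precise, this value-preserving bijection is exactly the qualitative shadow of the forthcoming Lemma~\ref{lemma : key-lemma : strat} of Section~\ref{sect:SN}, which measures the size difference between an experiment and its transported image while leaving result and weakenings invariant; this is what makes the proposition an immediate consequence of that lemma.
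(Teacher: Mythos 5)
Your overall architecture is the right one, and it is essentially the paper's: the paper obtains this proposition as an immediate consequence of Lemma~\ref{lemma : key-lemma : strat}, whose items 1 and 2 provide, for a single non-erasing step and in both directions, an experiment with the same pair $(\result{e}, \weakeningsofexperiment{e})$ (set equality for one step, then induction on the length of the reduction sequence); that lemma is in turn proved by induction on the depth of the reduced cut with a case analysis adapted from~\cite{CarvPagTdF10} --- exactly your plan, minus the size bookkeeping.

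However, your treatment of the crucial $(!/?)$ case contains a genuine error. You claim that the duality condition at the cut forces $e(o)$ to have cardinality exactly $k$, because ``each of the $k$ premises of $w$ comes from a $\flat$-node and contributes a singleton multiset''. This is false in general: a premise of the $?$-link $w$ is a structural edge, and the $\flat$-node associated with it may lie at arbitrary depth --- in particular that premise can be an auxiliary conclusion of a $!$-link of $\ground{\pi}$, in which case the multiset $\mu_i$ it carries is a sum over all the experiments chosen (recursively) for that box and can have any cardinality $\geq 1$. The cut condition only forces $\card{e(o)} = \sum_{i=1}^{k} \card{\mu_i} \geq k$, with equality exactly when every $\mu_i$ is a singleton. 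Consequently your definition of $\Phi(e)$ --- ``assign to the $i$-th dispatched copy of $\pi^o$ the experiment $e^o_i$'' --- is not well defined: the $i$-th copy of $\pi^o$ is placed \emph{inside} the boxes containing the $\flat$-node associated with the $i$-th premise of $w$, so it must receive $\card{\mu_i}$ experiments, distributed through the nested box structure so as to match $\mu_i$; symmetrically, the converse direction must merge such nested families back into a single multiset $e(o)$. This splitting/merging of multisets of box experiments across the $k$ copies and the nesting is precisely the combinatorial core of the invariance argument (it is what the paper delegates to the ``straightforward adaptation'' of~\cite{CarvPagTdF10}), and your $\weakeningsofexperiment{}$-bookkeeping, stated for one experiment per copy, has to be redone for the corrected correspondence (it does go through, since each $\weakeningsofexperiment{e^o_j}$ is counted exactly once on both sides). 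The rest of your proposal --- the depth induction, the multiplicative and axiom cases, the role of non-erasingness ($k \geq 1$), and the final induction on the reduction sequence --- is sound.
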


\section{Qualitative account}\label{sect:SN}

We present in this section our main qualitative result, contained in Corollary~\ref{corollary : cut strongly normalizable}. The first thing to notice here is that we cannot expect the exact analogue of the qualitative result proven in~\cite{CarvPagTdF10} on weak normalization (that is here recalled in Theorem~\ref{th:qualitatifWN}): this is because there exist nets $\pi$ and $\pi'$ such that $\sm{\pi}=\sm{\pi'}$ and $\pi$ is strongly normalizing while $\pi'$ is not, which clearly shows that there is no hope to extract the information on the strong normalizability of a net from its $\sm{}$-interpretation (see Remark~\ref{rem:NoSNsem} for a precise example of this phenomenon). We can nevertheless answer Question~\ref{uno1bis} raised in the introduction of the paper, thanks to the newly defined $\smbis{}$-interpretation, as follows:
\begin{itemize}
\item
we first prove that \emph{for a cut-free net} $\pi$ one can recover $\smbis{\pi}$ from $\sm{\pi}$ (Subsection~\ref{subsect:Sem+Sembis}, Proposition~\ref{prop:SembisFromSem})
\item
we then show how one can extract from the $\smbis{}$-interpretation the information that cannot be extracted from the $\sm{}$-interpretation (Subsection~\ref{subsect:SNsem}, Theorem~\ref{theorem:qualitativeSN})
\item
by combining the two previous points and starting from the two (good old) $\sm{}$-interpretations of two cut-free nets $\pi$ and $\pi'$, we can compute $\smbis{\pi}$ and $\smbis{\pi'}$, which allows to ``predict'' whether or not the net obtained by cutting $\pi$ and $\pi'$ is strongly normalizable (Corollary~\ref{corollary : cut strongly normalizable}).
\end{itemize}
In Subsection~\ref{subsection:conservation}, we give a variant of the standard proof of strong normalization for $MELL$ (\cite{ll,phddanos}). The interesting point is the alternative proof of the Conservation Theorem (here Theorem~\ref{theorem: conservation}), which is an immediate consequence of the qualitative results presented in Subsection~\ref{subsect:SNsem}.

\begin{figure}
\centering
\scalebox{\scalefact}{\input{examplenoWN.pstex_t}}
\hspace{-5pt}
\caption{Example of a non-normalizable net.}\label{fig:exanetnoWN}
\hrulefill
\end{figure}

\subsection{Two interpretations of nets}\label{subsect:Sem+Sembis}

Of course, the $\sm{}$-interpretation cannot characterize strongly normalizable nets, as the following remark shows.

\begin{remark}\label{rem:NoSNsem}
It is well-known that there are non-normalizable \emph{untyped} nets. A famous example is the net corresponding to the untyped $\lambda$-term $(\lambda x.xx)(\lambda x.xx)$ (see \cite{phddanos}, \cite{phdregnier}). We give in Figure~\ref{fig:exanetnoWN} a slight variant (which is not a $\lambda$-term), due to Mitsu Okada. The reader can check that this net reduces to itself by one $(!/?)$ step and one $(ax)$ step\footnote{This is not relevant for the purpose of this example, but notice that by Theorem~\ref{th:qualitatifWN} (proven in~\cite{CarvPagTdF10} and recalled in the following Subsection~\ref{subsect:SNsem}) the $\sm{}$-interpretation of the net in Figure~\ref{fig:exanetnoWN} is empty; a fact which can obviously be also checked directly on the net itself.}. 

Now, consider as net $\pi$ the net consisting of a unique $1$-link, and as net $\pi'$ the net of Figure~\ref{figure: pi'} consisting of a $1$-link and a $!$-link $o$ without auxiliary conclusions and having one main conclusion cut against the conclusion of a $0$-ary $?$-link, where the box $\pi^{o}$ is the net of Figure~\ref{fig:exanetnoWN} to which one adds (for example) a $1$-link, whose conclusion is the unique conclusion of $\pi^{o}$. The net $\pi$ is cut-free and thus strongly normalizable, while the net $\pi'$ is normalizable (just reduce the unique -erasing- cut-link of $\pi'$, which yields the net $\pi$), but not strongly normalizing since $\pi^{o}\cutred\pi^{o}$. On the other hand, clearly $\sm{\pi}=\sm{\pi'}$ (using Theorem~\ref{th:invariance} since $\pi'\onecut\pi$, but also by a straightforward computation one can check that $\sm{\pi}=\sm{\pi'}=\{(+,\ast)\}$).
\end{remark}

\begin{figure}
\centering
\scalebox{\scalefact}{\input{examplenoSN.pstex_t}}
\hspace{-5pt}
\caption{The net $\pi'$ of Remark~\ref{rem:NoSNsem}: an exemple of normalizable net that is not strongly normalizable}\label{figure: pi'}
\hrulefill
\end{figure}

We use in the sequel the obvious notion of substitution, precisely defined as follows:

\begin{definition}[Substitution]\label{def:substitution}
A \emph{substitution} is a function $\sigma: D\rightarrow D$ induced by a function $\sigma^A : A \rightarrow D$ and defined by induction on the rank of the elements of $D$, as follows (as usual $ p \in \{ +, - \}$ and $a\in A$):
$$\begin{array}{rclcrclcrcl}
\sigma\seq{+,a} & \mathrel{:=} & \sigma^{A}(a) & & \sigma\seq{-,a} & \mathrel{:=} & {\sigma^{A}(a)}^\perp\\
\sigma\seq{p,\ast} & \mathrel{:=} & \seq{p,\ast} \\
\sigma\seq{p,x,y} & \mathrel{:=} & \seq{p,\sigma(x),\sigma(y)} & & \sigma\seq{p,[x_1,\dots,x_n]} & \mathrel{:=} & \seq{p,[\sigma(x_1),\dots,\sigma(x_n)]}
\end{array}$$
We denote by $ \mathcal{S}$ the set of substitutions. If $\mathbf{y} = \seq{x_1,\dots,x_n} \in D^n $, we set $\sigma(\mathbf{y})\mathrel{:=}\seq{\sigma(x_1),\dots,\sigma(x_n)}$.
\end{definition}

An immediate (but important) property mentioned in~\cite{CarvPagTdF10} is that the $\sm{}$-interpretation of a $\flat$-net is closed by substitution. This is still the case for the $\smbis{}$-interpretation of a $\flat$-net.

\begin{lemma}\label{lemma : closed by substitution}
Let $\pi$ be a $\flat$-net. For every $\smbis{}$-experiment $e'$ of $\pi$, for every $\sigma \in \mathcal{S}$, there is a $\smbis{}$-experiment $e$ of $\pi$ such that $(\sigma(\vert e' \vert), \sigma(\mathcal{W}(e'))) = (\vert e \vert, \mathcal{W}(e))$ and $e \sim e'$.
\end{lemma}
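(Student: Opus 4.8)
The plan is to prove this by induction on $\depth{\pi}$, mirroring the inductive structure of the definition of an $\smbis{}$-experiment. Given the experiment $e'$ and the substitution $\sigma$, I will construct $e$ by applying $\sigma$ ``pointwise'' to the labels assigned by $e'$: for every edge $a$ of $\ground{\pi}$, set $e(a) := \sigma(e'(a))$, and for every $!$-link $o$ of $\ground{\pi}$ with $e'(o) = \multiset{e'^o_1, \ldots, e'^o_k}$, recursively apply the induction hypothesis to each $e'^o_i$ (a $\smbis{}$-experiment of the box $\pi^o$, which has strictly smaller depth) to obtain experiments $e^o_i$, and set $e(o) := \multiset{e^o_1, \ldots, e^o_k}$.

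The core of the argument is to check that this $e$ is a genuine $\smbis{}$-experiment, i.e.\ that it satisfies all the conditions of Definition~\ref{def:experiment}. This amounts to verifying that $\sigma$ commutes with each of the local constraints. The key observation, to be established first, is that $\sigma$ commutes with the negation $(\ )^\perp$, that is $\sigma(x^\perp) = \sigma(x)^\perp$ for every $x \in D$ (a routine induction on $\rank{x}$, using the definitions of $\sigma$ and $(\ )^\perp$; note the two clauses $\sigma\seq{+,a} = \sigma^A(a)$ and $\sigma\seq{-,a} = {\sigma^A(a)}^\perp$ are precisely what makes the atomic case work). Granting this, the $ax$/cut condition $e(a) = e(b)^\perp$ is preserved, and the $\ltens$, $\lpar$, $\flat$, $1$, $\bot$ and $?$-conditions are all preserved because $\sigma$ is defined to commute structurally with pairing, with the multiset constructor, and with multiset sums (since $\sigma\seq{p,[x_1,\dots,x_n]} = \seq{p,[\sigma(x_1),\dots,\sigma(x_n)]}$ distributes over $+$). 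For the $0$-ary $?$-link I use the fact that $\sigma$ fixes the polarity $-$ and sends a multiset to a multiset, so the $\smbis{}$-condition $e(c) = (-,a)$ with $a \in \finitemultisets{D}$ is maintained; crucially the $\smbis{}$-constraint $k > 0$ on multisets assigned to $!$-links is preserved because $\sigma$ does not change the cardinality of those multisets. The $!$-link conditions relating $e(c)$ to the labels $e^o_i(c^o)$ of the box experiments reduce to the induction hypothesis together with the same structural commutation.

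It then remains to verify the three claimed conclusions. The equalities $\sigma(\result{e'}) = \result{e}$ and $\sigma(\weakeningsofexperiment{e'}) = \weakeningsofexperiment{e}$ follow by unwinding the definitions: $\result{e} = \sequence{e(c_1),\dots,e(c_n)} = \sequence{\sigma(e'(c_1)),\dots,\sigma(e'(c_n))} = \sigma(\result{e'})$ directly, while for $\weakeningsofexperiment{}$ I argue by the same induction on depth, checking that $\sigma$ distributes over the sum defining $\weakeningsofexperiment{e}$ at depth $0$ and invoking the induction hypothesis on the box experiments for the recursive summand. Finally $e \sim e'$ holds by construction and another induction on depth: at each $0$-ary weakening-link $\sigma$ preserves the cardinality of the assigned multiset (since it acts componentwise), and at each $!$-link $e(o)$ and $e'(o)$ have the same cardinality $k$ with $e^o_i \sim e'^o_i$ for every $i$ by the induction hypothesis, which is exactly the definition of $\sim$ in Definition~\ref{def:equivalenceSIM}.

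The main obstacle, and the only genuinely delicate point, is the atomic base case where the commutation of $\sigma$ with $(\ )^\perp$ must be set up correctly, together with the care needed on the $0$-ary $?$-links: one must confirm that $\sigma$ of an element of the form $(-,a)$ with $a \in \finitemultisets{D}$ is again of that exact form, so that the defining $\smbis{}$-condition (as opposed to the stricter $\sm{}$-condition $e(c) = (-,\multiset{})$) is respected and no spurious constraint on cardinalities is introduced. Everything else is a bookkeeping induction with no real difficulty, since $\sigma$ was designed precisely to be compatible with all the constructors of $D$.
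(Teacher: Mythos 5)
Your proof is correct and follows essentially the same route as the paper's: both construct $e$ by composing $e'$ with $\sigma$ (pointwise on the edges of $\ground{\pi}$, recursively inside boxes), and both hinge on the commutation $\sigma(x^\perp)=\sigma(x)^\perp$, which is needed exactly at axiom- and cut-links, every other condition of Definition~\ref{def:experiment} being preserved structurally. The only difference is bookkeeping: the paper inducts on $\sizenet{\pi}$ where you induct on $\depth{\pi}$, which changes nothing of substance.
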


\begin{proof}
The proof is by induction on $\sizenet{\pi}$. In the two following cases:
\begin{itemize}
\item $\pi$ is an axiom
\item or in the ground-structure of $\pi$, there is a cut-link 
\end{itemize}
we use the property that, for any $x \in D$, for any $\sigma \in \mathcal{S}$, we have $\sigma(x^\perp) = \sigma(x)^\perp$. 

The other cases are trivial.
\end{proof}

We now define the function allowing to compute $\smbis{\pi}$ from $\sm{\pi}$, when $\pi$ is cut-free (Proposition~\ref{prop:SembisFromSem}). There are two simple ideas underlying the definition: 
\begin{itemize}
\item
since $\smbis{}$-experiments never associate the empty multiset of experiments to a $!$-link, we will never have $(+,\multiset{})\in\smbis{\pi}$, so that we can restrict to the exhaustive part of $\sm{\pi}$
\item
since $\smbis{}$-experiments allow to associate with the conclusion of a $0$-ary $?$-link $(-, a)$ for any $a \in \finitemultisets{D}$, to recover $\smbis{\pi}$ from $\sm{\pi}$ (actually from $\exhaustive{\sm{\pi}}$), we have to substitute in $\exhaustive{\sm{\pi}}$ every occurrence of $(-,\multiset{})$ with $(-, a)$ for all the possible $a \in \finitemultisets{D}$ (and of course we also have to keep track of those $a$ in $\mathcal{W}$).
\end{itemize}

\begin{definition}\label{def:F}
We define the function $F:(\exhaustive{D})^{n}\to\mathcal{P}_{f}(D^{n}\times\finitemultisets{D})$ by stating 
$$F(\sequence{x_{1},\ldots,x_{n}})= \{(\sequence{y_{1},\ldots,y_{n}}, \sum_{i=1}^n \mathcal{W}_i) \: ; \: (y_1, \mathcal{W}_1) \in F(x_{1}), \ldots, (y_n, \mathcal{W}_n) \in F(x_n) \}$$ 
and $F : \exhaustive{D} \to \mathcal{P}_{f}(D\times\finitemultisets{D})$\footnote{We keep the same notation for $F:(\exhaustive{D})^{n}\to\mathcal{P}_{f}(D^{n}\times\finitemultisets{D})$ and $F : \exhaustive{D} \to \mathcal{P}_{f}(D\times\finitemultisets{D})$.} is defined by induction on the rank of $x$\footnote{That is the least number $n\in\integers$ s.t. $x\in D_n$ (see Definition~\ref{def:D}).}:
\begin{minilist}
\item
if $x \in D_{0}$, then $F(x)=\{ (x, \multiset{}) \}$
\item
if $x = (\iota,y,y')$, then $F(x)=\{((\iota,z,z'), \mathcal{W} + \mathcal{W}') :\ (z, \mathcal{W}) \in F(y)\textrm{ and } (z', \mathcal{W}') \in F(y')\}$
\item if $x = (+, \beta)$ where $\beta = \multiset{x_{1}, \ldots, x_{k}} \in \finitemultisets{D^\textrm{ex}}$, then $F(x)=\{ ((+,[x'_{1},\ldots,x'_{k}]), \sum_{i=1}^k \mathcal{W}_i):\ (x'_{i}, \mathcal{W}_i)\in F(x_{i})\}$\footnote{Notice that since $x\in D^\textrm{ex}$, one has $k \geq 1$.}
\item if $x=(-,\beta)$ where $\beta=[x_{1},\ldots,x_{k}]\in\mathcal{M}_{\textrm{fin}}(D^\textrm{ex})$, then $F(x) = \{((-,[x'_{1},\ldots,x'_{k}]), \sum_{i=1}^k \mathcal{W}_i):\ (x'_{i}, \mathcal{W}_i) \in F(x_{i})\}$ if $k>0$ and $F(x)= \{ ((-, a), a) :\ a \in \finitemultisets{D} \}$ if $k=0$.
\end{minilist}
\end{definition}

An atomic experiment (see next Definition~\ref{definition:AtomicExperiments}) associates with every axiom link an element of $\{ +, - \} \times A$, and it is rather clear from Definition~\ref{def:experiment} that using the notion of substitution, from atomic experiments of $\pi$ one can recover any experiment of $\pi$. This remark can be shifted from experiments to points of the interpretation: by suitably defining (Definition~\ref{def:atomic}) the atomic part of the intepretation, one can recover $\sm{\pi}$ from $\atomic{\sm{\pi}}$ and $\smbis{\pi}$ from $\atomic{\smbis{\pi}}$. The notion of exhaustive $\sm{}$-experiment directly comes from~\cite{CarvPagTdF10}.

\begin{definition}\label{definition:AtomicExperiments}
For any net $\pi$, we define, by induction of $\depth{\pi}$, what means to be \emph{atomic} for any $\sm{}$-experiment (resp.\ $\smbis{}$-experiment) of $\pi$:
\begin{itemize}
\item 
An $\sm{}$-experiment (resp.\ $\smbis{}$-experiment) of a net $\pi$ of depth $0$ is said to be \emph{atomic} if it associates with every conclusion of every axiom of $\ground{\pi}$ an element of $\{ +, - \} \times A$.
\item 
An $\sm{}$-experiment (resp.\ $\smbis{}$-experiment) of a net $\pi$ of depth $n+1$ is said to be \emph{atomic} if 
\begin{itemize}
\item it associates with every conclusion of every axiom of $\ground{\pi}$ an element of $A$
\item and it associates with every $!$-link $o$ of $\ground{\pi}$ a finite multiset of atomic $\sm{}$-experiments (resp.\ $\smbis{}$-experiment) of $\pi^o$.
\end{itemize}
\end{itemize}

An $\sm{}$-experiment $e$ of a net $\pi$ is \emph{exhaustive} when $\result{e}\in(\exhaustive{D})^{n}$ for some $n\geq 0$.
\end{definition}

The following definition allows in particular to define the subset $\atomic{\sm{\pi}}$ of the ``atomic'' elements of $\sm{\pi}$, which will be used in Proposition~\ref{prop:SembisFromSem}. 

\begin{definition}\label{def:atomic}
Given $E \in \subsets{D^{n}}$ for some $n \geq 1$, we say that $r \in E$ is \emph{$E$-atomic} when for every $r'\in E$ and every substitution $\sigma$ such that $\sigma(r')=r$ one has $\sigma(\gamma)\in A$ for every $\gamma \in A$ that occurs in $r'$. For $E \in \subsets{D^{n}}$, we denote by ${E}_\textit{At}$ the subset of $E$ consisting of the $E$-atomic elements.
\end{definition}

For cut-free nets, the next lemma uses the ad hoc function introduced in Definition~\ref{def:F} to recover the atomic part of $\smbis{\pi}$ from the exhaustive atomic part of $\sm{\pi}$.

\begin{lemma}\label{lemma:SembisAtFromSemAt}
Let $\pi$ be a cut-free net. Then $\{ (\result{e}, \weakeningsofexperiment{e}) \: ; \: e \textrm{ is an atomic } \experimentbis{}{\pi} \} = \bigcup_{\textbf{x} \in \exhaustive{(\atomic{\sm{\pi}})}} F(\textbf{x})$.
\end{lemma}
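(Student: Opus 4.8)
The plan is to prove the two inclusions separately, both by induction on $\depth{\pi}$ (equivalently on $\sizenet{\pi}$, following the style of Lemma~\ref{lemma : closed by substitution}), exploiting the fact that the function $F$ of Definition~\ref{def:F} is \emph{defined} by exactly the same case analysis on the structure of $D$ that governs how an experiment labels the conclusions of $\pi$. The key observation driving everything is that for a cut-free net the result $\result{e}$ of an experiment is built up inductively from the atomic labels by the clauses of Definition~\ref{def:experiment}, and the only place where an $\smbis{}$-experiment has genuine freedom not present in an $\sm{}$-experiment is at a $0$-ary $?$-link, where it may pick any $(-,a)$ with $a \in \finitemultisets{D}$ while the $\sm{}$-experiment is forced to pick $(-,\multiset{})$. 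This is precisely the freedom encoded in the last clause of $F$ (the case $k=0$), and it is also exactly what the exhaustive restriction $\exhaustive{(\cdot)}$ on the $\sm{}$-side captures: an exhaustive atomic $\sm{}$-experiment is one whose $0$-ary-$?$-labels $(-,\multiset{})$ are the \emph{only} occurrences of $(+,\multiset{})^\perp$-shaped empty multisets, and these are the positions $F$ reinflates.

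First I would set up the induction so that the statement about \emph{results} is strengthened to a statement about experiments ``position by position''. Concretely, I would prove: for an atomic $\smbis{}$-experiment $e$ of $\pi$ there is a unique atomic $\sm{}$-experiment $\bar e$ of $\pi$ obtained by replacing every $0$-ary-$?$ label $(-,a)$ by $(-,\multiset{})$ and every nonempty box-multiset by a suitable sub-multiset, so that $\result{\bar e}$ is exhaustive and $(\result{e},\weakeningsofexperiment{e}) \in F(\result{\bar e})$; and conversely, given $\textbf{x} \in \exhaustive{(\atomic{\sm{\pi}})}$ witnessed by an atomic exhaustive $\sm{}$-experiment $e_0$ with $\result{e_0} = \textbf{x}$, every pair in $F(\textbf{x})$ is realized as $(\result{e},\weakeningsofexperiment{e})$ for some atomic $\smbis{}$-experiment $e$ agreeing with $e_0$ away from the $0$-ary $?$-links. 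The two directions give the two inclusions. The base case ($\depth{\pi}=0$) reduces to checking how $F$ acts on the multiplicative and $\flat$/$?$ clauses, and here the crucial compatibility is that $\weakeningsofexperiment{e}$ is defined as the sum over $0$-ary-$?$-conclusions of exactly the multisets $a$ that $F$ records in its weight component, so the $\mathcal{W}$-bookkeeping on both sides matches additively — this mirrors the $\sum \mathcal{W}_i$ in the tensor/par/multiset clauses of Definition~\ref{def:F}.

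For the inductive step at a $!$-link $o$, the label of a conclusion of $o$ is the multiset $[e^o_1(c^o),\dots,e^o_n(c^o)]$ of labels coming from the $n$ sub-experiments in the box, and $\weakeningsofexperiment{e}$ adds $\sum_i \weakeningsofexperiment{e^o_i}$. On the $F$-side this is matched by the $x=(+,\beta)$ clause, which runs $F$ independently on each element $x_i$ of the multiset and sums the weights; the constraint $k\geq 1$ (from $x\in\exhaustive{D}$, the footnote in Definition~\ref{def:F}) corresponds exactly to the constraint from Remark~\ref{rem:exp+expbis} that an $\smbis{}$-experiment can never assign the empty multiset of experiments to a $!$-link, so the two restrictions are in bijective agreement. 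Invoking the induction hypothesis on each box $\pi^o$ (of strictly smaller depth) closes this case, and one uses that atomicity is preserved passing to boxes by the very definition of atomic experiment (Definition~\ref{definition:AtomicExperiments}).

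The main obstacle I anticipate is bookkeeping rather than conceptual: making the position-by-position correspondence between an $\smbis{}$-experiment and its ``sparing'' $\sm{}$-shadow rigorous enough that the $\mathcal{W}$-components provably coincide under $F$, especially handling nested boxes where a single $0$-ary $?$-link deep inside contributes both to the label of a box conclusion and to $\weakeningsofexperiment{}$ at the appropriate depth. One must check that the exhaustiveness hypothesis on $\textbf{x}$ is exactly what guarantees that the only $(-,\multiset{})$-occurrences eligible for reinflation are those genuinely coming from $0$-ary $?$-links (and not spuriously from, say, a box whose experiment-multiset happens to be empty) — but this is ruled out because cut-freeness together with exhaustiveness forbids $(+,\multiset{})$ anywhere, and the sign discipline of Definition~\ref{def:Dfunction} pins each empty-multiset occurrence to a unique syntactic origin. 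Once this alignment is established, both inclusions follow by routine induction, so I would spend the care precisely on stating the strengthened position-wise invariant that makes the $F$-recursion and the experiment-recursion literally the same recursion.
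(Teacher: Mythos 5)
Your proposal is correct and follows essentially the same route as the paper: the paper's proof likewise splits into the two inclusions and proves each by induction on $\sizenet{\pi}$, via exactly your two experiment-level statements (from an exhaustive atomic $\sm{}$-experiment $e$ and any $(\textbf{y},\mathcal{W})\in F(\result{e})$ one builds an atomic $\smbis{}$-experiment realizing that pair, and conversely from an atomic $\smbis{}$-experiment one builds an exhaustive atomic $\sm{}$-experiment whose $F$-image contains its result--weight pair). One caution on wording: your phrase ``replacing \dots every nonempty box-multiset by a suitable sub-multiset'' should instead say that each box multiset is kept at the \emph{same} cardinality with its member experiments converted recursively --- genuinely shrinking it would change the multiset labelling the box's main conclusion and break the clause of $F$ for $(+,\beta)$, which preserves cardinality, as your own treatment of the $!$-link case correctly assumes.
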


\begin{proof}
To prove the inclusion $\bigcup_{\textbf{x} \in \exhaustive{(\atomic{\sm{\pi}})}} F(\textbf{x}) \subseteq \{ (\result{e}, \weakeningsofexperiment{e}) \: ; \: e \textrm{ is an atomic } \experimentbis{}{\pi} \}$, we prove, by induction on $\sizenet{\pi}$, that, for every exhaustive atomic $\sm{}$-experiment $e$ of $\pi$ and for every $(\textbf{y}, \mathcal{W}) \in F(\result{e})$, there exists an atomic $\smbis{}$-experiment $e'$ of $\pi$ such that $(\result{e'}, \mathcal{W}(e')) = (\textbf{y}, \mathcal{W})$.

To prove the inclusion $\{ (\result{e}, \weakeningsofexperiment{e}) \: ; \: e \textrm{ is an atomic } \experimentbis{}{\pi} \} \subseteq \bigcup_{\textbf{x} \in \exhaustive{(\atomic{\sm{\pi}})}} F(\textbf{x})$, we prove, by induction on $\sizenet{\pi}$, that, for every atomic $\smbis{}$-experiment $e'$ of $\pi$, there exists an exhaustive atomic $\sm{}$-experiment $e$ of $\pi$ such that $(\result{e'}, \mathcal{W}(e')) \in F(\result{e})$.
\end{proof}

\begin{proposition}\label{prop:SembisFromSem}
Let $\pi$ be a cut-free net. Then 
$\smbis{\pi} = \{ (\sigma(\textbf{y}), \sigma(\mathcal{W})) \: ; \: (\textbf{y}, \mathcal{W}) \in \bigcup_{\textbf{x} \in \exhaustive{(\atomic{\sm{\pi}})}} F(\textbf{x}) \textrm{ and } \sigma \in \mathcal{S} \}$.
\end{proposition}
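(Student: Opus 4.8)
The plan is to reduce Proposition~\ref{prop:SembisFromSem} to Lemma~\ref{lemma:SembisAtFromSemAt} by showing that both the left-hand side $\smbis{\pi}$ and the right-hand side are generated, via the action of substitutions, from the ``atomic'' data that the lemma already equates. The lemma gives us
$$\{ (\result{e}, \weakeningsofexperiment{e}) \: ; \: e \textrm{ is an atomic } \experimentbis{}{\pi} \} = \bigcup_{\textbf{x} \in \exhaustive{(\atomic{\sm{\pi}})}} F(\textbf{x}),$$
so it suffices to prove that $\smbis{\pi}$ is exactly the closure under $\mathcal{S}$ of the left-hand set of atomic $\smbis{}$-results, namely
$$\smbis{\pi} = \{ (\sigma(\textbf{y}), \sigma(\mathcal{W})) \: ; \: (\textbf{y}, \mathcal{W}) \in \{ (\result{e}, \weakeningsofexperiment{e}) \: ; \: e \textrm{ atomic } \experimentbis{}{\pi} \} \textrm{ and } \sigma \in \mathcal{S} \}.$$
Substituting the lemma into this identity yields the proposition verbatim. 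So the whole content is the claim that every $\smbis{}$-point of a cut-free net is the $\sigma$-image of an atomic one, and conversely.

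First I would prove the inclusion from right to left, which is the easy direction. Given an atomic $\smbis{}$-experiment $e'$ and a substitution $\sigma$, I need to produce an $\smbis{}$-experiment $e$ with $(\result{e}, \weakeningsofexperiment{e}) = (\sigma(\result{e'}), \sigma(\weakeningsofexperiment{e'}))$. This is essentially Lemma~\ref{lemma : closed by substitution} (closure under substitution): applying $\sigma$ pointwise to the $D$-labels of an atomic experiment gives another genuine $\smbis{}$-experiment, and because $\sigma$ commutes with $(\ )^\perp$ (the property $\sigma(x^\perp)=\sigma(x)^\perp$ recalled in that proof) the axiom and cut conditions of Definition~\ref{def:experiment} are preserved; since $\pi$ is cut-free the only real check is on axioms. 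The weakening multiset transforms correctly because $\weakeningsofexperiment{\cdot}$ is built additively from the labels on $0$-ary $?$-links, and $\sigma$ distributes over multiset sums.

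The interesting direction is left to right: every $\smbis{}$-experiment $e$ of a cut-free $\pi$ can be written as $\sigma$ applied to an atomic one. Here I would argue, by induction on $\depth{\pi}$ (or on $\sizenet{\pi}$), that one can ``atomize'' $e$: replace each element of $A$ occurring at an axiom by a fresh distinct variable of $A$, obtaining an atomic experiment $e'$, and let $\sigma$ be the substitution sending each fresh variable back to the compound element it replaced. The content of Definition~\ref{def:atomic} ($E$-atomicity) and of the notion of atomic experiment is precisely designed to make this work: the atomic experiment carries the ``shape'' while $\sigma$ reinstates the concrete values, and again commutation of $\sigma$ with $(\ )^\perp$ guarantees that the axiom constraints on the atomized experiment are satisfiable by distinct atoms. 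The $\smbis{}$-specific labels on $0$-ary $?$-links (the free multisets $a$) can be kept unchanged under atomization, since those are not constrained by duality, and $\weakeningsofexperiment{e}=\sigma(\weakeningsofexperiment{e'})$ follows.

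The main obstacle I anticipate is bookkeeping rather than a deep idea: one must check that the atomization is compatible with the $\smbis{}$-constraint that multisets of experiments attached to $!$-links are \emph{nonempty} (Remark~\ref{rem:exp+expbis}), and that the exhaustiveness hypothesis $\textbf{x}\in\exhaustive{(\atomic{\sm{\pi}})}$ on the right-hand side matches the fact that $\smbis{}$-experiments never produce $(+,\multiset{})$ — so that no point of $\smbis{\pi}$ is lost by restricting to the exhaustive atomic part of $\sm{\pi}$. Concretely, I would verify that an $\smbis{}$-experiment is automatically exhaustive and atomizes to an \emph{exhaustive atomic} $\sm{}$-experiment, so that its atomic core indeed lies in the index set $\exhaustive{(\atomic{\sm{\pi}})}$ of the union; and that conversely the substitution closure does not create spurious points outside $\smbis{\pi}$. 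Once these compatibility checks are in place, plugging Lemma~\ref{lemma:SembisAtFromSemAt} into the two inclusions gives the stated equality.
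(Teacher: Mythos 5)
Your proposal is correct and takes essentially the same route as the paper: the paper's proof likewise reduces the statement to Lemma~\ref{lemma:SembisAtFromSemAt}, identifying $\atomic{\smbis{\pi}}$ with the results of atomic $\smbis{}$-experiments and then invoking the earlier remark that substitutions allow one to recover $\smbis{\pi}$ from $\atomic{\smbis{\pi}}$. Your two-inclusion argument (Lemma~\ref{lemma : closed by substitution} for one direction, atomization via fresh distinct atoms for the other) merely spells out explicitly what the paper cites as that remark.
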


\begin{proof}
One has $\atomic{\smbis{\pi}}=\{ (\result{e}, \weakeningsofexperiment{e}) \: ; \: e \textrm{ is an atomic } \experimentbis{}{\pi} \}$. Then apply  Lemma~\ref{lemma:SembisAtFromSemAt} and remember we already noticed that substitutions allow to recover $\smbis{\pi}$ from $\atomic{\smbis{\pi}}$.
\end{proof}

\subsection{Characterizing strong normalization}\label{subsect:SNsem}

Now that we know how to compute $\smbis{\pi}$ from $\sm{\pi}$ (in the cut-free case), we show how to use the $\smbis{}$-interpretation in order to characterize strong normalization.\\ 
In~\cite{CarvPagTdF10}, given an $\sm{}$-experiment $e$ of a net $\pi$, we defined the notion of size of $e$ (denoted there by $s(e)$); the following definition extends this notion to $\smbis{}$-experiments (writing $\sizeexperiment{e}$ instead of $s(e)$). We also introduce for an $\smbis{}$-experiment $e$ of a net $\pi$ a new notion of size (denoted by $\sizeexperimentbis{e}$), which is crucial to establish our main results (see Lemma~\ref{lemma : key-lemma : strat}).

\begin{definition}[Size of experiments]\label{def:experimentsize}
For every $\flat$-net $\pi$, for every $\smbis{}$-experiment $e$ of $\pi$, we define, by induction on $\textrm{depth}(\pi)$, the \emph{size of $e$}, $\sizeexperiment{e}$ for short, as follows:
\vspace{-2pt}
$$\sizeexperiment{e} =  \sizenet{\ground{\pi}} + \sum_{o \in ! (\ground{\pi})} \, \sum_{e^o \in e(o)} \sizeexperiment{e^o} \enspace .$$
\vspace{-4pt}
We set $\sizeexperimentbis{e} = \sizeexperiment{e} + 2 \card{\weakeningsofexperiment{e}}$.
\end{definition}

\begin{remark}\label{remark:sizebis}
We have 
\begin{eqnarray*}
& & \sizeexperimentbis{e} \\
& = & \sizenet{\ground{\pi}} + 2 \sum_{\begin{array}{c} c \textrm{ is the conclusion of a $?$-link of $\ground{\pi}$ with no premise} \\ e(c) = (-, \mu) \end{array}} \card{\mu} \\
& & + \sum_{o \in !(\ground{\pi})}  \sum_{e^o \in e(o)} \sizeexperimentbis{e^o}
\end{eqnarray*}
\end{remark}

\begin{definition}
A $1$-$\smbis{}$-experiment $e$ of a $\flat$-net $\pi$ is a $\smbis{}$-experiment such that, for any $!$-link $o$ of $\ground{\pi}$, we have $e(o) = [e_1]$ with $e_1$ a $1$-$\smbis{}$-experiment of the box $\pi^o$ of $o$. 
\end{definition}

\begin{remark}\label{remark:1-experiments}
When such a $1$-$\smbis{}$-experiment $e$ of $\pi$ exists, we have $\sizenet{\pi} = \sizeexperiment{e} = \min \{ \sizeexperiment{e} \: ; \: e \textrm{ is an $\smbis{}$-experiment of $\pi$} \}$.

Every cut-free net $\pi$ has a $1$-$\smbis{}$-experiment: any choice of pair $\{\sequence{+,x},\sequence{-,x^\perp}\}$ of elements of $D$ for the $ax$-nodes of $\pi$ induces a $1$-$\smbis{}$-experiment of $\pi$. Since an $\smbis{}$-experiment is allowed to associate with the conclusion of any $0$-ary $?$-node any element of $D$, even when $\pi$ is $\nonerasing$-normal there always exists a $1$-$\smbis{}$-experiment of $\pi$; and this $1$-$\smbis{}$-experiment can also be chosen $w$-sparing (Definition~\ref{def:w-sparing}).
\end{remark}

Since the size $\sizeexperiment{e}$ of an $\smbis{}$-experiment $e$ of $\pi$ depends only on $\pi$ and on the ``number of copies'' chosen for the boxes of $\pi$ (i.e. the cardinalities of the multisets associated recursively with the $!$-links - in particular, the $\smbis{}$-size of an experiment does not depend on its behaviour on the axioms), and since two equivalent $\smbis{}$-experiments of $\pi$ ``take the same number of copies'' for every box of $\pi$, two equivalent $\smbis{}$-experiments clearly have the same $\sm{}$-size. But also, when $e \sim e'$ one has $\card{\weakeningsofexperiment{e}} = \card{\weakeningsofexperiment{e'}}$\footnote{Notice that we do not have (in general) $\weakeningsofexperiment{e}=\weakeningsofexperiment{e'}$.}, so that eventually $\sizeexperimentbis{e} = \sizeexperimentbis{e'}$:

\begin{fact}\label{fact:equivExpMemeTaille}
If $e$ and $e'$ are two $\smbis{}$-experiments of a $\flat$-net $\pi$, then from $e \sim e'$ it follows that $\sizeexperimentbis{e} = \sizeexperimentbis{e'}$.
\end{fact}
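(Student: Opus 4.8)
The plan is to strengthen the statement slightly and prove, by induction on $\depth{\pi}$, that $e \sim e'$ entails \emph{both} $\sizeexperiment{e} = \sizeexperiment{e'}$ \emph{and} $\card{\weakeningsofexperiment{e}} = \card{\weakeningsofexperiment{e'}}$. Granting these two equalities, the Fact is immediate: by definition $\sizeexperimentbis{e} = \sizeexperiment{e} + 2\card{\weakeningsofexperiment{e}}$, so the quantities computed from $e$ and from $e'$ coincide term by term. The whole argument is thus a routine structural induction, but it is worth splitting it into these two quantities because, as the footnote to the surrounding discussion warns, $\weakeningsofexperiment{e}$ and $\weakeningsofexperiment{e'}$ need \emph{not} be equal as multisets — only their cardinalities will match — so the bookkeeping must be carried out on $\card{\cdot}$ throughout.

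For the first equality I would unfold $\sizeexperiment{e} = \sizenet{\ground{\pi}} + \sum_{o \in !(\ground{\pi})} \sum_{e^o \in e(o)} \sizeexperiment{e^o}$. The leading term $\sizenet{\ground{\pi}}$ depends only on $\pi$, hence is common to $e$ and $e'$. For each $!$-link $o$ of $\ground{\pi}$, the definition of $\sim$ gives $e(o) = [e_1, \ldots, e_m]$ and $e'(o) = [e'_1, \ldots, e'_m]$ with the \emph{same} $m$ and $e_j \sim e'_j$ for every $j$; since each $\pi^o$ has strictly smaller depth, the induction hypothesis yields $\sizeexperiment{e_j} = \sizeexperiment{e'_j}$, and summing over $j$ and over $o$ gives $\sizeexperiment{e} = \sizeexperiment{e'}$.

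For the second equality I would unfold $\weakeningsofexperiment{e}$ according to its defining formula and take cardinalities, using that $\card{\cdot}$ is additive with respect to the multiset sum $+$: one gets $\card{\weakeningsofexperiment{e}} = \sum_{c} \card{\mu_c} + \sum_{o \in !(\ground{\pi})} \sum_{e^o \in e(o)} \card{\weakeningsofexperiment{e^o}}$, where $c$ ranges over the conclusions of the $0$-ary $?$-links of $\ground{\pi}$ and $e(c) = (-, \mu_c)$. Here the first clause of $\sim$ is exactly what is needed: for each such weakening-link it forces $e(c)$ and $e'(c)$ to carry multisets of the same cardinality, so the first sum is unchanged when passing from $e$ to $e'$. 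The second sum is handled just as before: the $!$-link clause of $\sim$ together with the induction hypothesis (now applied to obtain $\card{\weakeningsofexperiment{e_j}} = \card{\weakeningsofexperiment{e'_j}}$) makes it invariant. Hence $\card{\weakeningsofexperiment{e}} = \card{\weakeningsofexperiment{e'}}$, and combining with the first equality finishes the proof.

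I expect no real obstacle; the only point demanding care is precisely the one just highlighted, namely that $\sim$ was deliberately defined so as to constrain \emph{only the cardinalities} of the weakening multisets (and the cardinalities of the multisets of sub-experiments attached to $!$-links), never their actual contents. Consequently one must resist trying to equate $\weakeningsofexperiment{e}$ with $\weakeningsofexperiment{e'}$: the induction has to be phrased from the outset in terms of $\sizeexperiment{}$ and $\card{\weakeningsofexperiment{}}$, both of which are insensitive to the elements of $D$ chosen on axioms and on $0$-ary $?$-links and depend only on the common ``shape'' (the copy numbers) shared by $e$ and $e'$.
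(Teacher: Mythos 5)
Your proof is correct and takes essentially the same route as the paper: the paper justifies this Fact (in the paragraph immediately preceding its statement) by precisely your two observations, namely that $\sizeexperiment{e}$ depends only on $\pi$ and on the recursively chosen ``numbers of copies'' at the $!$-links, which $\sim$ forces to coincide, and that $e \sim e'$ yields $\card{\weakeningsofexperiment{e}} = \card{\weakeningsofexperiment{e'}}$ even though $\weakeningsofexperiment{e} = \weakeningsofexperiment{e'}$ may fail. Your induction on $\depth{\pi}$ is just the explicit formalization of that informal argument, with the same (correct) care to track only cardinalities of the weakening multisets.
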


We can now prove a crucial result which plays, in the framework of strong normalization, a similar role as the so-called ``Key-Lemma'' (Lemma 17 p.1893 and its variant Lemma 20 p.1896) of~\cite{CarvPagTdF10}.

\begin{lemma}\label{lemma : key-lemma : strat}
Let $\pi$ and $\pi_1$ be two nets such that $\pi \onenonerasing \pi_1$. Then 
\begin{enumerate}
\item for every $\experimentbis{e}{\pi}$, there exists an $\experimentbis{e_1}{\pi_1}$ such that $(\result{e}, \weakeningsofexperiment{e}) = (\result{e_1}, \weakeningsofexperiment{e_1})$ and 
$\sizeexperimentbis{e_1} < \sizeexperimentbis{e}$;
\item for every $\experimentbis{e_1}{\pi_1}$, there exists an $\experimentbis{e}{\pi}$ such that $(\result{e}, \weakeningsofexperiment{e}) = (\result{e_1}, \weakeningsofexperiment{e_1})$ and 
$\sizeexperimentbis{e_1} < \sizeexperimentbis{e}$.
\end{enumerate}

Moreover, if $\pi_1 = t(\pi)$ where $t$ a stratified non-erasing cut-link of $\pi$, then
\begin{itemize}
\item[1bis.]\label{item:1bis}
for every $\experimentbis{e}{\pi}$ such that $\sizeexperimentbis{e} = \min \{ \sizeexperimentbis{e} \: ; \: e \textrm{ is an } \experimentbis{}{\pi} \}$, there exists an $\experimentbis{e_1}{\pi_1}$ such that $(\result{e}, \weakeningsofexperiment{e}) = (\result{e_1}, \weakeningsofexperiment{e_1})$ and $\sizeexperimentbis{e_1} = \sizeexperimentbis{e} -2$;
\item[2bis.]\label{item:2bis}
for every $\experimentbis{e_1}{\pi_1}$ such that $\sizeexperimentbis{e_1} = \min \{ \sizeexperimentbis{e} \: ; \: e \textrm{ is an } \experimentbis{}{\pi_1} \}$, there exists an $\experimentbis{e}{\pi}$ such that $(\result{e}, \weakeningsofexperiment{e}) = (\result{e_1}, \weakeningsofexperiment{e_1})$ and $\sizeexperimentbis{e_1} = \sizeexperimentbis{e} -2$.
\end{itemize}
\end{lemma}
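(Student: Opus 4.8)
The plan is to prove the four assertions together by induction on $\depth{t}$, where $t$ is the cut-link reduced by the step $\pi \onenonerasing \pi_1$. In each case I would reduce the statement to the construction of a correspondence between the $\smbis{}$-experiments of $\pi$ and those of $\pi_1$ that preserves \emph{both} the result $\result{\cdot}$ and the weakening multiset $\weakeningsofexperiment{\cdot}$, while keeping precise control of the $\smbis{}$-size. Since $\sizeexperimentbis{e} = \sizeexperiment{e} + 2\card{\weakeningsofexperiment{e}}$ and the weakenings will turn out to be preserved exactly, the whole size bookkeeping will boil down to comparing $\sizeexperiment{e}$ with $\sizeexperiment{e_1}$.

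First I would settle the base case $\depth{t} = 0$ by a case analysis on the type of $t$. For the $(ax)$, $(\otimes/\lpar)$ and $(1/\bot)$ cuts the correspondence is the evident bijection obtained by reading off the values of $e$ on the residual edges; each of these reductions erases exactly two logical edges of $\ground{\pi}$ and touches neither the boxes nor the $0$-ary $?$-links, so $\result{e} = \result{e_1}$, $\weakeningsofexperiment{e} = \weakeningsofexperiment{e_1}$ and $\sizeexperiment{e} = \sizeexperiment{e_1} + 2$. The delicate case is the $(!/?)$ cut, where the $!$-link $o$ with box $\pi^o$ is cut against a $?$-link $w$ of arity $k \geq 1$. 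Writing $e(o) = [e^o_1, \dots, e^o_n]$, the cut equation forces the multiset $[e^o_1(c^o), \dots, e^o_n(c^o)]$ to coincide with the multiset of values carried by the $k$ premises of $w$; in particular $n = k$, and there is a bijection matching each $e^o_i$ with the $\flat$-premise against which the $i$-th copy of $\pi^o$ is cut in $\pi_1$. Using this matching I would define $e_1$ by placing $e^o_i$ on the $i$-th copy of the box and letting it agree with $e$ elsewhere, the converse construction being symmetric. The key computation is that the $n = k$ copies of $\ground{\pi^o}$ produced in $\ground{\pi_1}$ are counted in $\sizeexperiment{e_1}$ exactly as many times as the sub-experiments $e^o_i$ are counted in $\sizeexperiment{e}$, so those contributions cancel and only the two logical edges removed (the main conclusion of $o$ and the conclusion of $w$) survive, giving again $\sizeexperiment{e} = \sizeexperiment{e_1} + 2$; likewise the box weakenings, counted $n = k$ times on both sides, match, so $\weakeningsofexperiment{e} = \weakeningsofexperiment{e_1}$.

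For the inductive step $\depth{t} = d+1$ the cut lies inside a box $\pi^o$ of $\ground{\pi}$, with $\pi^o \onenonerasing (\pi^o)_1$. Applying the induction hypothesis to each experiment occurring in $e(o)$ and reassembling yields $e_1$ with $\result{e} = \result{e_1}$, $\weakeningsofexperiment{e} = \weakeningsofexperiment{e_1}$ and $\sizeexperimentbis{e} - \sizeexperimentbis{e_1} = \sum_{e^o \in e(o)} \bigl(\sizeexperimentbis{e^o} - \sizeexperimentbis{(e^o)_1}\bigr)$. Because an $\smbis{}$-experiment assigns to $o$ a multiset of strictly positive cardinality (Definition~\ref{def:experiment}), this sum has at least one summand, and each summand is strictly positive by induction; together with the base case this gives the strict inequalities of items~1 and~2.

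Finally, for items~1bis and~2bis I would argue that the two extra hypotheses force a \emph{single} instance of $t$, so that the sum above reduces to the base value $2$. If $t$ is stratified non-erasing there is no non-erasing cut of depth $< \depth{t}$; in particular no $!$-link enclosing $t$ has its main conclusion cut against a $?$-link of arity $\geq 1$, hence no cut equation forces such an enclosing box to be copied more than once. A $\smbis{}$-experiment of minimal size must then assign a singleton to each of these $!$-links (the floor imposed by the ``$>0$'' constraint) and restrict to a minimal experiment of the relevant box; iterating the inductive step along the chain of enclosing boxes, the unique summand equals $2$, whence $\sizeexperimentbis{e_1} = \sizeexperimentbis{e} - 2$. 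I expect the main obstacle to be precisely this $(!/?)$ bookkeeping: one has to verify that the $k$-fold duplication of $\pi^o$ cancels exactly against the $n = k$ copies read by the experiment (and that the weakenings survive this duplication intact), and to check that stratification together with minimality is exactly what prevents the enclosing boxes from contributing any extra copy.
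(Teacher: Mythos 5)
Your overall strategy---induction on the depth of the reduced cut, an explicit case analysis at depth $0$, box nesting multiplying the size decrease (with at least one summand because $\smbis{}$-experiments assign multisets of cardinality $\geq 1$ to $!$-links), and stratification plus minimality pinning that multiplier to $1$ for items 1bis/2bis---is essentially the skeleton of the paper's proof. The paper proves 1bis/2bis first by citing a ``straightforward adaptation'' of the Key Lemma of~\cite{CarvPagTdF10} and then runs the induction on depth for items 1 and 2; you instead try to carry out the base case explicitly, and that is where your argument breaks.

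The gap is in the $(!/?)$ base case, precisely the point you flag as the main obstacle. You claim the cut equation forces $n = k$ together with a bijection between the sub-experiments $e^o_1, \ldots, e^o_n$ and the $k$ premises of $w$. It does not. Writing $e(a_i) = (-, \mu_i)$ for the $i$-th premise $a_i$ of $w$, the cut equation forces $\multiset{e^o_1(c^o), \ldots, e^o_n(c^o)}^\perp = \sum_{i \leq k} \mu_i$, hence $n = \sum_{i \leq k} \card{\mu_i}$. Each $\card{\mu_i}$ equals the number of times $e$ evaluates the $\flat$-node associated with $a_i$, i.e.\ the product of the multiplicities chosen for the boxes enclosing that $\flat$-node, which can be any integer $\geq 1$. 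For instance, if the $\flat$-node above $a_1$ lies in a box $p$ whose main conclusion is cut at depth $0$ against a binary $?$-link, then \emph{every} $\smbis{}$-experiment (including minimal ones, and with all cuts stratified) must take two copies of $p$, so $\card{\mu_1} = 2$ and $n > k$. Consequently ``placing $e^o_i$ on the $i$-th copy of the box'' is not even well-typed when $n > k$: the copy of $\pi^o$ dispatched inside $p$ must receive \emph{several} sub-experiments, one per instance of the enclosing boxes. (Note also that the copies of $\pi^o$ are not produced in $\ground{\pi_1}$ as you assert: Figure~\ref{fig:cut} places them \emph{inside} the boxes containing the associated $\flat$-nodes.) The correct bookkeeping---which is the actual content of the proof the paper imports from~\cite{CarvPagTdF10}---matches the $n$ sub-experiments with the $n$ elements of $\sum_i \mu_i$, assigns to each copy of $\pi^o$ as many sub-experiments as the product of the multiplicities of its new enclosing boxes, and then observes that the total number of evaluations of $\pi^o$ is $n$ on both sides, so that only the two logical premises of $t$ (each counted once, being at depth $0$) survive in the difference. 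Your numerical conclusions $\sizeexperiment{e} = \sizeexperiment{e_1} + 2$ and $\weakeningsofexperiment{e} = \weakeningsofexperiment{e_1}$ are in fact true, but the correspondence you construct to prove them does not exist in general, so the base case---and with it the whole induction---is not established as written. The remaining parts of your proposal (the multiplicative base cases, the inductive step through boxes, and the stratification-plus-minimality argument for 1bis/2bis, \emph{given} the base case) are sound and parallel the paper.
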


\begin{proof}
We first prove $1bis$ and $2bis$. By a straightforward adaptation of the proof given in~\cite{CarvPagTdF10}, one proves that if $t$ is a stratified non-erasing cut-link of $\pi$, then
\begin{itemize}
\item
for every $\smbis{}$-experiment $e$ of $\pi$ such that $\sizeexperimentbis{e} = \min \{ \sizeexperimentbis{e} \: ; \: e \textrm{ is an } \experimentbis{}{\pi} \}$, there exists an $\smbis{}$-experiment $e_1$ of $\pi_1$ such that $\result{e}=\result{e_1}$, $\weakeningsofexperiment{e}=\weakeningsofexperiment{e_1}$ and $\sizeexperiment{e_1} = \sizeexperiment{e} -2$;
\item
for every $\smbis{}$-experiment $e_1$ of $\pi_1$ such that $\sizeexperimentbis{e_1} = \min \{ \sizeexperimentbis{e} \: ; \: e \textrm{ is an } \experimentbis{}{\pi_1} \}$, there exists an $\smbis{}$-experiment $e$ of $\pi$ such that $\result{e}=\result{e_1}$, $\weakeningsofexperiment{e}=\weakeningsofexperiment{e_1}$ and $\sizeexperiment{e_1} = \sizeexperiment{e} -2$;
\end{itemize}
Furthermore, since the reduction step leading from $\pi$ to $\pi_1$ is non erasing, we have $\weakeningsofexperiment{e}=\weakeningsofexperiment{e_1}$, which yields $1bis$ and $2bis$.

The proof of $1$ and $2$ is by induction on $\depth{\pi}$. If $t$ is a cut-link at depth $0$, then $t$ is a stratified non-erasing cut-link of $\pi$, so we already know (by $1bis$ and $2bis$) that the properties hold. If $t$ is a cut-link of $\pi^o$ with $o \in ! (\ground{\pi})$, then, by induction hypothesis,
\begin{itemize}
\item[a.] 
for every $\experimentbis{e^{o}}{\pi^o}$, there exists an $\experimentbis{e^{o}_1}{t(\pi^o)}$ such that $(\result{e^{o}}, \weakeningsofexperiment{e^{o}}) = (\result{e^{o}_1}, \weakeningsofexperiment{e^{o}_1})$ and 
$\sizeexperimentbis{e^{o}_1} < \sizeexperimentbis{e^{o}}$; 
\item[b.] 
for every $\experimentbis{e^{o}_1}{t(\pi^{o})}$, there exists an $\experimentbis{e^{o}}{\pi^{o}}$ such that $(\result{e^{o}}, \weakeningsofexperiment{e^{o}}) = (\result{e^{o}_1}, \weakeningsofexperiment{e^{o}_1})$ and 
$\sizeexperimentbis{e^{o}_1} < \sizeexperimentbis{e^{o}}$.
\end{itemize}
Now, we can take $e$ and $e_1$ such that 
\begin{itemize}
\item for every edge $a$ of $\ground{\pi} = \ground{t(\pi)}$, we have $e(a) = e_1(a)$
\item for every $!$-link $o' \not= o$ of $\ground{\pi}= \ground{t(\pi)}$, $e(o') = e_1(o')$
\item $e(o)=[f^{o}_{1},\ldots,f^{o}_{k}]$, $e_{1}(o)=[f^{o}_{11},\ldots,f^{o}_{k1}]$, where $k\geq 1$ (remember Remark~\ref{rem:exp+expbis}) and $f^{o}_{i1}$ and $f^{o}_{i}$ are obtained by applying the induction hypothesis to $\pi^{o}$, following items $a.$ and $b.$, so that $\sizeexperimentbis{f^{o}_{i1}} < \sizeexperimentbis{f^{o}_{i}}$ for every $i\in\{1,\ldots,k\}$.
\end{itemize}
Thus, by Remark~\ref{remark:sizebis}, $\sizeexperimentbis{e_{1}} < \sizeexperimentbis{e}$.
\end{proof}

In order to precisely compare our results to the ones of~\cite{CarvPagTdF10}, we recall what is proven in~\cite{CarvPagTdF10}: in Theorem~\ref{th:qualitatifWN} and Corollary~\ref{cor:composition} we refer to ``head-normalization'' meaning stratified normalization at depth $0$.

\begin{theorem}\label{th:qualitatifWN}
Let $\pi$ be a net. We have:
\begin{minienum}
\item\label{cond:smpinon-empty} $\pi$ is head-normalizable iff $\sm{\pi}$ is non-empty;
\item\label{cond:smpiexnon-empty} $\pi$ is normalizable iff $\sm{\pi}^\textrm{ex}$ is non-empty.
\end{minienum}
\end{theorem}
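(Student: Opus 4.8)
The plan is to treat both equivalences as instances of a single pattern — \emph{the interpretation is non-empty iff a suitable reduction strategy terminates at the appropriate normal form} — resting on three ingredients already available: invariance of the $\sm{}$-interpretation under cut elimination (Theorem~\ref{th:invariance}); the fact that a clash forces $\sm{\pi}=\emptyset$ (the observation behind Remark~\ref{rem:ClashesSemantics}, read off Definition~\ref{def:experiment}: no experiment can satisfy $e(a)=e(b)^\perp$ across a non-dual, non-axiom cut); and a termination measure given by the size $s(e)$ of an $\sm{}$-experiment, for which the $\sm{}$-analogue of Lemma~\ref{lemma : key-lemma : strat} guarantees a strict decrease under non-erasing stratified steps. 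I would first record two ``witnessing experiment'' facts. (i) A net with no non-erasing cut and no clash at depth $0$ admits an experiment: assign the empty multiset to every $!$-link, so that the only remaining depth-$0$ cuts are erasing $!/?_0$ cuts, handled by the duality $(+,[])^\perp=(-,[])$, while the purely multiplicative part is labelled by propagating arbitrary atoms from the axioms. (ii) A cut-free net admits an \emph{exhaustive} experiment, namely any $1$-experiment (one experiment per box, recursively, cf.\ Remark~\ref{remark:1-experiments}): every positive multiset read off the result is then a singleton, so $(+,[])$ never occurs and the result lies in $(\exhaustive{D})^{n}$.

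For Part~\ref{cond:smpinon-empty}, the direction $\Rightarrow$ goes as follows: head-normalizability produces a stratified depth-$0$ reduction to a head-normal, clash-free $\pi_0$; by Theorem~\ref{th:invariance} one has $\sm{\pi}=\sm{\pi_0}$, and construction (i) gives $\sm{\pi_0}\neq\emptyset$. For $\Leftarrow$, fix an experiment $e$ of $\pi$ with result $r$. Along any depth-$0$ stratified non-erasing reduction, invariance keeps $r$ in the interpretation of every reduct, so the clash observation rules out clashes; and at each step the minimal size of an experiment with result $r$ strictly decreases, a non-negative integer, so the reduction is finite. Hence $\pi$ reaches a head-normal form.

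For Part~\ref{cond:smpiexnon-empty}, the direction $\Rightarrow$ is immediate: reducing to a cut-free $\pi_0$ gives $\sm{\pi}=\sm{\pi_0}$, and construction (ii) yields $\sm{\pi_0}^{\textrm{ex}}\neq\emptyset$, whence $\sm{\pi}^{\textrm{ex}}\neq\emptyset$ (exhaustiveness being a property of the common set $\sm{\pi}$). The direction $\Leftarrow$ is the heart of the matter, and I would prove it by induction on $\min\{s(e)\;;\;e\text{ is an exhaustive experiment of }\pi\}$. If $\pi$ is cut-free we are done. If $\pi$ has a non-erasing cut, it has a stratified one; reducing it, invariance and the size lemma give a reduct still carrying a non-empty exhaustive interpretation of strictly smaller minimal exhaustive size, and the induction hypothesis applies. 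If all cuts of $\pi$ are erasing, pick one: the duality computation forces \emph{every} experiment to assign the empty multiset to the erased $!$-link, so its box contributes nothing to $s$, and the experiment descends, unchanged on its result, to an exhaustive experiment of the reduct whose ground-structure has strictly fewer logical edges; again the minimal exhaustive size drops and we conclude. Throughout, the clash observation guarantees that the cut-free net reached is a genuine normal form.

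The main obstacle I expect is precisely this last induction in Part~\ref{cond:smpiexnon-empty}. Unlike the head case, full normalization cannot be organized by a naive induction on depth, since a depth-$0$ $(!/?)$ step duplicates a box and pushes its copies \emph{deeper} into the net, so depth is not a decreasing parameter. Using $s(e)$ as the measure repairs this, but one must then verify carefully that (a) the $\sm{}$-size genuinely strictly decreases under \emph{every} non-erasing step and drops under the chosen erasing step, and (b) an exhaustive experiment is preserved — same result, hence still exhaustive — under both kinds of step, the erasing case hinging on the fact that the auxiliary conclusions of the erased $!$-link carry the value $(-,[])$ and therefore leave every downstream $?$-link's label unchanged. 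Pinning down this size bookkeeping, which is the $\sm{}$-counterpart of Lemma~\ref{lemma : key-lemma : strat}, is the delicate point; the construction facts (i) and (ii) together with the invariance and clash arguments are then routine.
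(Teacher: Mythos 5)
First, for context: the paper itself contains no proof of Theorem~\ref{th:qualitatifWN} --- it is recalled from~\cite{CarvPagTdF10}, whose argument is organized \emph{by depth} (eliminate all cuts at depth $0$ first, erasing ones included, then descend into the boxes). So your attempt stands on its own. Part~\ref{cond:smpinon-empty} of your proposal and the left-to-right half of Part~\ref{cond:smpiexnon-empty} are essentially sound, modulo one small repair: a $1$-experiment of a cut-free net is exhaustive only if its labels on axioms are themselves exhaustive (say, atomic); an axiom labelled with $(+,[])$ and $(-,[])$ already puts $(+,[])$ into the result, so ``any $1$-experiment'' is not enough.

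The genuine gap is in the right-to-left half of Part~\ref{cond:smpiexnon-empty}, at exactly the point you flagged as delicate: the claimed $\sm{}$-analogue of item 1 of Lemma~\ref{lemma : key-lemma : strat} --- strict decrease of $\sizeexperiment{e}$ under \emph{every} stratified non-erasing step --- is false. Unlike an $\smbis{}$-experiment, an $\sm{}$-experiment may assign the empty multiset of experiments to a $!$-link (Remark~\ref{rem:exp+expbis}), and then nothing inside that box contributes to the size; exhaustiveness does not prevent this, since it constrains the \emph{result} only, and a box erased at depth $0$ is invisible in the result. Concretely, let $\pi$ consist of a $\lone$-link (whose conclusion is the conclusion of the net) and a $!$-link $o$ without auxiliary conclusions, its main conclusion cut against a $0$-ary $?$-link, the box $\pi^{o}$ being a $\lone$-link (main conclusion of the box) together with a cut between a $\lone$-link and a $\lbot$-link. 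That $\lone/\lbot$ cut is the unique non-erasing cut of $\pi$, hence stratified; but every $\sm{}$-experiment of $\pi$ satisfies $e(o)=[]$ (the $0$-ary $?$-link forces $(-,[])$ on its conclusion, hence $(+,[])$ on the main conclusion of $o$), so $\exhaustive{\sm{\pi}}=\{\sequence{(+,\ast)}\}\neq\emptyset$, every experiment has size $3$, and after reducing the $\lone/\lbot$ cut the minimal size of exhaustive experiments is still $3$. Your second case (``$\pi$ has a non-erasing cut: reduce a stratified one'') therefore performs a step that does not decrease your induction measure, and the induction stalls; note that $\pi$ \emph{is} normalizable --- reduce the erasing cut first --- so the theorem is not in danger, only your strategy (your third case has the same defect unless the erasing cut is chosen at minimal depth). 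The repair is to organize the induction as in~\cite{CarvPagTdF10}: reduce cuts \emph{at depth $0$} first --- such steps, erasing or not, are seen by every experiment and strictly decrease the minimal exhaustive size --- and once depth $0$ is cut-free, every depth-$0$ edge flows down to the result, so exhaustiveness \emph{now} really does force every depth-$0$ box to receive a non-empty multiset of experiments with exhaustive results; one then recurses into the boxes, whose minimal exhaustive size is strictly smaller. Your ordering (non-erasing cuts first, erasing ones last) is the one adapted to strong normalization and to the $\smbis{}$-interpretation, where boxes can never be skipped; for the $\sm{}$-interpretation and weak normalization it is precisely the wrong order --- and this mismatch is the very phenomenon that motivates the paper's introduction of $\smbis{}$.
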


Theorem~\ref{theorem:qualitativeSN} gives a characterization of strongly normalizable nets in terms of the $\smbis{}$-interpretation, which is very similar to the just recalled results for head-normalizable and (weakly) normalizable nets. Notice, however, that in general we cannot recover $\smbis{\pi}$ from $\sm{\pi}$, so that Theorem~\ref{theorem:qualitativeSN} itself cannot pretend to be a characterization of strongly normalizable nets in the relational model (remember by the way that strictly speaking this is not possible by Remark~\ref{rem:NoSNsem}).

\begin{proposition}\label{prop: WNstratnonerasing => smbis non-empty}
We have $\pi \in \textbf{WN}^{\nonerasing} \Rightarrow \smbis{\pi} \not= \emptyset$.
\end{proposition}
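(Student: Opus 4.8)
The plan is to reduce the claim to the special case of a $\nonerasing$-normal net, exploiting the invariance of the $\smbis{}$-interpretation under non-erasing cut elimination (Proposition~\ref{prop:invariancesmbis}). First I would unfold the definition of $\textbf{WN}^{\nonerasing}$: if $\pi \in \textbf{WN}^{\nonerasing}$, then there is a $\nonerasing$-reduction sequence from $\pi$ to some $\nonerasing$-normal net $\pi'$, i.e.\ $\pi \nonerasingred \pi'$. Proposition~\ref{prop:invariancesmbis} then yields $\smbis{\pi} = \smbis{\pi'}$, so that it suffices to prove $\smbis{\pi'} \neq \emptyset$ whenever $\pi'$ is $\nonerasing$-normal.

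The second step is to exhibit at least one $\smbis{}$-experiment of the $\nonerasing$-normal net $\pi'$, which is exactly what Remark~\ref{remark:1-experiments} provides: even for a $\nonerasing$-normal net there always exists a $1$-$\smbis{}$-experiment. Hence $\smbis{\pi'}$ is non-empty, and the desired conclusion $\smbis{\pi} = \smbis{\pi'} \neq \emptyset$ follows immediately.

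The only point deserving care — and which I expect to be the main (minor) obstacle — is to check that an $\smbis{}$-experiment can really be defined in the presence of the erasing cut-links still occurring in $\pi'$. Such a cut connects the main conclusion of a $!$-link $o$ to the conclusion of a $0$-ary $?$-link. Since an $\smbis{}$-experiment must associate with $o$ a \emph{non-empty} multiset of experiments of $\pi^o$ (Remark~\ref{rem:exp+expbis}), the main conclusion receives a point $\sequence{+, [\ldots]}$ with a non-empty multiset; the cut condition $e(a) = e(b)^\perp$ then forces the label of the $?$-conclusion to be the dual of this non-empty multiset. As $\smbis{}$-experiments are free to associate any element of $\finitemultisets{D}$ with the conclusion of a $0$-ary $?$-link, this constraint can always be satisfied. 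Choosing a $1$-$\smbis{}$-experiment (a singleton multiset at each $!$-link) makes this explicit and, proceeding by induction on $\depth{\pi'}$, produces a well-defined experiment; this is precisely the construction recorded in Remark~\ref{remark:1-experiments}, so no genuine difficulty remains.
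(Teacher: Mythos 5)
Your proof is correct and follows essentially the same route as the paper's own: unfold $\textbf{WN}^{\nonerasing}$ to get $\pi \nonerasingred \pi_0$ with $\pi_0$ $\nonerasing$-normal, invoke Remark~\ref{remark:1-experiments} to obtain a $1$-$\smbis{}$-experiment of $\pi_0$, and transfer its result back to $\smbis{\pi}$ via Proposition~\ref{prop:invariancesmbis}. The careful check in your final paragraph (that the erasing cuts remaining in $\pi_0$ do not obstruct the construction, because a $0$-ary $?$-link may receive any $(-,a)$ with $a \in \finitemultisets{D}$, in particular the dual of the non-empty multiset forced at the $!$-link's main conclusion) is exactly the content recorded in Remark~\ref{remark:1-experiments}, so your argument matches the paper's proof point for point.
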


\begin{proof}
Let $\pi \nonerasingred \pi_{0}$ with $\pi_{0}$ $\nonerasing$-normal. There obviously exists a $1$-$\smbis{}$-experiment $e$ of $\pi_{0}$ (Remark~\ref{remark:1-experiments}), and thus $\result{e} \in \smbis{\pi}$ (by Proposition~\ref{prop:invariancesmbis}).
\end{proof}

\begin{theorem}\label{theorem:qualitativeSN}
A net $\pi$ is strongly normalizable iff $\smbis{\pi}$ is non-empty.
\end{theorem}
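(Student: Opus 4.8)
The plan is to prove the two implications separately, in both cases passing through non-erasing reduction thanks to $\textbf{SN} = \textbf{SN}^{\nonerasing}$ (Proposition~\ref{proposition:SN=SNnonerasing}), which is exactly the setting where the invariance of $\smbis{}$ (Proposition~\ref{prop:invariancesmbis}) and the strict size decrease of Lemma~\ref{lemma : key-lemma : strat} are available.

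For the implication $\pi \in \textbf{SN} \Rightarrow \smbis{\pi} \neq \emptyset$, I would argue as follows. By Proposition~\ref{proposition:SN=SNnonerasing} we have $\pi \in \textbf{SN}^{\nonerasing}$, so every $\nonerasing$-reduction sequence from $\pi$ is finite; since the reduction is finitely branching, reducing non-erasing cuts as long as any exists must terminate at a $\nonerasing$-normal net, whence $\pi \in \textbf{WN}^{\nonerasing}$. Proposition~\ref{prop: WNstratnonerasing => smbis non-empty} then yields $\smbis{\pi} \neq \emptyset$ directly. Note that this direction uses only the already-established machinery and needs no new computation.

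For the converse $\smbis{\pi} \neq \emptyset \Rightarrow \pi \in \textbf{SN}$, again by Proposition~\ref{proposition:SN=SNnonerasing} it suffices to establish $\pi \in \textbf{SN}^{\nonerasing}$, i.e.\ (i) every $\nonerasing$-reduction sequence from $\pi$ is finite, and (ii) no $\nonerasing$-reduct of $\pi$ contains a clash. I would fix an $\smbis{}$-experiment $e$ of $\pi$. For (i), given any sequence $\pi = \pi_0 \onenonerasing \pi_1 \onenonerasing \cdots \onenonerasing \pi_m$, iterating item~$1$ of Lemma~\ref{lemma : key-lemma : strat} produces experiments $e_i$ of $\pi_i$ with $\sizeexperimentbis{e_0} > \sizeexperimentbis{e_1} > \cdots > \sizeexperimentbis{e_m}$; a strictly decreasing sequence of natural numbers forces $m \leq \sizeexperimentbis{e}$, so all $\nonerasing$-sequences are finite (and in fact $\nonerasingstrong{\pi} \leq \sizeexperimentbis{e}$). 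For (ii), if $\pi \nonerasingred \pi'$ with $\pi'$ containing a clash, then Proposition~\ref{prop:invariancesmbis} gives $\smbis{\pi'} = \smbis{\pi} \neq \emptyset$; but I will show a net with a clash admits no $\smbis{}$-experiment, a contradiction.

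The hard part is precisely this last claim, that a clash empties $\smbis{}$. At a clash cut the two premises are logical, non-dual, and neither is an axiom conclusion, so each is the conclusion of one of $\otimes,\lpar,\lone,\lbot,!,?$. By Definition~\ref{def:experiment} the shape of the value $e$ assigns to such an edge is forced: $\otimes/\lpar$ give triples (polarities $+/-$), $\lone/\lbot$ give pairs with $\ast$ ($+/-$), and $!/?$ give pairs with a multiset ($+/-$); moreover $(\cdot)^{\perp}$ preserves shape while flipping polarity. Hence the cut condition $e(a)=e(b)^{\perp}$ can be met only for the three dual pairings, and never for a clash, so the clashing ground-structure has no experiment. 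The subtle point — and the reason $\smbis{}$ succeeds where $\sm{}$ fails (cf.\ Remark~\ref{rem:NoSNsem}) — is that a clash buried inside a box still propagates up: an $\smbis{}$-experiment must attach a \emph{nonempty} multiset of box-experiments to every $!$-link, so the absence of an experiment at the clashing depth prevents any $\smbis{}$-experiment of $\pi'$ from existing. This is the step I expect to require the most care to phrase cleanly; everything else is assembly of the cited results.
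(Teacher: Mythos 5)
Your proof is correct and follows essentially the same route as the paper: reduce to $\textbf{SN}^{\nonerasing}$ via Proposition~\ref{proposition:SN=SNnonerasing}, obtain the forward direction from $\textbf{SN}^{\nonerasing}\subseteq\textbf{WN}^{\nonerasing}$ together with Proposition~\ref{prop: WNstratnonerasing => smbis non-empty}, and obtain the converse from the strict decrease of $\sizeexperimentbis{\cdot}$ given by Lemma~\ref{lemma : key-lemma : strat} (the paper phrases this as an induction on $\min \{ \sizeexperimentbis{e} \: ; \: e \textrm{ is an } \experimentbis{}{\pi} \}$, whereas you iterate item~1 along an arbitrary $\nonerasing$-reduction sequence to bound its length --- the same argument in a slightly more direct form). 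Your explicit verification that a clash at any depth forces $\smbis{\pi'}=\emptyset$ (because an $\smbis{}$-experiment must assign a \emph{non-empty} multiset of experiments to every box) is a point the paper leaves implicit, but it is genuinely needed, since membership in $\textbf{SN}^{\nonerasing}$ requires clash-freeness of all $\nonerasing$-reducts, so making it explicit is a welcome addition rather than a deviation.
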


\begin{proof}
By Proposition~\ref{proposition:SN=SNnonerasing}, it is enough to show that, for any net $\pi$, we have $\pi \in \textbf{SN}^{\nonerasing}$ if, and only if, $\smbis{\pi}$ is non-empty.
If $\pi\in\textbf{SN}^{\nonerasing}$, then $\pi \in \textbf{WN}^{\nonerasing}$, hence we can apply Proposition~\ref{prop: WNstratnonerasing => smbis non-empty}.\\
Conversely, one proves by induction on $\min \{ \sizeexperimentbis{e} \: ; \: e \textrm{ is an } \experimentbis{}{\pi} \}$ that $\pi \in \textbf{SN}^{\nonerasing}$. If $\pi$ is $\nonerasing$-normal, we are done. Otherwise, we show that for every $\pi_{1}$ such that $\pi \onenonerasing \pi_{1}$, one has $\pi_{1} \in \textbf{SN}^{\nonerasing}$. Since $\smbis{\pi} \neq \emptyset$, there exist $\smbis{}$-experiments of $\pi$ and we can select $e$ such that $\sizeexperimentbis{e} = \min \{ \sizeexperimentbis{e} \: ; \: e \textrm{ is an } \experimentbis{}{\pi} \}$. By Lemma~\ref{lemma : key-lemma : strat}, there exists a $\smbis{}$-experiment $e_{1}$ of $\pi_{1}$ such that $\sizeexperimentbis{e_{1}} < \sizeexperimentbis{e}$, hence $\min \{ \sizeexperimentbis{e'} \: ; \: e' \textrm{ is an } \experimentbis{}{\pi_1} \} < \min \{ \sizeexperimentbis{e} \: ; \: e \textrm{ is an } \experimentbis{}{\pi} \}$:
 by induction hypothesis $\pi_{1} \in \textbf{SN}^{\nonerasing}$.
\end{proof}

An immediate consequence of Theorem~\ref{th:qualitatifWN} stated in~\cite{CarvPagTdF10} as Corollary 24 p.1897 is the following:

\begin{corollary}\label{cor:composition}
Let $\pi$ (resp.\ $\pi'$) be a net with conclusions $\mathbf{d}, c$ (resp.\ $\mathbf{d}', c'$). 
\begin{minienum}
\item 
The net $\seq{\pi | \pi'}_{c, c'}$ is head-normalizable iff there are $\mathbf{x}\in\sm{\pi}$ and $\mathbf{x}'\in\sm{\pi'}$ such that $\mathbf{x}_c = {\mathbf{x'}_{c'}}^\perp$.
\item 
The net $\seq{\pi | \pi'}_{c, c'}$ is normalizable iff there is $\mathbf{x}, \mathbf{x}'\in \mathbf{D}^{ex}$, $x \in D $ s.t.\ $ (\mathbf{x}, x) \in \sm{\pi}$ and $(\mathbf{x}', x^\perp)\in\sm{\pi'}$.
\end{minienum}
\end{corollary}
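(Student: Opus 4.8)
The plan is to apply Theorem~\ref{th:qualitatifWN} directly to the net $N := \seq{\pi | \pi'}_{c, c'}$, once the interpretation of a cut has been described as a relational composition of the interpretations of the two factors. First I would note that the cut-link joining $c$ and $c'$ is the unique link connecting $\pi$ and $\pi'$ and that it sits at depth $0$: hence $\ground{N}$ is obtained by juxtaposing $\ground{\pi}$ and $\ground{\pi'}$ and adding this single cut-link, and the boxes of $N$ are exactly the boxes of $\pi$ and of $\pi'$. Consequently, giving an $\sm{}$-experiment of $N$ amounts to giving an $\sm{}$-experiment $e$ of $\pi$ and an $\sm{}$-experiment $e'$ of $\pi'$ subject to the single cut-link condition $e(c) = e'(c')^\perp$ of Definition~\ref{def:experiment}; and in that case the result of the experiment of $N$ (whose conclusions are $\mathbf{d}, \mathbf{d}'$) is $\sequence{e(\mathbf{d}), e'(\mathbf{d}')}$. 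This yields the composition formula
\begin{equation*}
\sm{N} = \left\{ \sequence{\mathbf{x}_{\mathbf{d}}, \mathbf{x}'_{\mathbf{d}'}} \: ; \: \mathbf{x} \in \sm{\pi},\ \mathbf{x}' \in \sm{\pi'},\ \mathbf{x}_c = {\mathbf{x}'_{c'}}^\perp \right\} \enspace .
\end{equation*}

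From here both items are immediate. For the first item, the characterization of head-normalizability in Theorem~\ref{th:qualitatifWN} says that $N$ is head-normalizable iff $\sm{N} \not= \emptyset$, and by the composition formula $\sm{N}$ is non-empty exactly when there are $\mathbf{x} \in \sm{\pi}$ and $\mathbf{x}' \in \sm{\pi'}$ with $\mathbf{x}_c = {\mathbf{x}'_{c'}}^\perp$, which is the asserted condition. For the second item, the characterization of normalizability in Theorem~\ref{th:qualitatifWN} says that $N$ is normalizable iff $\exhaustive{\sm{N}} \not= \emptyset$. Here the point to stress is that the conclusions of $N$ are only $\mathbf{d}, \mathbf{d}'$: the value carried by the cut no longer appears in the result, so a point $\sequence{\mathbf{x}_{\mathbf{d}}, \mathbf{x}'_{\mathbf{d}'}}$ of $\sm{N}$ is exhaustive iff $\mathbf{x}_{\mathbf{d}}$ and $\mathbf{x}'_{\mathbf{d}'}$ are, with no exhaustiveness demanded of $x = \mathbf{x}_c$. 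Writing $\mathbf{x}$ (resp.\ $\mathbf{x}'$) for the restriction of the result of $e$ (resp.\ $e'$) to $\mathbf{d}$ (resp.\ $\mathbf{d}'$) and $x$ for its value on $c$, the composition formula gives $\exhaustive{\sm{N}} \not= \emptyset$ iff there are $\mathbf{x}, \mathbf{x}' \in \mathbf{D}^{ex}$ and $x \in D$ with $(\mathbf{x}, x) \in \sm{\pi}$ and $(\mathbf{x}', x^\perp) \in \sm{\pi'}$.

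The only genuinely delicate step is the composition formula, and specifically the claim that $\sm{}$-experiments of $N$ are in bijection with pairs $(e, e')$ of experiments of $\pi$ and $\pi'$ matching dually on the cut. This is routine from Definition~\ref{def:experiment} at depth $0$, but must be checked with some care at positive depth, where one verifies that the multisets of box-experiments assigned by an experiment of $N$ coincide with those of the two induced experiments. The conceptual subtlety to keep in mind is the asymmetry between the two items: exhaustiveness is required of the surviving conclusions $\mathbf{d}, \mathbf{d}'$ but not of the cut value $x$, which is precisely why the second item pairs $\mathbf{x}, \mathbf{x}' \in \mathbf{D}^{ex}$ with an unconstrained $x \in D$.
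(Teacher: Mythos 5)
Your proposal is correct and takes essentially the same route as the paper: the paper gives no proof of this corollary, presenting it as an immediate consequence of Theorem~\ref{th:qualitatifWN} (citing Corollary 24 of~\cite{CarvPagTdF10}), which is exactly what you do. Your composition formula for $\sm{\seq{\pi | \pi'}_{c, c'}}$ and the observation that exhaustiveness is demanded only of the surviving conclusions $\mathbf{d}, \mathbf{d}'$ and not of the cut value $x$ are precisely the routine details the paper leaves implicit.
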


The following corollary, very much in the style of Corollary~\ref{cor:composition}, allows to answer Question~\ref{uno1bis} raised in the introduction, despite the fact that one cannot extract the information on the strong normalizability of a net from its $\sm{}$-interpretation: given two cut-free nets $\pi$ and $\pi'$, thanks to Proposition~\ref{prop:SembisFromSem} we can compute $\smbis{\pi}$ (resp.\ $\smbis{\pi'}$) from $\sm{\pi}$ (resp.\ $\sm{\pi'}$), and the corollary allows then to ``predict'' (by purely semantic means) whether or not the net obtained by cutting $\pi$ and $\pi'$ is strongly normalizing.

\begin{corollary}\label{corollary : cut strongly normalizable}
Let $\pi$ (resp. $\pi'$) be a net with conclusions $\textbf{d}, c$ (resp. $\mathbf{d'}, c'$). The net $\cutnets{\pi}{\pi'}{c}{c'}$ is strongly normalizable if, and only if, there are $(\mathbf{x}, \mathcal{W}) \in \smbis{\pi}$ and $(\mathbf{x'}, \mathcal{W'}) \in \smbis{\pi'}$ such that $\mathbf{x}_c = {\mathbf{x'}_{c'}}^\perp$.
\end{corollary}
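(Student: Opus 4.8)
The plan is to reduce the statement to the semantic characterization of strong normalization (Theorem~\ref{theorem:qualitativeSN}) and then to analyse the $\smbis{}$-experiments of the cut net in terms of those of $\pi$ and $\pi'$. First I would invoke Theorem~\ref{theorem:qualitativeSN}: the net $\cutnets{\pi}{\pi'}{c}{c'}$ is strongly normalizable if and only if $\smbis{\cutnets{\pi}{\pi'}{c}{c'}}$ is non-empty. So it suffices to show that $\smbis{\cutnets{\pi}{\pi'}{c}{c'}} \neq \emptyset$ holds exactly when there are $(\mathbf{x}, \mathcal{W}) \in \smbis{\pi}$ and $(\mathbf{x'}, \mathcal{W'}) \in \smbis{\pi'}$ with $\mathbf{x}_c = {\mathbf{x'}_{c'}}^\perp$.

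The key step is a \emph{compositionality} property of $\smbis{}$-experiments across the cut-link joining $c$ and $c'$. Since this cut-link sits at depth $0$ and the construction of $\cutnets{\pi}{\pi'}{c}{c'}$ merely juxtaposes the ground-structures of $\pi$ and $\pi'$ and adds the cut-link, I would observe that giving an $\smbis{}$-experiment $e$ of $\cutnets{\pi}{\pi'}{c}{c'}$ is the same as giving an $\smbis{}$-experiment $e_1$ of $\pi$ and an $\smbis{}$-experiment $e_2$ of $\pi'$ that agree on the cut, i.e.\ satisfy the cut-link condition $e_1(c) = e_2(c')^\perp$ of Definition~\ref{def:experiment}. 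Indeed every edge and every $!$-link of the cut net belongs to exactly one of $\pi$ or $\pi'$, so $e$ splits into such a pair $e_1, e_2$, and conversely any compatible pair glues back to a unique $e$; moreover $\weakeningsofexperiment{e} = \weakeningsofexperiment{e_1} + \weakeningsofexperiment{e_2}$ and the restriction of $\result{e}$ records $e_1$ on $\mathbf{d}$ and $e_2$ on $\mathbf{d'}$.

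From this bijection both directions follow immediately. If $e$ is an $\smbis{}$-experiment of the cut net, then setting $\mathbf{x} = \result{e_1}$, $\mathcal{W} = \weakeningsofexperiment{e_1}$, $\mathbf{x'} = \result{e_2}$, $\mathcal{W'} = \weakeningsofexperiment{e_2}$ yields $(\mathbf{x}, \mathcal{W}) \in \smbis{\pi}$ and $(\mathbf{x'}, \mathcal{W'}) \in \smbis{\pi'}$ with $\mathbf{x}_c = e_1(c) = e_2(c')^\perp = {\mathbf{x'}_{c'}}^\perp$. Conversely, from $(\mathbf{x}, \mathcal{W}) \in \smbis{\pi}$ and $(\mathbf{x'}, \mathcal{W'}) \in \smbis{\pi'}$ with $\mathbf{x}_c = {\mathbf{x'}_{c'}}^\perp$, I would pick witnessing $\smbis{}$-experiments $e_1$ of $\pi$ and $e_2$ of $\pi'$; the duality $e_1(c) = \mathbf{x}_c = {\mathbf{x'}_{c'}}^\perp = e_2(c')^\perp$ is exactly the cut-link condition, so $e_1$ and $e_2$ glue into an $\smbis{}$-experiment of $\cutnets{\pi}{\pi'}{c}{c'}$, witnessing $\smbis{\cutnets{\pi}{\pi'}{c}{c'}} \neq \emptyset$. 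Combining with Theorem~\ref{theorem:qualitativeSN} gives the claim.

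The only point requiring care — and the main obstacle — is the gluing/splitting correspondence: one must check that the inductive (on depth) definition of $\smbis{}$-experiment is compatible with decomposing the net along a depth-$0$ cut, and in particular that no constraint other than the cut duality links the two halves. This is routine given that the cut-link is the unique new link and lies at depth $0$, so that the box structure of $\pi$ and of $\pi'$ is left untouched and the weakening multisets simply add; it is the $\smbis{}$-analogue of the compositionality already exploited for the $\sm{}$-interpretation in Corollary~\ref{cor:composition}, the novelty here being only the bookkeeping of the $\weakeningsofexperiment{\cdot}$ components, which play no role in the mere non-emptiness argument.
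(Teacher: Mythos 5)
Your proposal is correct and follows essentially the route the paper intends: the paper states this corollary without explicit proof, presenting it (like Corollary~\ref{cor:composition} for the $\sm{}$-interpretation) as an immediate consequence of Theorem~\ref{theorem:qualitativeSN} combined with precisely the splitting/gluing correspondence you describe between $\smbis{}$-experiments of $\cutnets{\pi}{\pi'}{c}{c'}$ and pairs of $\smbis{}$-experiments of $\pi$ and $\pi'$ satisfying the cut duality $e_1(c)=e_2(c')^\perp$ (a correspondence the paper also uses implicitly in the proof of Theorem~\ref{theorem:exactSN}). Your additional bookkeeping of $\weakeningsofexperiment{\cdot}$ as a sum across the two halves is accurate and matches the paper's usage.
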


The reader should notice that considering only nets of the form $\cutnets{\pi}{\pi'}{c}{c'}$ with $\pi$ and $\pi'$ cut-free might look as a restriction, but we already noticed in~\cite{CarvPagTdF10} that this is not quite true, as the following proposition (which is a variant of Proposition 34 p. 1899 of~\cite{CarvPagTdF10}) shows:

\begin{proposition}\label{prop:generalbound}
For every net $\pi_1$ with conclusions $\mathbf{d}$, there exist two cut-free nets $\pi$ and $\pi'$ with conclusions resp.\ $\mathbf{d}$, $c$ and $c'$ such that: 
\begin{minienum}
 \item\label{item1prop:generalbound}
$\seq{\pi | \pi'}_{c, c'}\cutred\pi_1$;
\item\label{item2prop:generalbound}
$\pi_1\in\textbf{SN}$ iff $\seq{\pi | \pi'}_{c, c'}\in\textbf{SN}$, and we have  $\strong{\pi_1}\leq\strong{\seq{\pi | \pi'}_{c, c'}}\enspace.$
\end{minienum}
\end{proposition}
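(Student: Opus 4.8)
The plan is to realize $\pi_1$ as a reduct of a single cut between two cut-free nets, by merging all the cut-links of $\pi_1$ into one cut through a purely multiplicative ``gadget''. Consider first the cut-links of $\ground{\pi_1}$ (the cuts at depth $0$); call them $t_1,\dots,t_k$, with premises $a_i$ and $b_i$, and let $\rho$ be the $\flat$-structure obtained from $\pi_1$ by deleting $t_1,\dots,t_k$, so that $a_1,b_1,\dots,a_k,b_k$ become pending edges of $\rho$, in addition to the conclusions $\mathbf{d}$. I would then take $\pi$ to be $\rho$ equipped with a binary $\otimes$-tree whose $2k$ leaves are $a_1,b_1,\dots,a_k,b_k$ and whose root is a fresh conclusion $c$; and $\pi'$ to be a $\parr$-tree of the \emph{same shape}, with root $c'$ and $2k$ leaves, together with $k$ axiom-links, the $i$-th one linking the two leaves of the $\parr$-tree that correspond to $a_i$ and $b_i$. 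Crucially, no duality of $a_i$ and $b_i$ is required for these axioms to be well-formed, so a clash $t_i$ of $\pi_1$ is reproduced faithfully. Both $\pi$ and $\pi'$ have only multiplicative links and disjoint axioms added, so they are cut-free (modulo positive-depth cuts, discussed below), their switchings stay acyclic, and they have conclusions $\mathbf{d},c$ and $c'$ respectively.

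For item~\ref{item1prop:generalbound} I would exhibit one fixed reduction $R_0 : \cutnets{\pi}{\pi'}{c}{c'}\cutred\pi_1$. Reducing the cut between $c$ and $c'$ triggers $(\otimes/\parr)$ steps (Figure~\ref{fig:cut:multiplicatives}) that unfold the two trees in lockstep, creating, for each $i$, a cut between $a_i$ and one axiom-conclusion and a cut between $b_i$ and the other axiom-conclusion; then an $(ax)$ step reconnects $a_i$ directly to $b_i$, rebuilding $t_i$. After doing this for all $i$ one recovers exactly $\pi_1$. Write $N=\length{R_0}\ge 1$.

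Item~\ref{item2prop:generalbound} splits into three parts. Since $\cutnets{\pi}{\pi'}{c}{c'}\cutred\pi_1$ and $\textbf{SN}$ is closed under reduction (reducts of a net in $\textbf{SN}$ are again in $\textbf{SN}$, clashes included), $\cutnets{\pi}{\pi'}{c}{c'}\in\textbf{SN}$ entails $\pi_1\in\textbf{SN}$. For the inequality, whenever $\pi_1\in\textbf{SN}$ one prefixes any reduction sequence from $\pi_1$ with $R_0$, obtaining a sequence from $\cutnets{\pi}{\pi'}{c}{c'}$ that is $N$ steps longer, whence $\strong{\pi_1}\le N+\strong{\pi_1}\le\strong{\cutnets{\pi}{\pi'}{c}{c'}}$ (and the inequality is trivial when the composite is not in $\textbf{SN}$, its $\strong{}$ being $\infty$). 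The remaining and most delicate direction is $\pi_1\in\textbf{SN}\Rightarrow\cutnets{\pi}{\pi'}{c}{c'}\in\textbf{SN}$. Here I would exploit that the gadget lives entirely at depth $0$ and consists only of multiplicative links and axioms, hence is never duplicated by a $(!/?)$ step; consequently every reduction sequence from the composite contains at most $N$ ``gadget'' steps, and projecting them away leaves a legal reduction sequence of $\pi_1$. An infinite reduction of the composite, or one reaching a clash, would thus project to an infinite reduction of $\pi_1$, or to a clash of $\pi_1$ (the gadget itself produces only the cuts $t_i$, introducing no new clash), contradicting $\pi_1\in\textbf{SN}$.

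The main obstacle is the treatment of the cuts of $\pi_1$ at \emph{positive depth}: the construction above only removes the cut-links of $\ground{\pi_1}$, whereas $\pi$ and $\pi'$ must be cut-free at \emph{every} depth. To eliminate a cut sitting inside a box I would proceed by induction on $\depth{\pi_1}$, applying the same $\otimes$-tree/$\parr$-tree/axiom gadget \emph{within} the relevant box, the driving cut being supplied at the correct depth by an exponential step: recall (Figure~\ref{fig:cut}) that reducing a $(!/?)$ cut places the freshly created cuts inside the boxes containing the $\flat$-nodes premises of the $?$-link, so that a suitably engineered exponential redex at depth $0$ can create, after reduction, precisely the multiplicative gadget-cut that the inner gadget needs one level deeper. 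Making this coordination precise across all depths, while checking that the extra exponential and multiplicative steps never shorten reduction sequences nor create clashes, is the technical heart of the argument; it follows the construction of Proposition~34 of~\cite{CarvPagTdF10}, of which the present statement is the strongly normalizing counterpart.
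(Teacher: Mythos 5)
Your construction is the same one the paper relies on: the paper's entire proof is a pointer to Proposition~34 of~\cite{CarvPagTdF10} for the definition of $\pi$ and $\pi'$ and for item~1, plus the bare assertion that item~2 is ``immediate from the definition''. Your depth-$0$ gadget ($\otimes$-tree against a $\parr$-tree of axioms), the reduction $R_0$ rebuilding the cuts $t_i$, the closure of $\textbf{SN}$ under reduction, and the prefix argument giving $\strong{\pi_1}\leq\strong{\seq{\pi | \pi'}_{c, c'}}$ are all correct, and your idea of injecting the dual gadget into boxes via a linear $(!/?)$ step is indeed how cuts at positive depth are externalized in that construction.

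The gap is in the direction $\pi_1\in\textbf{SN}\Rightarrow\cutnets{\pi}{\pi'}{c}{c'}\in\textbf{SN}$, exactly where your two paragraphs must be combined. Your counting argument rests on the claim that the gadget ``lives entirely at depth $0$'', is ``never duplicated by a $(!/?)$ step'', hence at most $N$ gadget steps occur in any reduction of the composite. This is true only when all cuts of $\pi_1$ are at depth $0$. As soon as $\pi_1$ has a cut inside a box $B$, your own positive-depth construction places gadget material (the $\otimes$-link and the $\flat$-node feeding a depth-$0$ $?$-link) \emph{inside} $B$; and $B$ is an ordinary box of $\pi_1$, which a non-gadget $(!/?)$ step of the composite can duplicate --- turning your $1$-ary gadget $?$-links into $k$-ary ones and multiplying the multiplicative and axiom gadget steps still to be performed --- or erase, turning the corresponding gadget steps into erasing ones. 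So the number of gadget steps is not bounded by $N$, and ``projecting them away'' is no longer the deletion of at most $N$ steps: one needs a genuine simulation argument, namely a projection $p$ that completes all pending gadget reductions, together with the facts that gadget steps leave $p$ invariant, that each non-gadget step $\sigma\onecut\sigma'$ induces exactly one step $p(\sigma)\onecut p(\sigma')$, that pure gadget reduction terminates (every gadget step strictly decreases the size of the net), and that gadget cuts are never clashes, so a clash in a reduct of the composite yields a clash in the corresponding reduct of $\pi_1$. This is precisely what the paper hides behind ``immediate from the definition'', and what your proposal leaves unresolved under the label ``the technical heart of the argument''.
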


\begin{proof}
See the proof of Proposition 34 p.1899 of~\cite{CarvPagTdF10} for the definition of $\pi$ and $\pi'$, where it is also proven that $\seq{\pi | \pi'}_{c, c'}\cutred\pi_1$. The fact that $\pi_1\in\textbf{SN}$ iff $\seq{\pi | \pi'}_{c, c'}\in\textbf{SN}$ is immediate from the definition of $\pi$ and $\pi'$, and the fact that $\strong{\pi_1}\leq\strong{\seq{\pi | \pi'}_{c, c'}}$ is obvious.
\end{proof}

\begin{remark}\label{remark: conservation}
Proposition~\ref{prop: WNstratnonerasing => smbis non-empty} and Theorem~\ref{theorem:qualitativeSN} together give a new proof of the following theorem for the nets of Definition~\ref{def:struct}:
\begin{theorem}[Conservation Theorem] \label{theorem: conservation}
We have $\textbf{WN}^{\nonerasing} = \textbf{SN}$.
\end{theorem}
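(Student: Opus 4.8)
The plan is to establish the two inclusions separately, observing that nearly all the work has already been carried out upstream, so that the statement is essentially a repackaging of the qualitative results of this section together with those of Subsection~\ref{subsect:netsReduction}.

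For the inclusion $\textbf{WN}^{\nonerasing} \subseteq \textbf{SN}$, I would simply chain two already-proven facts. If $\pi \in \textbf{WN}^{\nonerasing}$, then Proposition~\ref{prop: WNstratnonerasing => smbis non-empty} gives $\smbis{\pi} \neq \emptyset$, and Theorem~\ref{theorem:qualitativeSN} then yields $\pi \in \textbf{SN}$. This is the direction in which the semantic machinery does the real work: non-emptiness of the $\smbis{}$-interpretation is the bridge from weak non-erasing normalization to strong normalization.

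For the reverse inclusion $\textbf{SN} \subseteq \textbf{WN}^{\nonerasing}$, I would argue purely syntactically. By Proposition~\ref{proposition:SN=SNnonerasing} we have $\textbf{SN} = \textbf{SN}^{\nonerasing}$, so it suffices to show $\textbf{SN}^{\nonerasing} \subseteq \textbf{WN}^{\nonerasing}$. Given $\pi \in \textbf{SN}^{\nonerasing}$, I would pick a maximal $\nonerasing$-reduction sequence starting from $\pi$; it is finite, since by definition every $\nonerasing$-reduction sequence from $\pi$ is finite. Its endpoint $\pi_0$ admits no further $\nonerasing$-reduction step, so it contains no cut-link reducible by a non-erasing step; and since $\pi_0$ is a $\nonerasing$-reduct of $\pi \in \textbf{SN}^{\nonerasing}$, it contains no clash either. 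Hence only erasing cut-links remain in $\pi_0$, i.e.\ $\pi_0$ is $\nonerasing$-normal, which witnesses $\pi \in \textbf{WN}^{\nonerasing}$.

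The step I expect to require the most care is not conceptual but bookkeeping: one must check that ``$\pi_0$ admits no non-erasing reduction step'' together with ``$\pi_0$ contains no clash'' forces $\pi_0$ to be $\nonerasing$-normal in the sense of Definition~\ref{def:Scutreduction}. This relies on the fact, implicit in Definitions~\ref{def:clash} and~\ref{def:Scutreduction}, that every cut-link of a net is either a clash, or an erasing cut, or reducible by a non-erasing step (the last case covering $ax$-cuts, multiplicative cuts, and non-erasing $(!/?)$-cuts alike); ruling out the first and third possibilities leaves only erasing cuts. Everything genuinely difficult, namely the comparison of $\smbis{}$-sizes under non-erasing reduction (Lemma~\ref{lemma : key-lemma : strat}) and the postponement and stratification arguments behind Proposition~\ref{proposition:SN=SNnonerasing}, has already been settled earlier, so no new estimate is needed here.
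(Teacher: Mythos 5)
Your proof is correct and takes essentially the same route as the paper's: the paper (in Remark~\ref{remark: conservation}) obtains the theorem precisely by combining Proposition~\ref{prop: WNstratnonerasing => smbis non-empty} with Theorem~\ref{theorem:qualitativeSN} for the inclusion $\textbf{WN}^{\nonerasing} \subseteq \textbf{SN}$, while the converse inclusion rests on Proposition~\ref{proposition:SN=SNnonerasing} together with the inclusion $\textbf{SN}^{\nonerasing} \subseteq \textbf{WN}^{\nonerasing}$, which is invoked without comment inside the proof of Theorem~\ref{theorem:qualitativeSN}. Your maximal-sequence argument, using that a cut which is neither a clash nor erasing always admits a non-erasing reduction step, merely makes explicit this last inclusion that the paper treats as obvious.
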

As a corollary, we can show, for instance, that any MELL typed net is strongly normalizing only by showing that any MELL net is in $\textbf{WN}^{\nonerasing}$. This is done in Subsection~\ref{subsection:conservation}.
\end{remark}


\subsection{Strong Normalization for MELL nets}\label{subsection:conservation}


A  (typeable) $MELL$ net is a net of Definition~\ref{def:struct}, where with every logical edge one can associate a formula of the logical language (we say this formula is ``the type'' of the edge), and the standard conditions on formulas have to be satisfied (see~\cite{ll} or any more recent reformulation like~\cite{phdtortora}). Recall the grammar of $MELL$ formulas:

$$A::=1\ |\ \bot\ |\ X\ |\ X^{\bot}\ |\ A\otimes A\ |\ A\lpar A\ |\ ?A\ |\ !A$$

where $X$ ranges over a set of propositional variables.

Notice that the constraint on the types of the edges imply that an $MELL$ net can never contain a clash: in the whole section, every net is clash-free.

\begin{definition}[multiset ordering]\label{def:MultisetOrder}
If $X$ is a set and $m\in \finitemultisets{X}$, recall that we denote by $\Supp{m}$ the set underlying $m$, and for $x\in X$, we denote by $m(x)$ the multiplicity of $x$ in the multiset $m$. A binary relation $<$ on $X$ induces a binary relation (still denoted by $<$ in this paper) on $\finitemultisets{X}$: for $m,m'\in \finitemultisets{X}$, one defines, by induction on $\card{\Supp{m}}$, when $m<m'$ holds:
\begin{itemize}
\item if $m = m'=\emptyset$, we do not have $m<m'$; 
\item
 if $m=\emptyset$ and $m'\neq\emptyset$, then $m<m'$ (and we do not have $m'<m$); 
 \item otherwise $m\neq\emptyset$, $m'\neq\emptyset$, and we have $m <m'$ iff one of the following holds, where $M= \max (\Supp{m})$ and $M'= \max (\Supp{m'})$:
\begin{itemize}
\item
$M<M'$
\item
$M=M'$ and $m(M)<m'(M')$
\item
$M=M'$, $m(M)=m'(M')$ and $m_{1} < m'_{1}$, where $m_{1}\in \finitemultisets{X}$, $\Supp{m_{1}}=\Supp{m} \setminus \{ M \}$ and for every $x\in\Supp{m_{1}}$ one has $m_{1}(x)=m(x)$ (resp.\ where $m'_{1}\in \finitemultisets{X}$, $\Supp{m'_{1}}=\Supp{m'}\setminus \{ M \}$ and for every $x \in \Supp{m'_{1}}$ one has $m'_{1}(x)=m'(x)$).
\end{itemize}
\end{itemize}
\end{definition}

\begin{remark}\label{rem:MultisetisWF}
This definition is equivalent to the definition given in \cite{dershowitz-manna} and it is well-known that when $<$ is well-founded on $X$, so is also the induced relation on $\finitemultisets{X}$. In particular, if $(\integers,<)$ is the set of natural numbers with the usual order relation, the ordered set $(\finitemultisets{\integers}, <)$ is well-founded and we can thus prove properties by induction on the multiset order relation on $\finitemultisets{\integers}$.
\end{remark}

\begin{definition}\label{def:sizes}
The complexity of a $MELL$ formula $A$ (notation $\compl{A}$) is the number of occurrences of logical operators (meaning the symbols $1,\bot,\otimes,\lpar,?,!$) occurring in $A$.

Let $\pi$ be an $MELL$ net:
\begin{itemize}
\item
a cut-node of type $(!/?)$ is \emph{linear} when the $?$-node whose conclusion is a premise of the cut has a unique premise
\item
the \emph{cut-size} of $\pi$ (notation $\Cut{\pi}$) is the multiset of natural numbers such that $\Supp{\Cut{\pi}}=\{\compl{A}:\textrm{ $A$ and $A^{\bot}$ are the types of the premises of a cut-node of }\pi\}$, and if $n\in\Supp{\Cut{\pi}}$, then $\Cut{\pi}(n)$ is the number of cut-nodes of $\pi$ whose premises have types with complexity $n$.
\end{itemize}
\end{definition}

\begin{remark}
Notice that $\compl{A}=\compl{A^{\bot}}$ for any $MELL$ formula $A$, so that the types of the two premises of any cut-node always have the same complexity. We will thus speak in the sequel of the complexity of a cut-node, meaning the complexity of any of its premises.
\end{remark}

\begin{lemma}\label{lemma:CutSizeShrinks}
Let $\pi$ be a net and $t$ be a non-erasing cut-link of $\pi$ such that one of the following holds:
\begin{itemize}
\item
$t$ is not of type $(!/?)$ or $t$ is linear
\item
$t$ is a non linear $(!/?)$ cut-node and $\pi^{o}$ is cut-free, where $o$ is the $!$-link whose main conclusion is a premise of $t$ and $\pi^{o}$ is the box of $o$.
\end{itemize}
Then $\Cut{t(\pi)}<\Cut{\pi}$, following the multiset ordering of Definition~\ref{def:MultisetOrder}.
\end{lemma}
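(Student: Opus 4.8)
The plan is to proceed by a case analysis on the type of the non-erasing cut $t$: since we are in $MELL$ there are no clashes, so $t$ is of one of the four reducible types of Figures~\ref{fig:cut:multiplicatives} and~\ref{fig:cut}, and in each case I would track exactly which cut-nodes are destroyed and which are created (Definition~\ref{def:ancestors}), using that types are preserved along cut elimination and that $\compl{A}=\compl{A^{\bot}}$. The only property of the multiset ordering needed is the characteristic one (Definition~\ref{def:MultisetOrder}, Remark~\ref{rem:MultisetisWF}): if $m'$ is obtained from $m$ by removing one occurrence of some $n$ and adding finitely many naturals all strictly smaller than $n$, then $m'<m$.

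First I would dispatch the cases where $t$ is not of type $(!/?)$. For an $(ax)$-step the cut $t$ together with the axiom disappears and, by Definition~\ref{def:ancestors}, no cut-node is created while every other cut-node survives as a residue of the same type; thus $\Cut{t(\pi)}=\Cut{\pi}\setminus\{\compl{A}\}$, where $A$ is the type of $t$, which is strictly smaller. For a $(\lone/\lbot)$-step, $t$ (of complexity $1$) is removed and nothing is created. For a $(\ltens/\lpar)$-step, with $t$ cutting $A_{1}\ltens A_{2}$ against $A_{1}^{\bot}\lpar A_{2}^{\bot}$, the single cut of complexity $1+\compl{A_{1}}+\compl{A_{2}}$ is replaced by the two created cuts of complexities $\compl{A_{1}}$ and $\compl{A_{2}}$, both strictly smaller; so again $\Cut{t(\pi)}<\Cut{\pi}$.

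The interesting case is $(!/?)$, where $t$ cuts the main conclusion of type $!A$ of the box $o$ against the conclusion $?A^{\bot}$ of a $?$-node $w$ with $k$ premises (the step being non-erasing, $k\geq 1$). Here I would observe that the reduction removes the single cut of complexity $1+\compl{A}$ and, by Definition~\ref{def:ancestors}, creates exactly $k$ cuts, each between a copy of $\pi^{o}$'s main conclusion and a former premise of $w$, hence each of complexity $\compl{A}<1+\compl{A}$. The delicate point is the behaviour of the cut-nodes lying \emph{inside} $\pi^{o}$, since the step duplicates the box $k$ times. When $t$ is linear we have $k=1$, so no duplication occurs and the cuts of $\pi^{o}$ are merely preserved with unchanged complexity; the multiset then strictly decreases, one occurrence of $1+\compl{A}$ being replaced by one occurrence of $\compl{A}$. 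When $t$ is non-linear ($k\geq 2$), the hypothesis that $\pi^{o}$ is cut-free is exactly what guarantees that the $k$ copies of the box contribute no cut-node at all, so the only net change is again to remove one $1+\compl{A}$ and add $k$ copies of $\compl{A}$.

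The main obstacle is precisely this last point, and it is also what dictates the shape of the statement: without the cut-free hypothesis a non-linear $(!/?)$-step could duplicate a cut-node of $\pi^{o}$ whose complexity exceeds $1+\compl{A}$, and then $\Cut{t(\pi)}$ could actually be \emph{larger} than $\Cut{\pi}$ in the multiset ordering. I would therefore take care to verify that the only structure duplicated by the step consists of (copies of) the cut-free box $\pi^{o}$ together with the structural premises of the $?$-nodes attached to the auxiliary conclusions of $o$ (which carry no cut-node), so that no cut outside $\pi^{o}$ is ever duplicated. Once this is established, the characteristic property of the multiset ordering closes every case.
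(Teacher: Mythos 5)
Your proposal is correct and follows essentially the same route as the paper's proof: a case analysis on the type of the reduced cut, tracking which cut-nodes are destroyed and created, and invoking the cut-free hypothesis in the non-linear $(!/?)$ case precisely to rule out duplication of cuts inside $\pi^{o}$. You spell out the complexity computations and the behaviour of the duplicated structure more explicitly than the paper (which treats the non-$(!/?)$ and linear cases as ``obvious''), but the decomposition and the key use of the Dershowitz--Manna property of the multiset ordering are the same.
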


\begin{proof}
If $t$ is not of type $(!/?)$ or $t$ is linear, it is obvious, following Definition~\ref{def:Scutreduction}, that every cut-node of $\pi$ different from $t$ appears unchanged in $t(\pi)$ and that $t$ ``becomes'' one or more cuts, but in any case all these cuts have complexity strictly smaller than the one of $t$: $\Cut{t(\pi)}<\Cut{\pi}$.

If $t$ is a non linear $(!/?)$ cut-node such that $\pi^{o}$ is cut-free, recalling Figure~\ref{fig:cut} one can see that $t$ ``becomes'' $k\geq 2$ cuts with complexity strictly smaller than the complexity of $t$. Concerning the other cut-nodes, again it is obvious that a cut-node of $\pi$ different from $t$ \emph{which does not occur in $\pi^{o}$} appears unchanged in $t(\pi)$. Now we can apply the crucial hypothesis that $\pi^{o}$ is cut-free: the nodes of $\pi^{o}$ appear several times in $t(\pi)$, but none of them is a cut-node. Then $\Cut{t(\pi)}<\Cut{\pi}$.
\end{proof}

\begin{proposition}\label{prop:WnforMELL}
If $\pi$ is an $MELL$ net, then $\pi\in\textbf{WN}^{\nonerasing}$.
\end{proposition}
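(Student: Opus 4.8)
The plan is to argue by well-founded induction on a measure that records the cuts of $\pi$ by complexity, using the multiset ordering of Definition~\ref{def:MultisetOrder}, which is well-founded on $\finitemultisets{\integers}$ by Remark~\ref{rem:MultisetisWF}. Concretely I would induct on $\Cut{\pi}$ — or, more safely, on the sub-multiset recording only the complexities of the \emph{non-erasing} cut-nodes of $\pi$ — and show that as long as $\pi$ has a non-erasing cut there is a non-erasing cut $t$ whose reduction strictly decreases this measure. Since $\pi \onenonerasing t(\pi)$, the induction hypothesis gives $t(\pi) \in \textbf{WN}^{\nonerasing}$, and prepending the step $\pi \onenonerasing t(\pi)$ yields $\pi \in \textbf{WN}^{\nonerasing}$. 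The base case is when $\pi$ has no non-erasing cut at all: then $\pi$ is $\nonerasing$-normal and lies in $\textbf{WN}^{\nonerasing}$ via the empty reduction sequence. (MELL nets being clash-free, no clash obstruction arises.)

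For the inductive step I would split on the shape of the available non-erasing cut, exactly along the two cases of Lemma~\ref{lemma:CutSizeShrinks}. If some non-erasing cut $t$ is not of type $(!/?)$ or is a linear $(!/?)$ cut, I reduce it: no box is duplicated, every other cut survives unchanged with the same complexity, and $t$ is replaced by cuts of strictly smaller complexity, so the measure drops (this is precisely the first case of Lemma~\ref{lemma:CutSizeShrinks}). Otherwise every non-erasing cut is a non-linear $(!/?)$ cut, and here I would pick such a $t$ of maximal depth. Maximality of $\depth{t}$ guarantees that the box $\pi^o$ of the relevant $!$-link $o$ contains no non-erasing cut, so that duplicating $\pi^o$ during the reduction of $t$ creates no new non-erasing cut, while $t$ itself is replaced by cuts of complexity $\compl{A} = \compl{!A}-1$; hence the non-erasing measure again decreases.

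The main obstacle is exactly this last case, the duplication of a box by a non-linear $(!/?)$ step. Lemma~\ref{lemma:CutSizeShrinks} only controls $\Cut{t(\pi)}$ when the box is genuinely cut-free, whereas the boxes one meets here may still carry \emph{erasing} cuts that cannot be removed by non-erasing steps and that get duplicated; such erasing cuts may even have arbitrarily large complexity, so $\Cut{\pi}$ taken over \emph{all} cuts need not decrease under a non-erasing step. The point to make precise is that these residual erasing cuts are harmless for the argument: by restricting the measure to the complexities of non-erasing cuts, the duplicated erasing cuts do not contribute, and the maximal-depth choice ensures the box holds no non-erasing cut to be duplicated, so the conclusion $\Cut{t(\pi)} < \Cut{\pi}$ of Lemma~\ref{lemma:CutSizeShrinks} transfers to our (non-erasing) measure.

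I would therefore first record the elementary fact that whenever a net has a non-erasing cut it has one that is not a non-linear $(!/?)$ cut, or else one of maximal depth whose box is free of non-erasing cuts, and then run the induction above. The only quantitative computation needed is the bookkeeping of complexities in the $(!/?)$ step — that the created cuts drop from $\compl{!A}$ to $\compl{A}$ and that no non-erasing cut of complexity $\geq \compl{!A}$ is duplicated — which is the content of (the adaptation of) Lemma~\ref{lemma:CutSizeShrinks}; everything else is the routine verification that $\pi \onenonerasing t(\pi)$ together with $t(\pi)\in\textbf{WN}^{\nonerasing}$ gives $\pi\in\textbf{WN}^{\nonerasing}$.
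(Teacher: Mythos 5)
Your proof is correct, and its skeleton is the same as the paper's: a well-founded induction on a multiset of cut complexities (Definition~\ref{def:MultisetOrder}, Remark~\ref{rem:MultisetisWF}), with the inductive step driven by the case analysis of Lemma~\ref{lemma:CutSizeShrinks}. However, your two deviations --- measuring only the \emph{non-erasing} cuts, and selecting a non-erasing cut of \emph{maximal depth} in the non-linear $(!/?)$ case --- are not mere caution: they repair a point that the paper's own text glosses over. The paper inducts on $\Cut{\pi}$ taken over \emph{all} cuts and, in the case where every cut is a non-linear $(!/?)$ cut, picks an innermost cut $t$ (one whose box is cut-free); such a $t$ always exists, but nothing guarantees it is non-erasing, and both Lemma~\ref{lemma:CutSizeShrinks} and the concluding step (deducing $\pi\in\textbf{WN}^{\nonerasing}$ from $t(\pi)\in\textbf{WN}^{\nonerasing}$, which needs $\pi \onenonerasing t(\pi)$) require $t$ non-erasing. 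One can build an $MELL$ net whose only non-erasing cut is a non-linear $(!/?)$ cut whose box contains an erasing cut, while the only cuts with cut-free boxes are erasing: there no cut satisfies the hypothesis of Lemma~\ref{lemma:CutSizeShrinks}, and reducing the non-erasing cut duplicates the erasing one, so that $\Cut{}$ computed over all cuts can even \emph{increase} along the only available non-erasing step. Your restricted measure is immune to this (duplicated erasing cuts contribute nothing), and your maximal-depth choice guarantees that the box of the selected non-erasing cut contains no non-erasing cut, which is exactly what your adapted version of the lemma needs. In short, yours is the same approach carried out watertight; what it buys, compared with the paper's literal argument, is precisely the case of boxes carrying erasing cuts that non-erasing reduction can never remove.

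One small point worth making explicit when you write up the inductive step: for the restricted measure to behave as claimed, you should note that under any reduction step an erasing cut can never become non-erasing (a $0$-ary $?$-link never acquires premises, since only premises already passing through auxiliary conclusions get duplicated), while a non-erasing residue may at worst become erasing (e.g.\ after an $(ax)$ step), which only decreases your measure further. With that observation, every case of your induction closes.
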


\begin{proof}
It is an immediate consequence of Lemma~\ref{lemma:CutSizeShrinks} and of the following observation: if $\pi$ contains a non-erasing cut-node, then there exists a cut-node $t$ of $\pi$ satisfying the hypothesis of Lemma~\ref{lemma:CutSizeShrinks}. Indeed, either there exists in $\pi$ a linear cut-node or a cut-node which is not of type $(!/?)$, and we are done. Or every cut-node of $\pi$ is a non linear $(!/?)$ cut-node, in which case there exists a $!$-link $o$ of $\pi$ whose main conclusion is a premise of a cut-node $t$ and such that its box $\pi^{o}$ is cut-free.

More precisely, the proof is by induction on $\Cut{\pi}$. If $\pi$ is $\nonerasing$-normal the conclusion is immediate. Otherwise, by the previous observation, there exists a cut-node $t$ of $\pi$ satisfying the hypothesis of Lemma~\ref{lemma:CutSizeShrinks}. We thus have $\Cut{t(\pi)}<\Cut{\pi}$ and we can apply the induction hypothesis to $t(\pi)$: from $t(\pi)\in\textbf{WN}^{\nonerasing}$ it follows that $\pi\in\textbf{WN}^{\nonerasing}$.
\end{proof}

\begin{remark}
It is immediate to extend Lemma~\ref{lemma:CutSizeShrinks} to the case of erasing cuts, so that the proof of Proposition~\ref{prop:WnforMELL} becomes a (very easy) proof of weak normalization for $MELL$.
\end{remark}


\begin{corollary}\label{corollary:SN}
Every $MELL$ net is strongly normalizable.
\end{corollary}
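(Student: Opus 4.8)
The plan is simply to chain the two results already established, one specific to $MELL$ nets and one valid for arbitrary nets. First, Proposition~\ref{prop:WnforMELL} tells us that any $MELL$ net $\pi$ lies in $\textbf{WN}^{\nonerasing}$: starting from $\pi$ one can always reach a $\nonerasing$-normal net by a non-erasing reduction sequence. This is the genuinely $MELL$-specific ingredient, obtained by the well-founded induction on $\Cut{\pi}$ carried out in Proposition~\ref{prop:WnforMELL}, which in turn rests on Lemma~\ref{lemma:CutSizeShrinks} and on the well-foundedness of the multiset ordering recalled in Remark~\ref{rem:MultisetisWF}. Second, the Conservation Theorem (Theorem~\ref{theorem: conservation}) asserts the identity $\textbf{WN}^{\nonerasing} = \textbf{SN}$ for all nets of Definition~\ref{def:struct}. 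Combining the two, one gets $\pi \in \textbf{WN}^{\nonerasing} = \textbf{SN}$, so $\pi$ is strongly normalizable.

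I would note explicitly that the clash condition built into the definition of $\textbf{SN}$ causes no trouble here: cut elimination preserves typability, so every reduct of an $MELL$ net is again an $MELL$ net and therefore clash-free, as observed at the beginning of Subsection~\ref{subsection:conservation}. Hence the only real content of the corollary is the inclusion $\textbf{WN}^{\nonerasing} \subseteq \textbf{SN}$, and this is exactly what the Conservation Theorem supplies.

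The point worth emphasizing is where the weight of the argument actually sits. Unlike the traditional route to strong normalization of $MELL$, the Conservation Theorem is not proven here via confluence but is instead a consequence of the semantic characterization of strong normalization through the $\smbis{}$-interpretation (Theorem~\ref{theorem:qualitativeSN}, together with Proposition~\ref{prop: WNstratnonerasing => smbis non-empty}). So although the final deduction is a one-line composition, its true substance is the semantic machinery developed in Section~\ref{sect:SN}. I expect no obstacle in the deduction itself: any difficulty has already been absorbed into the proofs of Proposition~\ref{prop:WnforMELL} and of the Conservation Theorem, and the corollary is merely the payoff of those two results.

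\begin{proof}
Let $\pi$ be an $MELL$ net. By Proposition~\ref{prop:WnforMELL}, $\pi \in \textbf{WN}^{\nonerasing}$. By the Conservation Theorem (Theorem~\ref{theorem: conservation}), $\textbf{WN}^{\nonerasing} = \textbf{SN}$, hence $\pi \in \textbf{SN}$, that is, $\pi$ is strongly normalizable.
\end{proof}
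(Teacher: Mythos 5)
Your proof is correct and coincides with the paper's own argument: the corollary is obtained exactly by combining Proposition~\ref{prop:WnforMELL} with the Conservation Theorem (Theorem~\ref{theorem: conservation}). Your additional observations about clash-freeness of $MELL$ nets and about where the real weight of the argument lies are accurate and consistent with the paper's discussion.
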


\begin{proof}
Apply Proposition~\ref{prop:WnforMELL} and Theorem~\ref{theorem: conservation}.
\end{proof}

\begin{remark}\label{rem:StabilitySem=Conservation}
The proof of strong normalization for linear logic or for any of its remarkable fragments is usually split in two parts: weak normalization and a conservation theorem (see~\cite{ll},\cite{phddanos},\cite{SNLL10}), relying on a confluence result. The only strong normalization proofs we know for (fragments of) linear logic that do not use confluence are by Joinet (~\cite{phdjoinet}) and Accattoli (\cite{Accattoli13}). Our proof follows the traditional pattern (weak normalization+conservation theorem), but the conservation theorem (whose proof is usually very delicate: see~\cite{phddanos},\cite{SNLL10}) is here an immediate consequence of our ``semantic'' approach. In particular, our proof does not rely on confluence.
\end{remark}

\section{Quantitative account}\label{sect:SNquantitative}

In this section, we answer Question~\ref{due2bis} raised in the introduction: the point is to compute $\strong{\seq{\pi | \pi'}_{c, c'}}$ from $\sm{\pi}$ and $\sm{\pi'}$ with $\pi$ and $\pi'$ cut-free. By Proposition~\ref{prop:SembisFromSem}, we can substitute $\smbis{\pi}$ and $\smbis{\pi'}$ for $\sm{\pi}$ and $\sm{\pi'}$. On the other hand, by Corollary~\ref{corollary:postponingerasing}, we know that (provided $\cutnets{\pi}{\pi'}{c}{c'}$ is strongly normalizable) there exists $R_1 : \cutnets{\pi}{\pi'}{c}{c'} \stratnonerasingred \pi_1$ and $R_2 : \pi_1 \erasingred \pi_2$ antistratified, such that $\pi_1$ is $\nonerasing$-normal, $\pi_2$ is cut-free and $\strong{\cutnets{\pi}{\pi'}{c}{c'}} = \length{R_1} + \length{R_2}$. Summing up, in order to answer our question, we can compute $\length{R_1}$ and $\length{R_2}$ from $\smbis{\pi}$ and $\smbis{\pi'}$\footnote{Notice that since $\smbis{\cutnets{\pi}{\pi'}{c}{c'}}$ can be easily obtained from $\smbis{\pi}$ and $\smbis{\pi'}$, we can also freely use $\smbis{\cutnets{\pi}{\pi'}{c}{c'}}$.}

An important step in the computation of $\length{R_1}$ is the passage through experiments of $\cutnets{\pi}{\pi'}{c}{c'}$: we prove in Proposition~\ref{proposition : non-erasing stratified reduction} that $\length{R_1}$ can be expressed in terms of $\sizeexperiment{e}$, where $e$ is an $\smbis{}$-experiment of $\cutnets{\pi}{\pi'}{c}{c'}$ with minumum size. In the proof of Theorem~\ref{theorem:exactSN}, we show how 
$\sizeexperiment{e}$ can be obtained from suitable points of $\smbis{\pi}$ and $\smbis{\pi'}$.

Concerning $\length{R_2}$, notice that if we could know the exact number of (erasing) cut-links of $\pi_{1}$, we would also know $\length{R_2}$: these two numbers coincide, since obviously the length of any antistratified reduction sequence starting from a $\nonerasing$-normal net and leading to a cut-free net is the number of cuts of the $\nonerasing$-normal net. We thus compute the number of cut-links of $\pi_{1}$ in Lemma~\ref{lemma : erasing cuts}: it is the second component of $(\result{e_{1}},\mathcal{W}(e_{1}))\in\smbis{\cutnets{\pi}{\pi'}{c}{c'}}$, where $\sizeexperimentbis{e_1} = \min \{ \sizeexperimentbis{e} \: ; \: e \textrm{ is an $\smbis{}$-experiment of $\pi_{1}$} \}$. Lemma~\ref{lemma : key-lemma : strat} allows then to conclude that $(\result{e_{1}},\mathcal{W}(e_{1}))=(\result{e_{0}},\mathcal{W}(e_{0}))$, where $\sizeexperimentbis{e_0} = \min \{ \sizeexperimentbis{e} \: ; \: e \textrm{ is an $\smbis{}$-experiment of $\cutnets{\pi}{\pi'}{c}{c'}$} \}$. In the proof of Theorem~\ref{theorem:exactSN}, we explain how to select $(x,\mathcal{W})\in\smbis{\cutnets{\pi}{\pi'}{c}{c'}}$ so that $(x,\mathcal{W})=(\result{e_{0}},\mathcal{W}(e_{0}))$.

\begin{lemma}\label{lemma : erasing cuts}
Let $\pi$ be a $\nonerasing$-normal net. 
Let $e_0$ be an $\smbis{}$-experiment of $\pi$ such that 
$$\sizeexperimentbis{e_0} = \min \{ \sizeexperimentbis{e} \: ; \: e \textrm{ is an $\smbis{}$-experiment of $\pi$} \}.$$ 
Then $\card{\weakeningsofexperiment{e_0}}$ is the number of cuts of $\pi$.
\end{lemma}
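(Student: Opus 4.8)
The plan is to prove the lemma by exhibiting two independent lower bounds—one on the $\sm{}$-size $\sizeexperiment{}$ and one on the weakening cardinality $\card{\weakeningsofexperiment{}}$ of an arbitrary $\smbis{}$-experiment—and then to observe that both are attained \emph{simultaneously} exactly at the minimum of $\sizeexperimentbis{}$, by rigidity of the sum. Write $N$ for the number of cuts of $\pi$. Since $\pi$ is $\nonerasing$-normal, every cut-link of $\pi$ is erasing, so each cut of $\pi$ (at any depth) joins the conclusion $c$ of a $0$-ary $?$-link to the main conclusion $d$ of a $!$-link $o$; since the conclusion of a $?$-link can be premise of at most one cut, distinct cuts yield distinct such $?$-links, and thus $N$ equals the number of $0$-ary $?$-links of $\pi$ whose conclusion is a premise of a cut.

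First I would bound $\card{\weakeningsofexperiment{e}}$ from below, for \emph{every} $\smbis{}$-experiment $e$. Fix a $0$-ary $?$-link whose conclusion $c$ is cut against the main conclusion $d$ of a $!$-link $o$. By Definition~\ref{def:experiment} an $\smbis{}$-experiment assigns to $o$ a multiset $e(o)$ of $n > 0$ experiments, so $e(d) = (+, \nu)$ with $\card{\nu} = n \geq 1$; the cut condition $e(c) = e(d)^{\perp}$ then \emph{forces} $e(c) = (-, \nu^{\perp})$ with $\card{\nu^{\perp}} = n \geq 1$. Unfolding the recursive definition of $\weakeningsofexperiment{e}$, the contribution of this $?$-link is a sum of terms $\card{\mu} \geq 1$, one for each iterated choice of a copy of its enclosing boxes; as each multiset $e(o')$ is non-empty, there is at least one such term. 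Contributions coming from distinct cut-premise $?$-links land in distinct summands, so $\card{\weakeningsofexperiment{e}} \geq N$ for all $e$.

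Next, by Remark~\ref{remark:1-experiments} one has $\sizeexperiment{e} \geq \sizenet{\pi}$ for every $\smbis{}$-experiment $e$, with equality precisely when $e$ takes a single copy of every box. Combining the two bounds through $\sizeexperimentbis{e} = \sizeexperiment{e} + 2\card{\weakeningsofexperiment{e}}$ gives $\sizeexperimentbis{e} \geq \sizenet{\pi} + 2N$ for all $e$. This value is attained: a $1$-$\smbis{}$-experiment, which exists for $\pi$ $\nonerasing$-normal and can be chosen $w$-sparing (Remark~\ref{remark:1-experiments}), assigns $(-,[])$ to every $0$-ary $?$-link that is \emph{not} premise of a cut and—by the forcing above—a multiset of cardinality exactly $1$ to every $0$-ary $?$-link that \emph{is} premise of a cut, so for it $\sizeexperiment{e} = \sizenet{\pi}$ and $\card{\weakeningsofexperiment{e}} = N$. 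Hence $\min\{\sizeexperimentbis{e} \: ; \: e \text{ an } \smbis{}\text{-experiment of } \pi\} = \sizenet{\pi} + 2N$.

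Finally the conclusion follows by rigidity of the minimum: since $e_0$ attains it, $\sizeexperiment{e_0} + 2\card{\weakeningsofexperiment{e_0}} = \sizenet{\pi} + 2N$, while $\sizeexperiment{e_0} \geq \sizenet{\pi}$ and $\card{\weakeningsofexperiment{e_0}} \geq N$; both inequalities must therefore be equalities, giving $\card{\weakeningsofexperiment{e_0}} = N$, the number of cuts of $\pi$. The only delicate step is the lower bound $\card{\weakeningsofexperiment{e}} \geq N$: it rests entirely on the fact that an $\smbis{}$-experiment may never assign the empty multiset of experiments to a $!$-link (Remark~\ref{rem:exp+expbis}), which is exactly what forces a strictly positive weakening at each erasing cut and so ties $\card{\weakeningsofexperiment{}}$ to the cut count; the rest is bookkeeping with the additive definitions of $\sizeexperiment{}$ and $\weakeningsofexperiment{}$.
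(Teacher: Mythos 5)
Your proposal is correct and takes essentially the same approach as the paper: both arguments rest on the existence of a $w$-sparing $1$-$\smbis{}$-experiment of the $\nonerasing$-normal net (Remark~\ref{remark:1-experiments}) attaining the minimum of $\sizeexperimentbis{e}$, and on the fact that non-emptiness of the multisets assigned to $!$-links forces each cut-premise weakening to contribute at least one element. Your explicit two-lower-bound-plus-rigidity argument in fact supplies the justification for the step the paper's proof merely asserts, namely that any minimizer of $\sizeexperimentbis{e}$ behaves (numerically) like a $w$-sparing $1$-$\smbis{}$-experiment, so that its weakening cardinality equals the number of cuts.
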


\begin{proof}
Given a $\nonerasing$-normal net $\pi$, if there exists a $w$-sparing $1$-$\smbis{}$-experiment $e_1$ of $\pi$, then 
\begin{itemize}
\item any $\smbis{}$-experiment $e_0$ such that $\sizeexperimentbis{e_0} = \min \{ \sizeexperimentbis{e} \: ; \: e \textrm{ is an $\smbis{}$-experiment of $\pi$} \}$ is a $w$-sparing $1$-$\smbis{}$-experiment
\item and $\card{\weakeningsofexperiment{e_1}}$ is the number of cuts of $\pi$.
\end{itemize}
To conclude, notice that there always exists a $w$-sparing $1$-$\smbis{}$-experiment of a $\nonerasing$-normal net (Remark~\ref{remark:1-experiments}).
\end{proof}

We now need a notion of size of an element of the $\smbis{}$-interpretation of a net, which is a particular case of size of an element of $D^{n}\times\finitemultisets{D}$. Like for the notion of size of an experiment, we use the notion of size of an element of $D$ introduced in~\cite{CarvPagTdF10}:

\begin{definition}[Size of elements]\label{def:sizeelement}
For every $x\in D$, we define the \emph{size $\sizepoint{x}$ of $x$}, by induction on $\rank{x}$. Let $p \in \{+,-\}$,
\begin{minilist}
\item if $x \in \{ +, - \} \times A$ or $x= (p, \ast)$, then $\sizepoint{x}=1$;
\item if $x = (p, y, z)$, then $\sizepoint{x} = 1 + \sizepoint{y} + \sizepoint{z}$;
\item if $x = (p, [x_1, \ldots, x_m])$, then $ \sizepoint{x} = 1 + \sum_{j=1}^m \sizepoint{x_{j}}$;
\end{minilist}

\bigskip

Given $(x_1, \ldots, x_n) \in D^n$ ($n\geq 0$), we set $ \sizepoint{x_1, \ldots, x_n} = \sum_{i=1}^n \sizepoint{x_{i}}$ and $\sizepoint{[x_1, \ldots, x_n]} = \sum_{i=1}^n \sizepoint{x_{i}}$  

Let $n\geq 1$ and $(\textbf{x}, \mathcal{W}) \in D^{n} \times \finitemultisets{D}$. Then we set 
$\sizepointbis{\textbf{x}, \mathcal{W}} = \sizepoint{\textbf{x}} + \sum_{\alpha \in D} \mathcal{W}(\alpha) \cdot (\sizepoint{\alpha} + 2)$.
\end{definition}

\begin{remark}
Notice that for every point $x \in D$ or $x \in \finitesequences{D} \cup \finitemultisets{D}$, $\sizepoint{x}$ is the number of occurrences of $+$, $-$ in $x$ (seen as a word).
\end{remark}

\begin{definition}
Let $n \geq 1$. For any $X \subseteq D^{n} \times \finitemultisets{D}$, we set $\sizebisinf{X} = \inf \{ \sizepointbis{x} ; x \in X \} \in \integers \cup \{ \infty \}$.
\end{definition}

\begin{lemma}\label{lemma : sbis_inf(pi)}
Let $\pi$ be a $\flat$-net with $k$ structural conclusions. 
If $\pi$ is $\nonerasing$-normal, then we have $\sizebisinf{\smbis{\pi}} = \sizenet{\pi} + k = \min \{ \sizeexperiment{e} \: ; \: e \textrm{ is an $\smbis{}$-experiment of $\pi$} \} +k$.
\end{lemma}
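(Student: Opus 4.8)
The plan is to prove the two displayed equalities separately. The second one, $\sizenet{\pi}+k=\min\{\sizeexperiment{e}\mid e\textrm{ is an }\smbis{}\textrm{-experiment of }\pi\}+k$, reduces to $\sizenet{\pi}=\min\{\sizeexperiment{e}\}$, which is exactly Remark~\ref{remark:1-experiments}: since $\pi$ is $\nonerasing$-normal it admits a $1$-$\smbis{}$-experiment $e$, and any such $e$ satisfies $\sizeexperiment{e}=\sizenet{\pi}=\min\{\sizeexperiment{e'}\}$. So all the work lies in the first equality $\sizebisinf{\smbis{\pi}}=\sizenet{\pi}+k$, which I would split into the per-experiment inequality $\sizepointbis{\result{e},\weakeningsofexperiment{e}}\geq\sizeexperiment{e}+k$ (giving $\sizebisinf{\smbis{\pi}}\geq\min\{\sizeexperiment{e}\}+k=\sizenet{\pi}+k$), together with the exhibition of a single experiment attaining the value $\sizenet{\pi}+k$.

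The central tool for the lower bound is a conservation identity for the size of results. Writing $s$ for the size of elements (Definition~\ref{def:sizeelement}) and using that $s$ is invariant under $(\ )^{\perp}$, one has, because every edge of $\ground{\pi}$ is the conclusion of exactly one link and the premise of at most one link (so premise-contributions telescope against all conclusion-contributions except those of $\pi$'s conclusions),
\[
\sizepoint{\result{e}} \;=\; \sum_{v \textrm{ a link of } \ground{\pi}} \Bigl( \sum_{c \textrm{ conclusion of } v} \sizepoint{e(c)} \;-\; \sum_{a \textrm{ premise of } v} \sizepoint{e(a)} \Bigr).
\]
I would then read off the per-link increment: $2\sizepoint{e(a)}\geq 2$ at an axiom, $1$ at each $1,\bot,\otimes,\lpar,\flat$ link, $1-n$ at an $n$-ary $?$-link with $n\geq1$, $1+\sizepoint{\mu}$ at a $0$-ary $?$-link labelled $(-,\mu)$, and $-2(1+\sizepoint{\mu})$ at an erasing cut whose $0$-ary $?$-partner carries $(-,\mu)$ (its $!$-partner producing exactly the dual material). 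The decisive combinatorial fact is the count of structural edges: every structural edge of $\ground{\pi}$ is the conclusion of a $\flat$-link or an auxiliary conclusion of a $!$-link, and is either a premise of a $?$-link or one of the $k$ structural conclusions of $\pi$; this bookkeeping is precisely what produces the summand $+k$.

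I would run the argument by induction on $\depth{\pi}$. In the base case $\depth{\pi}=0$ there are no $!$-links, hence no erasing cuts, and (being $\nonerasing$-normal, so also clash-free) $\ground{\pi}$ has no cuts at all; substituting the increments into the identity and using the structural-edge count together with $\sizepoint{e(a)}\geq1$ gives $\sizepointbis{\result{e},\weakeningsofexperiment{e}}\geq\sizenet{\pi}+k=\sizeexperiment{e}+k$. For the inductive step I would apply the induction hypothesis to each box experiment $e^o_i$ of $\pi^o$ (a $\flat$-net with $n_o$ structural conclusions) as $\sizepointbis{\result{e^o_i},\weakeningsofexperiment{e^o_i}}\geq\sizeexperiment{e^o_i}+n_o$, treat each $!$-link as a source whose output is governed by $\sum_i\sizepoint{\result{e^o_i}}$, and note that the box weakenings recorded in $\weakeningsofexperiment{e}$ are exactly those fed by the induction hypothesis. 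The surviving erasing-cut terms $-2(1+\sizepoint{\mu})$ are absorbed: the partner $0$-ary $?$-link contributes $\mu$ to $\weakeningsofexperiment{e}$ with $\card{\mu}=m\geq1$ equal to the number of copies of the box (here the defining constraint of $\smbis{}$-experiments, that every $!$-link receive a \emph{nonempty} multiset, is what forces $m\geq1$), so $2\sizepoint{\mu}+2\card{\mu}\geq 2(1+\sizepoint{\mu})$ cancels each cut exactly.

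For the matching upper bound I would take a $w$-sparing (Definition~\ref{def:w-sparing}) $1$-$\smbis{}$-experiment $e^\ast$ with atomic axioms, which exists by Remark~\ref{remark:1-experiments}; tracking equality through the very same computation — all axiom sizes equal to $1$, every free $0$-ary $?$-link labelled $(-,\multiset{})$, every cut-partnered $0$-ary $?$-link labelled by a singleton (so $m=1$) — yields $\sizepointbis{\result{e^\ast},\weakeningsofexperiment{e^\ast}}=\sizenet{\pi}+k$, hence $\sizebisinf{\smbis{\pi}}\leq\sizenet{\pi}+k$, and combining the two bounds closes the first equality. The main obstacle is the inductive step: matching the weakening multisets contributed from inside the boxes against the induction hypothesis, and verifying that the negative contribution of each erasing cut is exactly cancelled by the forced weakening on its $0$-ary $?$-partner — the two places where the structural-edge count and the nonemptiness constraint on $\smbis{}$-experiments are indispensable.
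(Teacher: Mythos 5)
Your proposal is correct, but it follows a genuinely different route from the paper's proof. The paper argues by syntactic transfer: it builds from $\pi$ a \emph{cut-free} net $\pi'$ (erase each cut-link together with the weakening-link feeding it, and graft a $\flat$-link plus a unary $?$-link under each $!$-link whose main conclusion was cut), observes that $\sizenet{\pi'}=\sizenet{\pi}$, checks that every $w$-sparing $1$-$\smbis{}$-experiment $e$ of $\pi$ induces a $1$-experiment $e'$ of $\pi'$ with $\sizepoint{\result{e'}}=\sizepointbis{\result{e},\weakeningsofexperiment{e}}$, then computes this value to be $\sizenet{\pi'}+k$ for one explicit experiment taking $(p,\ast)$ on axioms, and asserts in a single closing sentence that this particular experiment attains $\sizebisinf{\smbis{\pi}}$. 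You instead stay on $\pi$ itself and prove the per-experiment lower bound $\sizepointbis{\result{e},\weakeningsofexperiment{e}}\geq\sizeexperiment{e}+k$ for \emph{every} $\smbis{}$-experiment $e$, via the telescoping identity over links; your two key ingredients are the right ones: the structural-edge count ($\flat$-conclusions and auxiliary $!$-conclusions split into $?$-premises and the $k$ structural conclusions of $\pi$), which produces the $+k$, and the nonemptiness of the multisets an $\smbis{}$-experiment assigns to $!$-links, which forces $\card{\mu}\geq 1$ on the label $(-,\mu)$ of a cut weakening so that its contribution $\sizepoint{\mu}+2\card{\mu}$ to the $\mathcal{W}$-part of $\sizepointbis{\cdot}$ absorbs the cut term $-2(1+\sizepoint{\mu})$, leaving $2\card{\mu}-1\geq 1$. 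I verified your link-by-link increments and the $!$-link contribution $1+n_o(1-m_o)+\sum_i\sizepoint{\result{e^o_i}}$ (for a $!$-link with $n_o$ auxiliary conclusions receiving $m_o$ box experiments): the bookkeeping closes as you claim in both the base case and the inductive step, with equality exactly for the $w$-sparing atomic $1$-$\smbis{}$-experiment of Remark~\ref{remark:1-experiments}, which both you and the paper use as the witness, and both proofs settle the second displayed equality by that same remark. What your route buys is an explicit proof of the extremality claim that the paper only asserts (no element of $\smbis{\pi}$ has $\sizepointbis{\cdot}$ below $\sizenet{\pi}+k$, because your bound holds uniformly in $e$); what the paper's route buys is lighter bookkeeping, since after the surgery the comparison takes place on a cut-free net, where relating result sizes to net size is standard.
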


\begin{proof}
We consider the cut-free net $\pi'$ obtained from $\pi$ in two steps:
\begin{itemize}
\item first, we erase all the weakening-links premises of some cut-link and all the cut-links;
\item second, under every $!$-link whose conclusion was premise of some cut-link, we add a $\flat$-link and a unary $?$-link at depth $0$ under this $\flat$-link.
\end{itemize}
First, notice that we have $\sizenet{\pi'} = \sizenet{\pi}$. Second, notice that, for any $w$-sparing $1$-$\smbis{}$-experiment $e$ of $\pi$, the $1$-experiment $e'$ of $\pi'$ induced by $e$\footnote{Notice that $e'$ is both an $\smbis{}$-experiment and an $\sm{}$-experiment of $\pi'$.} enjoys the following property: 
$\sizepoint{\result{e'}} = \sizepointbis{\result{e}, \mathcal{W}(e)}$.

Now, since $\pi$ is $\nonerasing$-normal, we can define, by induction on $\depth{\pi}$, a $w$-sparing $1-\smbis{}$-experiment $e_1 : \pi$ that associates $(p, \ast)$ with the conclusions of axiom nodes. More precisely, $e_1$ is defined as follows: 
\begin{itemize}
\item with every conclusion of a weakening of $\ground{\pi}$ that is premise of some cut, $e_1$ associates the element $(-, [\alpha^\perp])$, where $\alpha$ is such that $e_1$ associates $(+, [\alpha])$ with the other premise of the cut;
\item with every pair of conclusions of every $ax$-link of $\ground{\pi}$, $e_1$ associates the pair of elements $(+, \ast)$, $(-, \ast)$ (it does not matter in which order);
\item with every $!$-link $o$, $e_1$ associates the singleton $[e_1^o]$, where $e_1^o$ is an experiment defined as $e_1$ on $\pi^o$ (notice that $\depth{\pi^o}<\depth{\pi}$).
\end{itemize}
We denote by $e'_1$ the $1$-experiment of $\pi'$ induced by $e_1$: we have $\sizepoint{\result{e_1'}} = \sizeexperiment{\pi'} + k$ (induction on $\depth{\pi'}$) and $\sizepoint{\result{e'_1}} = \sizepointbis{\result{e_1}, \mathcal{W}(e_1)}$, hence 
$\sizenet{\pi} + k = \sizenet{\pi'} + k = 
\sizepointbis{\result{e_1}, \mathcal{W}(e_1)}$. By Remark~\ref{remark:1-experiments}, we have $\sizenet{\pi} = \sizeexperiment{e_1} = \min \{ \sizeexperiment{e} \: ; \: e \textrm{ is an $\smbis{}$-experiment of $\pi$} \}$. Lastly, since $e_1$ is a $w$-sparing atomic $1$-experiment of $\pi$ that associates $(p, \ast)$ with the conclusions of axiom nodes, we have $\sizepointbis{\result{e_1}, \mathcal{W}(e_1)} = \sizebisinf{\smbis{\pi}}$.
\end{proof}

We can now compute the length of $R_1$ by means of experiments; this is of course only a first step, since (still keeping the notations of Corollary~\ref{corollary:postponingerasing}) we are only allowed to use the elements of $\smbis{\cutnets{\pi}{\pi'}{c}{c'}}$ and not the experiments that produce these elements.

\begin{proposition}\label{proposition : non-erasing stratified reduction}
Let $\pi$ be a net and let $\pi'$ be a $\nonerasing{}$-normal net. For every reduction sequence $R: \pi \stratnonerasingred \pi'$, and every $\smbis{}$-experiment $e_0$ of $\pi $ such that $\sizeexperimentbis{e_0} = \min \{ \sizeexperimentbis{e} \: ; \: e \textrm{ is an $\smbis{}$-experiment of $\pi$} \}$,  we have $ \length{R} = (\sizeexperiment{e_0} - \sizebisinf{\smbis{\pi}})/2 $.
\end{proposition}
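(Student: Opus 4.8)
The plan is to run the ``stratified Key-Lemma'' (Lemma~\ref{lemma : key-lemma : strat}) along the reduction $R$, thereby tracking a single minimal experiment all the way from $\pi$ to $\pi'$, and then to convert the resulting $\smbis{}$-size identity into the $\sm{}$-size identity demanded by the statement. Write $R : \pi = \rho_0 \onestratnonerasing \rho_1 \onestratnonerasing \cdots \onestratnonerasing \rho_n = \pi'$ with $n = \length{R}$, and set $m_i = \min \{ \sizeexperimentbis{e} \: ; \: e \textrm{ is an } \smbis{}\textrm{-experiment of } \rho_i \}$. Since each step reduces a \emph{stratified} non-erasing cut, items 1bis and 2bis of Lemma~\ref{lemma : key-lemma : strat} apply at every step: 1bis yields $m_{i+1} \leq m_i - 2$ and 2bis yields $m_{i+1} \geq m_i - 2$, so that $m_{i+1} = m_i - 2$. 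Starting from the given minimal experiment $e_0$ of $\pi$ and applying 1bis repeatedly, I would build experiments $e_i$ of $\rho_i$ with $\sizeexperimentbis{e_i} = m_i$ (hence each $e_i$ is again minimal, which is what licenses the next application of 1bis) and with $(\result{e_i}, \weakeningsofexperiment{e_i}) = (\result{e_0}, \weakeningsofexperiment{e_0})$. In particular $e_n$ is a minimal $\smbis{}$-experiment of $\pi'$ with $\weakeningsofexperiment{e_n} = \weakeningsofexperiment{e_0}$, and $2n = m_0 - m_n = \sizeexperimentbis{e_0} - \sizeexperimentbis{e_n}$.

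The second step converts this into the statement. Using $\sizeexperimentbis{e} = \sizeexperiment{e} + 2\card{\weakeningsofexperiment{e}}$ together with $\card{\weakeningsofexperiment{e_n}} = \card{\weakeningsofexperiment{e_0}}$, the weakening contributions cancel and $2n = \sizeexperimentbis{e_0} - \sizeexperimentbis{e_n} = \sizeexperiment{e_0} - \sizeexperiment{e_n}$. It then remains to identify $\sizeexperiment{e_n}$ with $\sizebisinf{\smbis{\pi}}$. Because $\pi'$ is $\nonerasing$-normal and $e_n$ is a minimal $\smbis{}$-experiment, $e_n$ is a $w$-sparing $1$-$\smbis{}$-experiment (this is exactly what is established inside the proof of Lemma~\ref{lemma : erasing cuts}), so by Remark~\ref{remark:1-experiments} one has $\sizeexperiment{e_n} = \sizenet{\pi'}$. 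Finally, $\smbis{\pi} = \smbis{\pi'}$ by invariance under non-erasing reduction (Proposition~\ref{prop:invariancesmbis}), and since $\pi'$ is a net (so it has $k = 0$ structural conclusions) Lemma~\ref{lemma : sbis_inf(pi)} gives $\sizebisinf{\smbis{\pi}} = \sizebisinf{\smbis{\pi'}} = \sizenet{\pi'}$. Hence $\sizeexperiment{e_n} = \sizebisinf{\smbis{\pi}}$ and $\length{R} = (\sizeexperiment{e_0} - \sizebisinf{\smbis{\pi}})/2$.

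The delicate point --- and the reason the two notions of size are both needed --- is precisely the bookkeeping of weakenings. The reduction is controlled naturally by $\sizeexperimentbis{}$ (which decreases by exactly $2$ at each stratified non-erasing step), whereas the formula to be proved is stated in terms of the plain size $\sizeexperiment{}$ and of the interpretation-level quantity $\sizebisinf{\smbis{\pi}}$. Bridging the two requires that the tracked experiment $e_n$ keep \emph{the same} multiset of weakenings as $e_0$ (so that the $2\card{\weakeningsofexperiment{}}$ terms cancel rather than merely being bounded) and that $e_n$ be simultaneously $\sizeexperimentbis{}$-minimal and $\sizeexperiment{}$-minimal on the $\nonerasing$-normal net $\pi'$; both facts rest on carrying the single experiment $e_0$ through the whole of $R$ via item 1bis rather than arguing with the minima $m_i$ abstractly. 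Once this is in place, the chain $2\length{R} = \sizeexperimentbis{e_0} - \sizeexperimentbis{e_n} = \sizeexperiment{e_0} - \sizeexperiment{e_n} = \sizeexperiment{e_0} - \sizebisinf{\smbis{\pi}}$ closes the argument.
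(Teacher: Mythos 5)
Your proposal is correct and takes essentially the same route as the paper's own proof: the paper argues by induction on $\length{R}$, applying items 1bis/2bis of Lemma~\ref{lemma : key-lemma : strat} to produce a minimal $\smbis{}$-experiment $e_1$ of the one-step reduct with the same result and weakenings (so that $\sizeexperiment{e_0}-\sizeexperiment{e_1}=\sizeexperimentbis{e_0}-\sizeexperimentbis{e_1}=2$), using invariance of $\smbis{}$ under non-erasing reduction, and closing the base case with Lemma~\ref{lemma : sbis_inf(pi)} --- precisely the iteration you unrolled into a telescoping sum. Your explicit observation that the $\sizeexperimentbis{}$-minimal experiment of the $\nonerasing$-normal net is a $w$-sparing $1$-$\smbis{}$-experiment (hence also $\sizeexperiment{}$-minimal) merely spells out what the paper's terse ``apply Lemma~\ref{lemma : sbis_inf(pi)}'' leaves implicit.
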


\begin{proof}
By induction on $\length{R}$. If $\length{R} = 0$, apply Lemma~\ref{lemma : sbis_inf(pi)}.

Now, $R = \pi \stratnonerasingred \pi_1 \stratnonerasingred \pi'$. By Lemma~\ref{lemma : key-lemma : strat}, there is an $\smbis{}$-experiment $e_1$ of $\pi_1$ such that $(\result{e_1}, \weakeningsofexperiment{e_1}) = (\result{e_0}, \weakeningsofexperiment{e_0})$, $\sizeexperimentbis{e_1} = \sizeexperimentbis{e_0} - 2$ and $\sizeexperimentbis{e_1} = \min \{ \sizeexperimentbis{e} \: ; \: e \textrm{ is an $\smbis{}$-experiment of $\pi_{1}$} \}$.

We have $\sizeexperiment{e_0} - \sizeexperiment{e_1} = \sizeexperimentbis{e_0} - \sizeexperimentbis{e_1} = 2$.

We apply the induction hypothesis to $\pi_1$. We have $\length{R} - 1 = (\sizeexperiment{e_1} - \sizebisinf{\smbis{\pi_1}}) / 2 = (\sizeexperiment{e_1} - \sizebisinf{\smbis{\pi}}) / 2 = (\sizeexperiment{e_0} - 2 - \sizebisinf{\smbis{\pi}}) / 2$
\end{proof}

The following lemma shows that if $\pi$ is cut-free and has no structural conclusions and $e$ is an $\smbis{}$-experiment of $\pi$, then $\sizeexperiment{e} \leq \sizepoint{\result{e}} - \sizepoint{\mathcal{W}(e)}$:

\begin{lemma}\label{lemma:sizeexperiment <= sizepoint}
Let $\pi$ be a cut-free $\flat$-net with $k$ structural conclusions (and possibly other logical conclusions) and let $e$ be an $\smbis{}$-experiment of $\pi$. Then we have $\sizeexperiment{e} \leq \sizepoint{\result{e}} - \sizepoint{\mathcal{W}(e)} - k$.
\end{lemma}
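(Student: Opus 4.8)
The plan is to prove the equivalent inequality $\sizeexperiment{e} + \sizepoint{\mathcal{W}(e)} + k \leq \sizepoint{\result{e}}$ by induction on $\sizenet{\pi}$, with the empty net as base case (all three summands on the left and the right being $0$). For the inductive step the key observation is that, since $\ground{\pi}$ is a finite nonempty dag, it has a \emph{maximal} link $l$ at depth $0$, i.e.\ a link none of whose conclusions is a premise of another link; consequently \emph{all} conclusions of $l$ are conclusions of $\pi$. Choosing such an $l$ is exactly what lets me treat uniformly the links with several conclusions ($ax$ and $!$), which would otherwise obstruct a naive bottom-up peeling. Removing $l$ (turning its premises, if any, into new pending conclusions) yields a strictly smaller cut-free $\flat$-net $\pi'$ to which the induction hypothesis applies.

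First I would dispatch the single-conclusion cases $1,\bot,\otimes,\lpar,\flat$ and the $?$-link with $n\geq 1$ premises. In each, removing $l$ deletes exactly one logical edge, so $\sizeexperiment{e} = \sizeexperiment{e'}+1$ (respectively $\sizeexperiment{e}=\sizeexperiment{e'}$ for $\flat$, whose conclusion is structural and hence uncounted), the multiset $\mathcal{W}$ is unchanged, and a direct check shows that the $\sizepoint$ lost at the removed conclusion $c$ is exactly compensated by the $\sizepoint$ of the exposed premises together with the change of $k$; the inequality is in fact an equality. The $0$-ary $?$-link (weakening) is analogous: with $e(c)=(-,a)$, both the result-size and the $\mathcal{W}$-size drop by $\sizepoint{a}$. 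The axiom is the only multiplicative case with genuine slack: both conclusions $a,b$ (with $e(a)=e(b)^{\perp}$, hence $\sizepoint{e(a)}=\sizepoint{e(b)}$) are conclusions of $\pi$, so removal drops $\sizeexperiment{e}$ by $2$ and $\sizepoint{\result{e}}$ by $2\sizepoint{e(a)}$, and the inequality follows from $\sizepoint{e(a)}\geq 1$.

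The main obstacle, and the only case invoking the induction hypothesis twice, is a maximal $!$-link $o$ with box $\pi^{o}$, main conclusion $c$ and auxiliary (structural) conclusions $d_{1},\dots,d_{n_{o}}$, where $e(o)=[e^{o}_{1},\dots,e^{o}_{m}]$ with $m\geq 1$. Here $\pi'$ is obtained by deleting $o$ \emph{and its box}, so $\sizenet{\pi'}=\sizenet{\pi}-1-\sizenet{\pi^{o}}$ and the hypothesis applies to $\pi'$; crucially, $\pi$ cut-free forces $\pi^{o}$ cut-free, so the hypothesis \emph{also} applies to the box $\pi^{o}$, which has $n_{o}$ structural conclusions. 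Writing $e^{o}_{j}(c^{o})$ for the value on the main conclusion $c^{o}$ of the box and $e^{o}_{j}(d^{o}_{i})=(-,\mu_{i,j})$ on its structural conclusions, the hypothesis on $\pi^{o}$ reads, for each $j$, $\sizeexperiment{e^{o}_{j}}+\sizepoint{\mathcal{W}(e^{o}_{j})}\leq \sizepoint{e^{o}_{j}(c^{o})}+\sum_{i}\sizepoint{\mu_{i,j}}$. I would then expand $\sizeexperiment{e}$, $\mathcal{W}(e)$, $k$ and $\sizepoint{\result{e}}$ in terms of their $\pi'$-counterparts and the box data, using $e(c)=(+,[e^{o}_{1}(c^{o}),\dots,e^{o}_{m}(c^{o})])$ and $e(d_{i})=(-,\sum_{j}\mu_{i,j})$, and combine with the hypothesis on $\pi'$. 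After cancellation the target inequality collapses exactly to the sum over $j$ of the per-copy inequalities above, closing the induction. The bulk of the work is this bookkeeping; once the maximal-link idea is in place every case is a mechanical computation, with strictness entering only through axioms and boxes.
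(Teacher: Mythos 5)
Your proof is correct and takes essentially the same route as the paper's: induction on $\sizenet{\pi}$ with a case analysis that peels off links, where the axiom case supplies the only multiplicative slack and the $!$-case applies the induction hypothesis to the (necessarily cut-free) box copies with exactly the bookkeeping you describe --- these two cases are precisely the ones the paper works out explicitly (as whole ground-structures), leaving the rest to the reader. Your maximal-link device is just a clean way of organizing the omitted cases, so there is nothing genuinely different to report.
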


\begin{proof}
The proof is by induction on $\sizeexperiment{\pi}$. If $\ground{\pi}$ is an axiom, then $k = 0$ and $\sizepoint{\mathcal{W}(e)} = 0$: if the elements of $D$ associated with the conclusions of the axiom are of the shape $(p, a)$ with $a \in A \cup \{ \ast \}$, then we have $\sizeexperiment{e} = \sizepoint{\result{e}}$; else, we have $\sizeexperiment{e} < \sizepoint{\result{e}}$. Now, assume that $\ground{\pi}$ is a $!$-link $o$ with $k$ structural conclusions. Set $e(o) = [e_1, \ldots, e_m]$ with $m \geq 1$ and let $\pi^o$ be the box of $o$. Notice that $\pi$ has $k+1$ conclusions. We have
\begin{eqnarray*}
\sizeexperiment{e} & = & 1 + \sum_{j=1}^m \sizeexperiment{e_j} \\
& \leq & 1 + \sum_{j=1}^m (\sizeexperiment{\result{e_j}} - \sizepoint{\mathcal{W}(e_j)} -k) \qquad\quad\textrm{(by induction hypothesis)}\\
& = & 1 + \sizepoint{\vert e \vert} - \sizepoint{\mathcal{W}(e)} - (k+1)\\
& = & \sizepoint{\result{e}} - \sizepoint{\mathcal{W}(e)} - k .
\end{eqnarray*}
The other cases are left to the reader.
\end{proof}

Provided the set of atoms $A$ is infinite, if the size $\sizeexperiment{e}$ of the experiment $e$ does not reach the bound of Lemma~\ref{lemma:sizeexperiment <= sizepoint}, one can always choose a representative of the $\sim$-equivalence class of $e$ whose size does reach the bound. More precisely:

\begin{lemma}\label{lemma:sizepoint atteint}
Assume $A$ is infinite. Let $\pi$ be a cut-free $\flat$-net with $k$ structural conclusions (and possibly other logical conclusions), and let $e$ be an $\smbis{}$-experiment of $\pi$. There exist $e' \sim e$ and a substitution $\sigma$ such that $\sizeexperiment{e'} = \sizepoint{\result{e'}} - \sizepoint{\mathcal{W}(e')} - k$ and $\sigma(\result{e'}, \mathcal{W}(e')) = (\result{e}, \mathcal{W}(e))$.
\end{lemma}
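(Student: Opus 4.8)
The plan is to build $e'$ from $e$ by keeping the whole ``combinatorial shape'' of $e$ untouched --- the multiplicities chosen at every $!$-link and the cardinalities of the multisets labelling the weakenings --- and relabelling \emph{only the axioms} with fresh pairwise distinct atoms; the substitution $\sigma$ will then simply send each fresh atom back to the value it replaced. Since $A$ is infinite and $e$, being an $\smbis{}$-experiment of a net, is a finite object (it involves only finitely many axiom-occurrences, counted across all depths and all the box-copies selected by $e$), I would first pick for each such axiom-occurrence a fresh atom $\gamma$, all of them distinct and distinct from every atom occurring in $e$.

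Next I would define $e'$. At an axiom-occurrence with conclusions $a,b$ (so $e(b)=e(a)^{\perp}$) I set $e'(a)=\sequence{+,\gamma}$ and $e'(b)=\sequence{-,\gamma}$ for the fresh atom $\gamma$ attached to that occurrence; at every $!$-link $o$ I keep $\card{e'(o)}=\card{e(o)}$, applying the construction recursively inside each of the copies of the box; and at every $0$-ary $?$-link I keep exactly the same label as $e$. The value of $e'$ on all remaining edges is then forced by the clauses of Definition~\ref{def:experiment}. This $e'$ is a genuine $\smbis{}$-experiment (the $?$- and $!$-clauses are respected, and each $e'(o)$ is still a nonempty multiset), it is \emph{atomic} in the sense of Definition~\ref{definition:AtomicExperiments}, and since $\sim$ inspects only the multiplicities at $!$-links and the cardinalities at weakenings, we have $e'\sim e$.

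I would then define $\sigma$ by $\sigma^{A}(\gamma):=e(a)$ for the fresh atom $\gamma$ attached to the occurrence with conclusion $a$, and $\sigma^{A}(\delta):=\sequence{+,\delta}$ for every other atom $\delta$. The key verification is $\sigma(\result{e'},\weakeningsofexperiment{e'})=(\result{e},\weakeningsofexperiment{e})$, which follows from the pointwise claim that $\sigma(e'(d))=e(d)$ for \emph{every} edge $d$, proved by induction on the structure of $\pi$: at an axiom, $\sigma(\sequence{+,\gamma})=\sigma^{A}(\gamma)=e(a)$ and $\sigma(\sequence{-,\gamma})=e(a)^{\perp}=e(b)$; at a $0$-ary $?$-link the label is unchanged and fixed by $\sigma$ (its atoms are old ones); and for $\ltens$, $\lpar$, $\flat$, $?$ and $!$ one uses that, by Definition~\ref{def:substitution}, $\sigma$ commutes with the corresponding constructor and with $(\ )^{\perp}$ (the identity $\sigma(x^{\perp})=\sigma(x)^{\perp}$ already invoked in Lemma~\ref{lemma : closed by substitution}). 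As the weakening labels are untouched, $\weakeningsofexperiment{e'}=\weakeningsofexperiment{e}$ and $\sigma$ fixes it, giving also $\sigma(\weakeningsofexperiment{e'})=\weakeningsofexperiment{e}$.

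Finally I would check that $e'$ attains the bound of Lemma~\ref{lemma:sizeexperiment <= sizepoint}, namely $\sizeexperiment{e'}=\sizepoint{\result{e'}}-\sizepoint{\mathcal{W}(e')}-k$. Re-reading that proof, the inequality $\leq$ is in fact an equality at every inductive step, and the only place where a strict inequality can arise is the axiom base case, where equality holds precisely when the two conclusions carry labels of the form $\sequence{p,c}$ with $c\in A\cup\{\ast\}$ (a label of size $1$); every axiom of $e'$ is of this shape, so the bound is reached. I expect this last point to be the main obstacle: it requires confirming that none of the cases ``left to the reader'' in the proof of Lemma~\ref{lemma:sizeexperiment <= sizepoint} --- in particular the weakening case, where the term $\sizepoint{\mathcal{W}}$ cancels exactly the contribution of the weakening label to $\sizepoint{\result{}}$ --- introduces further slack, so that atomicity of the axioms alone suffices for equality.
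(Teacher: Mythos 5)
Your proposal is correct and is essentially the paper's own argument: both proofs produce an atomic representative $e'\sim e$ by relabelling the axioms with fresh, pairwise distinct atoms while keeping the weakening labels and the box multiplicities, observe that every non-axiom case of Lemma~\ref{lemma:sizeexperiment <= sizepoint} preserves equality (so atomic axiom labels make the bound an equality), and then take the substitution sending each fresh atom back to the value it replaced. The only difference is bookkeeping: the paper manages freshness by strengthening the induction hypothesis on $\sizenet{\pi}$ (reserving an infinite set $A'\subseteq A$ of usable atoms and requiring $\sigma$ to fix the atoms of $\mathcal{W}(e)$), whereas you choose all the fresh atoms globally at the outset, which is an equivalent way of handling the same issue.
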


\begin{proof}
Let $A_0$ be the set of elements of $A$ occurring in $\mathcal{W}(e)$. 
We prove, by induction on $\sizenet{\pi}$, that, for every infinite subset $A'$ of $A \setminus A_0$, there is an experiment $e' \sim e$ such that
\begin{enumerate}
\item $\sizeexperiment{e'} = \sizepoint{\result{e'}} - \sizepoint{\mathcal{W}(e')} -k$;
\item $\sigma(\result{e'}, \mathcal{W}(e')) = (\result{e}, \mathcal{W}(e))$ for some $\sigma \in \mathcal{S}$ such that $\restriction{\sigma}{A_0} = id_{A_0}$;
\item and every element of $A \setminus A_0$ occurring in $\result{e'}$ is an element of $A'$.
\end{enumerate}
In the case $\ground{\pi}$ is a weakening-link $l$, we set $e'(c) = e(c)$, where $c$ is $l$'s conclusion. The other cases are similar to the proof of Lemma~35 of \cite{CarvPagTdF10}.
\end{proof}

In order to prove our quantitative result (Theorem~\ref{theorem:exactSN}), we start relating, for $\smbis{}$-experiments $e$,  $\sizeexperimentbis{e}$ to the size of suitable elements of $\smbis{\pi}$.

\begin{lemma}\label{lemma:s(e)andequivalence}
Assume $A$ is infinite. Let $\pi$ be a cut-free net and let $e$ be an $\smbis{}$-experiment of $\pi $. We have
$\sizeexperimentbis{e} = \min \{ \sizepoint{\result{e'}} - \sizepoint{\mathcal{W}(e')} + 2 \card{\mathcal{W}(e')} ; \: e' \sim e \textrm{ and } (\exists \sigma \in \mathcal{S}) \sigma(\result{e'}, \mathcal{W}(e')) = (\result{e}, \mathcal{W}(e)) \}.$
\end{lemma}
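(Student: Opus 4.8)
The plan is to prove the equality as two inequalities between $\sizeexperimentbis{e}$ and the minimum. The first thing to record is that $\pi$ is a \emph{net}, hence by definition a $\flat$-net with no structural conclusion, so throughout I may take $k=0$ and invoke the two preceding lemmas with that value. The whole argument is then just a repackaging of Lemma~\ref{lemma:sizeexperiment <= sizepoint} (the inequality $\sizeexperiment{e}\leq\sizepoint{\result{e}}-\sizepoint{\mathcal{W}(e)}-k$), Lemma~\ref{lemma:sizepoint atteint} (which says this bound can be reached within the $\sim$-class, up to a substitution), and Fact~\ref{fact:equivExpMemeTaille} (that $\sim$-equivalent $\smbis{}$-experiments have equal $\smbis{}$-size).

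For the inequality $\sizeexperimentbis{e}\leq\min\{\ldots\}$, I would argue that \emph{every} member of the constrained set dominates $\sizeexperimentbis{e}$. Pick any $e'\sim e$ for which there is $\sigma\in\mathcal{S}$ with $\sigma(\result{e'},\mathcal{W}(e'))=(\result{e},\mathcal{W}(e))$. Applying Lemma~\ref{lemma:sizeexperiment <= sizepoint} to $e'$ with $k=0$ gives $\sizeexperiment{e'}\leq\sizepoint{\result{e'}}-\sizepoint{\mathcal{W}(e')}$; adding $2\card{\mathcal{W}(e')}$ to both sides and using the very definition $\sizeexperimentbis{e'}=\sizeexperiment{e'}+2\card{\mathcal{W}(e')}$ yields $\sizeexperimentbis{e'}\leq\sizepoint{\result{e'}}-\sizepoint{\mathcal{W}(e')}+2\card{\mathcal{W}(e')}$. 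Finally $e'\sim e$, so Fact~\ref{fact:equivExpMemeTaille} gives $\sizeexperimentbis{e'}=\sizeexperimentbis{e}$, and the right-hand side therefore dominates $\sizeexperimentbis{e}$. Since $e'$ was arbitrary in the set, the minimum is $\geq\sizeexperimentbis{e}$.

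For the reverse inequality it suffices to exhibit a single member of the set whose value equals $\sizeexperimentbis{e}$. Here I use that $A$ is infinite to invoke Lemma~\ref{lemma:sizepoint atteint} (again with $k=0$): it furnishes $e'\sim e$ and $\sigma\in\mathcal{S}$ with $\sigma(\result{e'},\mathcal{W}(e'))=(\result{e},\mathcal{W}(e))$ and, crucially, $\sizeexperiment{e'}=\sizepoint{\result{e'}}-\sizepoint{\mathcal{W}(e')}$, i.e.\ the bound of Lemma~\ref{lemma:sizeexperiment <= sizepoint} is attained. For this $e'$ one computes $\sizepoint{\result{e'}}-\sizepoint{\mathcal{W}(e')}+2\card{\mathcal{W}(e')}=\sizeexperiment{e'}+2\card{\mathcal{W}(e')}=\sizeexperimentbis{e'}=\sizeexperimentbis{e}$, the last step again by Fact~\ref{fact:equivExpMemeTaille}. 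Since this $e'$ satisfies both the $\sim$-relation and the substitution condition, it lies in the set, so the minimum is $\leq\sizeexperimentbis{e}$. Combining the two bounds proves the equality.

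There is no real obstacle in the computation; the only point demanding care is that the minimum ranges over the \emph{constrained} set $\{e' : e'\sim e \text{ and } \exists\sigma\,\sigma(\result{e'},\mathcal{W}(e'))=(\result{e},\mathcal{W}(e))\}$, so both directions must remain inside it. The forward direction needs the bound to hold for \emph{every} such $e'$ (which it does, the substitution condition being irrelevant there), while the backward direction needs the witness to \emph{simultaneously} satisfy the $\sim$-relation, the substitution condition, and equality in the size bound — and this triple guarantee is precisely the content of Lemma~\ref{lemma:sizepoint atteint}, which is why its infiniteness-of-$A$ hypothesis is inherited by the present statement.
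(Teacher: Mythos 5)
Your proof is correct and takes essentially the same route as the paper: both directions rest on Lemma~\ref{lemma:sizeexperiment <= sizepoint}, Lemma~\ref{lemma:sizepoint atteint} and Fact~\ref{fact:equivExpMemeTaille}, used in exactly the same way, the only cosmetic difference being that you bound \emph{every} element of the constrained set while the paper instantiates the bound at a minimizer. Your explicit observation that $k=0$ (since $\pi$ is a net, hence has no structural conclusions) merely makes precise something the paper's proof leaves implicit.
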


\begin{proof}
We set $q = \min \{ \sizepoint{\result{e'}} - \sizepoint{\mathcal{W}(e')} + 2 \card{\mathcal{W}(e')} ; \: e' \sim e \textrm{ and } (\exists \sigma \in \mathcal{S}) \sigma(\result{e'}, \mathcal{W}(e')) = (\result{e}, \mathcal{W}(e)) \}.$

First, we prove $\sizeexperimentbis{e} \leq q$. Let $e'_0$ be an $\smbis{}$-experiment of $\pi $ such that $e'_0 \sim e$ and $\sizepoint{\result{e'_0}} - \sizepoint{\mathcal{W}(e'_0)} + 2 \card{\mathcal{W}(e'_0)} = q$. By Fact~\ref{fact:equivExpMemeTaille} and Lemma~\ref{lemma:sizeexperiment <= sizepoint}, we have $\sizeexperimentbis{e} = \sizeexperimentbis{e'_0} = \sizeexperiment{e'_0} + 2 \card{\mathcal{W}(e'_0)} \leq \sizepoint{\result{e'_0}} - \sizepoint{\mathcal{W}(e'_0)} + 2 \card{\mathcal{W}(e'_0)} = q$.

Now, we prove $q \leq \sizeexperimentbis{e}$. By Lemma~\ref{lemma:sizepoint atteint}, there exist $e' \sim e$ and a substitution $\sigma$ such that $\sizeexperiment{e'} = \sizepoint{\result{e'}} - \sizepoint{\mathcal{W}(e')}$, $\sigma(\result{e'}) = \result{e}$ and $\sigma(\mathcal{W}(e')) = \mathcal{W}(e)$. We have $q \leq \sizepoint{\result{e'}} - \sizepoint{\mathcal{W}(e')} + 2 \card{\mathcal{W}(e')} = \sizeexperiment{e'} + 2 \card{\mathcal{W}(e')} = \sizeexperimentbis{e'} = \sizeexperimentbis{e}$ (again by Fact~\ref{fact:equivExpMemeTaille}).
\end{proof}

\begin{proposition}\label{prop : sizebis of experimentbis of cut-free net}
Assume $A$ is infinite. Let $\pi$ be a cut-free net and let $(\mathbf{x}, \mathcal{V}) \in \smbis{\pi}$.

We have $\min \{ \sizeexperimentbis{e} \: ; \: e \textrm{ is an $\smbis{}$-experiment of $\pi$ such that $(\vert e \vert, \mathcal{W}(e)) = (\mathbf{x}, \mathcal{V})$} \} $ \\
$= \min \left\lbrace \sizepoint{\vert e' \vert} - \sizepoint{\mathcal{W}(e')} + 2 \card{\mathcal{W}(e')} \: ; \begin{array}{l} e' \textrm{ is an $\smbis{}$-experiment of $\pi$ such that} \\ (\exists \sigma \in \mathcal{S}) \: \sigma(\vert e' \vert, \mathcal{W}(e')) = (\mathbf{x}, \mathcal{V}) \end{array} \right\rbrace$.
\end{proposition}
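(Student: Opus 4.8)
The plan is to denote the two sides by $L$ (the left-hand minimum, over $\smbis{}$-experiments $e$ with $(\vert e\vert,\weakeningsofexperiment{e})=(\mathbf{x},\mathcal{V})$) and $R$ (the right-hand minimum, over $\smbis{}$-experiments $e'$ admitting some $\sigma\in\mathcal{S}$ with $\sigma(\vert e'\vert,\weakeningsofexperiment{e'})=(\mathbf{x},\mathcal{V})$), and to prove $L\leq R$ and $R\leq L$ separately. The whole argument is a matching of quantifiers, with Lemma~\ref{lemma:s(e)andequivalence} as the main engine and closure under substitution (Lemma~\ref{lemma : closed by substitution}) used to move between experiments with different results. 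Both minima are over non-empty sets of natural numbers, hence attained: since $(\mathbf{x},\mathcal{V})\in\smbis{\pi}$ there is an $\smbis{}$-experiment $e$ with $(\vert e\vert,\weakeningsofexperiment{e})=(\mathbf{x},\mathcal{V})$, and this $e$, together with the identity substitution, also witnesses that the set defining $R$ is non-empty.

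For $R\leq L$, I would take an $\smbis{}$-experiment $e$ attaining $L$, so $(\vert e\vert,\weakeningsofexperiment{e})=(\mathbf{x},\mathcal{V})$ and $\sizeexperimentbis{e}=L$. Applying Lemma~\ref{lemma:s(e)andequivalence} to $e$ produces an $\smbis{}$-experiment $e'\sim e$ and a substitution $\sigma$ with $\sigma(\vert e'\vert,\weakeningsofexperiment{e'})=(\vert e\vert,\weakeningsofexperiment{e})=(\mathbf{x},\mathcal{V})$ and $\sizepoint{\vert e'\vert}-\sizepoint{\weakeningsofexperiment{e'}}+2\card{\weakeningsofexperiment{e'}}=\sizeexperimentbis{e}=L$. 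This $e'$ satisfies exactly the constraint defining $R$, so $R\leq L$.

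For $L\leq R$, I would take an $\smbis{}$-experiment $e'$ attaining $R$, with witness $\sigma$ so that $\sigma(\vert e'\vert,\weakeningsofexperiment{e'})=(\mathbf{x},\mathcal{V})$. Closure under substitution (Lemma~\ref{lemma : closed by substitution}) then yields an $\smbis{}$-experiment $e$ with $e\sim e'$ and $(\vert e\vert,\weakeningsofexperiment{e})=(\sigma(\vert e'\vert),\sigma(\weakeningsofexperiment{e'}))=(\mathbf{x},\mathcal{V})$; in particular $e$ is admissible for $L$, so $L\leq\sizeexperimentbis{e}$. To bound $\sizeexperimentbis{e}$ I would use symmetry of $\sim$ (so $e'\sim e$) together with the fact that $\sigma$ witnesses $\sigma(\vert e'\vert,\weakeningsofexperiment{e'})=(\vert e\vert,\weakeningsofexperiment{e})$: this places $e'$ inside the set over which the minimum in Lemma~\ref{lemma:s(e)andequivalence} (applied to $e$) is taken, whence $\sizeexperimentbis{e}\leq\sizepoint{\vert e'\vert}-\sizepoint{\weakeningsofexperiment{e'}}+2\card{\weakeningsofexperiment{e'}}=R$. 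Combining, $L\leq\sizeexperimentbis{e}\leq R$, and with the previous paragraph $L=R$.

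I do not expect a genuine obstacle here, since the statement is really a reorganisation of the minima already packaged in Lemma~\ref{lemma:s(e)andequivalence}. The only points deserving care are purely bookkeeping: checking that the experiment extracted in each direction actually meets the two side conditions of the competing set — the $\sim$-equivalence and the existence of a substitution carrying one result to the other — which is precisely where the symmetry of $\sim$ and the exact form of Lemma~\ref{lemma : closed by substitution} (it returns an experiment $\sim$-equivalent to the one it is applied to, whose result is the $\sigma$-image) are invoked.
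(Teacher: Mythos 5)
Your proof is correct and takes essentially the same route as the paper's: both directions rest on Lemma~\ref{lemma : closed by substitution} (to pass from an experiment whose result maps onto $(\mathbf{x},\mathcal{V})$ under a substitution to a $\sim$-equivalent one whose result \emph{is} $(\mathbf{x},\mathcal{V})$) and on Lemma~\ref{lemma:s(e)andequivalence}. The only cosmetic deviation is in the direction $L\leq R$, where the paper bounds $\sizeexperimentbis{e}$ via Fact~\ref{fact:equivExpMemeTaille} and Lemma~\ref{lemma:sizeexperiment <= sizepoint}, while you re-use Lemma~\ref{lemma:s(e)andequivalence} (which packages exactly those two ingredients), so the content is identical.
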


\begin{proof}
Set $r = \min \left\lbrace \sizepoint{\vert e' \vert} - \sizepoint{\mathcal{W}(e')} + 2 \card{\mathcal{W}(e')} \: ; \begin{array}{l} e' \textrm{ is an $\smbis{}$-experiment of $\pi$ such that} \\ (\exists \sigma \in \mathcal{S}) \: \sigma(\vert e' \vert, \mathcal{W}(e')) = (\mathbf{x}, \mathcal{V}) \end{array} \right\rbrace$ and \\
$q = \min \{ \sizeexperimentbis{e} \: ; \: e \textrm{ is an $\smbis{}$-experiment of $\pi$ such that $(\vert e \vert, \mathcal{W}(e)) = (\mathbf{x}, \mathcal{V})$} \} $.

First we prove $q \leq r$. Let $e_0'$ be an $\smbis{}$-experiment of $\pi$ such that
\begin{itemize}
\item $(\exists \sigma \in \mathcal{S}) \: \sigma(\result{e_0'}, \mathcal{W}(e'_0)) = (\mathbf{x}, \mathcal{V})$
\item and $\sizepoint{\result{e'_0}} - \sizepoint{\mathcal{W}(e'_0)} + 2 \card{\mathcal{W}(e'_0)} = r$.
\end{itemize}
By Fact~\ref{fact:equivExpMemeTaille} and Lemma~\ref{lemma : closed by substitution}, there exists an $\smbis{}$-experiment $e_0$ of $\pi$ such that $\result{e_0} = (\mathbf{x}, \mathcal{V})$ and $\sizeexperimentbis{e_0} = \sizeexperimentbis{e'_0}$. By Lemma~\ref{lemma:sizeexperiment <= sizepoint}, we have $q \leq \sizeexperimentbis{e_0} = \sizeexperimentbis{e'_0} = \sizeexperiment{e'_0} + 2 \card{\mathcal{W}(e'_0)}
\leq \sizepoint{\result{e'_0}} - \sizepoint{\mathcal{W}(e'_0)} + 2 \card{\mathcal{W}(e'_0)} = r$.

Now, we prove $r \leq q$. Let $e$ be an $\smbis{}$-experiment of $\pi$ such that $\sizeexperimentbis{e} = q$ and $(\result{e}, \mathcal{W}(e)) = (\mathbf{x}, \mathcal{V})$. By Lemma~\ref{lemma:s(e)andequivalence}, we have $\sizeexperimentbis{e} = \min \{ \sizepoint{\result{e'}} - \sizepoint{\mathcal{W}(e')} + 2 \card{\mathcal{W}(e')} ; e' \sim e \textrm{ and } (\exists \sigma \in \mathcal{S}) \sigma(\result{e'}, \mathcal{W}(e')) = (\result{e}, \mathcal{W}(e)) \} \geq r$.
\end{proof}

We now state our main quantitative theorem, which answers Question~\ref{due2bis} raised in the introduction: using the notations of Corollary~\ref{corollary:postponingerasing}, we know that when $\cutnets{\pi}{\pi'}{c}{c'}$ is strongly normalizable, in order to compute $\strong{\cutnets{\pi}{\pi'}{c}{c'}}$ we have to compute the length of $R_{1}$ and $R_{2}$. We thus show how to compute $\length{R_1}$ and $\length{R_2}$ from $\smbis{\pi}$ and $\smbis{\pi'}$ (thus from $\sm{\pi}$ and $\sm{\pi'}$ thanks to Proposition~\ref{prop:SembisFromSem}).

\begin{theorem}\label{theorem:exactSN}
Assume $A$ is infinite. Let $\pi$ and $\pi'$ be two cut-free nets with conclusions $\textbf{d}, c$ (resp. $\textbf{d'}, c'$). The value of $\strong{\cutnets{\pi}{\pi'}{c}{c'}}$ is
$$\inf \left\lbrace \begin{array}{l} \frac{\sizepointbis{\mathbf{z}, \mathcal{W}} + \sizepointbis{\mathbf{z'}, \mathcal{W}'} - \sizebisinf{\smbis{\cutnets{\pi}{\pi'}{c}{c'}}}}{2} - \sizepoint{\mathcal{W} +  \mathcal{W}'} ; \\ (\mathbf{z}, \mathcal{W}) \in \smbis{\pi}, (\mathbf{z'}, \mathcal{W}') \in \smbis{\pi'} \textrm{ and } (\exists \sigma \in \mathcal{S}) \: \sigma(\mathbf{z}_c) = \sigma(\mathbf{z'}_{c'})^\perp \end{array} \right\rbrace$$
\end{theorem}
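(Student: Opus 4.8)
The plan is to reduce the statement to the two quantities $\length{R_1}$ and $\length{R_2}$ furnished by Corollary~\ref{corollary:postponingerasing}, to express each of them through a minimal-$\smbis{}$-size experiment of $N := \cutnets{\pi}{\pi'}{c}{c'}$, and finally to convert experiment-sizes into the point-sizes $\sizepointbis{\cdot}$ appearing in the statement. First I would dispose of the degenerate case: if $N \notin \textbf{SN}$ then $\strong{N} = \infty$, and by Corollary~\ref{corollary : cut strongly normalizable} together with the substitution-closure of $\smbis{}$ (Lemma~\ref{lemma : closed by substitution}) there is no pair $(\mathbf{z},\mathcal{W})\in\smbis{\pi}$, $(\mathbf{z'},\mathcal{W'})\in\smbis{\pi'}$ with $\sigma(\mathbf{z}_c)=\sigma(\mathbf{z'}_{c'})^\perp$, so the infimum is taken over the empty set and also equals $\infty$. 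Hence I may assume $N \in \textbf{SN}$ and fix the canonical decomposition $R_1 : N \stratnonerasingred \pi_1$, $R_2 : \pi_1 \erasingred \pi_2$ of Corollary~\ref{corollary:postponingerasing}, with $\strong{N} = \length{R_1} + \length{R_2}$.

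Next I would compute the two lengths semantically. Let $e_0$ be an $\smbis{}$-experiment of $N$ of minimal $\smbis{}$-size. By Proposition~\ref{proposition : non-erasing stratified reduction}, $\length{R_1} = (\sizeexperiment{e_0} - \sizebisinf{\smbis{N}})/2$. For $R_2$, note that $\length{R_2}$ equals the number of cut-links of the $\nonerasing$-normal net $\pi_1$; iterating the equalities of items~$1bis$ and~$2bis$ of Lemma~\ref{lemma : key-lemma : strat} along $R_1$ produces a minimal-$\smbis{}$-size experiment $e_1$ of $\pi_1$ with $(\result{e_1},\weakeningsofexperiment{e_1}) = (\result{e_0},\weakeningsofexperiment{e_0})$, and Lemma~\ref{lemma : erasing cuts} gives that this number equals $\card{\weakeningsofexperiment{e_1}} = \card{\weakeningsofexperiment{e_0}}$. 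Using $\sizeexperimentbis{e_0} = \sizeexperiment{e_0} + 2\card{\weakeningsofexperiment{e_0}}$, the two contributions combine with a cancellation: $\strong{N} = (\sizeexperiment{e_0} - \sizebisinf{\smbis{N}})/2 + \card{\weakeningsofexperiment{e_0}} = (\sizeexperimentbis{e_0} - \sizebisinf{\smbis{N}})/2$, where $\sizeexperimentbis{e_0} = \min\{\sizeexperimentbis{e} \: ; \: e \text{ is an } \smbis{}\text{-experiment of } N\}$.

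It then remains to identify this minimal $\smbis{}$-size with the quantity inside the infimum. Since $\ground{N}$ is the disjoint union of $\ground{\pi}$ and $\ground{\pi'}$ together with a cut-link on $c,c'$, every $\smbis{}$-experiment $e$ of $N$ splits uniquely into $\smbis{}$-experiments $e_\pi$ of $\pi$ and $e_{\pi'}$ of $\pi'$ with $e_\pi(c) = e_{\pi'}(c')^\perp$, and $\sizeexperimentbis{e} = \sizeexperimentbis{e_\pi} + \sizeexperimentbis{e_{\pi'}}$; conversely any such matching pair glues, the constraint bearing only on the $c$- and $c'$-components. Grouping experiments by their results, $\min_e \sizeexperimentbis{e}$ equals the minimum, over pairs $(\mathbf{z},\mathcal{W})\in\smbis{\pi}$, $(\mathbf{z'},\mathcal{W'})\in\smbis{\pi'}$ with $\mathbf{z}_c = (\mathbf{z'}_{c'})^\perp$, of $q(\mathbf{z},\mathcal{W}) + q(\mathbf{z'},\mathcal{W'})$, where $q(\mathbf{z},\mathcal{W}) := \min\{\sizeexperimentbis{e} \: ; \: (\result{e},\weakeningsofexperiment{e}) = (\mathbf{z},\mathcal{W})\}$ is the per-point minimal $\smbis{}$-size. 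By Proposition~\ref{prop : sizebis of experimentbis of cut-free net} (here the hypothesis that $A$ is infinite first enters) and the substitution-closure of $\smbis{}$, this $q$ rewrites as $q(\mathbf{z},\mathcal{W}) = \min\{\sizepoint{\mathbf{u}} - \sizepoint{\mathcal{U}} + 2\card{\mathcal{U}} \: ; \: (\mathbf{u},\mathcal{U})\in\smbis{\pi}, \exists\sigma \; \sigma(\mathbf{u},\mathcal{U})=(\mathbf{z},\mathcal{W})\}$.

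Finally I would prove that the minimum over exactly-matched points of $q + q$ coincides with the infimum over substitution-matched points of $p(\mathbf{z},\mathcal{W}) + p(\mathbf{z'},\mathcal{W'})$, where $p(\mathbf{x},\mathcal{V}) := \sizepoint{\mathbf{x}} - \sizepoint{\mathcal{V}} + 2\card{\mathcal{V}} = \sizepointbis{\mathbf{x},\mathcal{V}} - 2\sizepoint{\mathcal{V}}$. One inequality is immediate: applying the matching substitution to a substitution-matched pair yields an exactly-matched pair in $\smbis{\pi}\times\smbis{\pi'}$ whose $q$-values are bounded by the original $p$-values. For the reverse inequality one picks the $p$-minimizing preimages given by Proposition~\ref{prop : sizebis of experimentbis of cut-free net}; these are matched only by two separate substitutions, and the key point — the main obstacle of the whole argument — is to merge them into a single $\sigma$. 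This is exactly where the infiniteness of $A$ is essential: one renames the atoms occurring on the $\pi'$-side away from those on the $\pi$-side (a bijective renaming leaves $\smbis{\pi'}$ invariant by Lemma~\ref{lemma : closed by substitution}, and preserves the size $p$), after which the two substitutions have disjoint domains and can be applied simultaneously. Substituting the identity $\sizepointbis{\mathbf{x},\mathcal{V}} = \sizepoint{\mathbf{x}} + \sizepoint{\mathcal{V}} + 2\card{\mathcal{V}}$ and $\sizepoint{\mathcal{W}} + \sizepoint{\mathcal{W'}} = \sizepoint{\mathcal{W}+\mathcal{W'}}$ into $\strong{N} = (\min_e\sizeexperimentbis{e} - \sizebisinf{\smbis{N}})/2$ then yields precisely the claimed infimum, the factor $-2\sizepoint{\mathcal{W}+\mathcal{W'}}$ becoming the trailing term $-\sizepoint{\mathcal{W}+\mathcal{W'}}$ after division by $2$.
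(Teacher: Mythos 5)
Your proposal is correct and follows essentially the same route as the paper's proof: the same degenerate case via Corollary~\ref{corollary : cut strongly normalizable} and Lemma~\ref{lemma : closed by substitution}, the same canonical decomposition from Corollary~\ref{corollary:postponingerasing}, the computation of $\length{R_1}$ via Proposition~\ref{proposition : non-erasing stratified reduction} and of $\length{R_2}$ via items $1bis$/$2bis$ of Lemma~\ref{lemma : key-lemma : strat} together with Lemma~\ref{lemma : erasing cuts}, and the conversion to point-sizes via Proposition~\ref{prop : sizebis of experimentbis of cut-free net} with the disjoint-atoms merging of substitutions (the paper's parenthetical ``the points we look for are among those with disjoint atoms''). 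The only difference is organizational: you anchor the argument on a globally minimal $\smbis{}$-experiment of the cut net and prove the identity between the two minima symmetrically, whereas the paper anchors on a point achieving $\min(\mathcal{C})$ and runs one long chain of equalities --- the ingredients and the key steps are identical.
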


\begin{proof}
We set $$\mathcal{C} = \left\lbrace \begin{array}{l} \frac{\sizepointbis{\mathbf{z}, \mathcal{W}} + \sizepointbis{\mathbf{z'}, \mathcal{W}'} - \sizebisinf{\smbis{\cutnets{\pi}{\pi'}{c}{c'}}}}{2} - \sizepoint{\mathcal{W} +  \mathcal{W}'} ; \\ (\mathbf{z}, \mathcal{W}) \in \smbis{\pi}, (\mathbf{z'}, \mathcal{W}') \in \smbis{\pi'} \textrm{ and } (\exists \sigma \in \mathcal{S}) \: \sigma(\mathbf{z}_c) = \sigma(\mathbf{z'}_{c'})^\perp \end{array} \right\rbrace .$$

In the case where $\cutnets{\pi}{\pi'}{c}{c'}$ is not strongly normalizable, by Corollary~\ref{corollary : cut strongly normalizable} and Lemma~\ref{lemma : closed by substitution}, we have $\mathcal{C} = \emptyset$.

Now, we assume that $\cutnets{\pi}{\pi'}{c}{c'}$ is strongly normalizable. 

By Corollary~\ref{corollary:postponingerasing}, there exist $R_1 : \cutnets{\pi}{\pi'}{c}{c'} \stratnonerasingred \pi_1$ and $R_2 : \pi_1 \erasingred \pi_2$ antistratified such that
\begin{itemize}
\item $\pi_1$ is $\nonerasing$-normal;
\item $\pi_2$ is cut-free;
\item and $\strong{\cutnets{\pi}{\pi'}{c}{c'}} = \length{R_1} + \length{R_2}$.
\end{itemize}
By Corollary~\ref{corollary : cut strongly normalizable}, there are $(\mathbf{x}, \mathcal{V}) \in \smbis{\pi}$ and $(\mathbf{x'}, \mathcal{V'}) \in \smbis{\pi'}$ such that $\mathbf{x}_c = {\mathbf{x'}_{c'}}^\perp$: the set $\mathcal{C}$ is non-empty. 
We can thus consider some $(\mathbf{z}, \mathcal{W}) \in \smbis{\pi}, (\mathbf{z'}, \mathcal{W'}) \in \smbis{\pi'}$ and $\sigma \in \mathcal{S}$ such that $\sigma(\mathbf{z}_c) = \sigma(\mathbf{z'}_{c'})^\perp$ and $\frac{\sizepointbis{z, \mathcal{W}} + \sizepointbis{z', \mathcal{W'}} - \sizebisinf{\smbis{\cutnets{\pi}{\pi'}{c}{c'}}}}{2} - \sizepoint{\mathcal{W} +  \mathcal{W'}} = \min (\mathcal{C})$. We set $\mathbf{x} = \sigma(\mathbf{z})$, $\mathbf{x'} = \sigma(\mathbf{z'})$, $\mathcal{V} = \sigma(\mathcal{W})$ and $\mathcal{V'} = \sigma(\mathcal{W'})$. By Lemma~\ref{lemma : closed by substitution}, we have $(\mathbf{x}, \mathcal{V}) \in \smbis{\pi}$ and $(\mathbf{x'}, \mathcal{V'}) \in \smbis{\pi'}$. Since $\mathbf{x}_c = {\mathbf{x'}_{c'}}^\perp$, there exists a $\smbis{}$-experiment $e_0$ of $\cutnets{\pi}{\pi'}{c}{c'}$ such that 
\begin{itemize}
\item $\mathcal{W}(e_0) = \mathcal{V} + \mathcal{V'}$;
\item and 
\begin{eqnarray*}
\sizeexperimentbis{e_0} & = & \min \{ \sizeexperimentbis{e} \: ; \: e \textrm{ is an $\smbis{}$-experiment of $\pi$ such that }(\vert e \vert, \mathcal{W}(e)) = (\mathbf{x}, \mathcal{V}) \} + \\ & & \min \{ \sizeexperimentbis{e'} \: ; \: e' \textrm{ is an $\smbis{}$-experiment of $\pi'$ such that }(\vert e' \vert, \mathcal{W}(e')) = (\mathbf{x'}, \mathcal{V'}) \} .
\end{eqnarray*}
\end{itemize}
By applying Proposition~\ref{prop : sizebis of experimentbis of cut-free net} twice, we obtain 
\begin{eqnarray*}
\sizeexperimentbis{e_0} 
& = & \min \left\lbrace \sizepoint{\mathbf{z}} - \sizepoint{\mathcal{W}} + 2 \card{\mathcal{W}} \: ; \begin{array}{l} (\mathbf{z}, \mathcal{W}) \in \smbis{\pi} \textrm{ such that }\\ (\exists \sigma \in \mathcal{S}) \: (\sigma(\mathbf{z}), \sigma(\mathcal{W})) = (\mathbf{x}, \mathcal{V}) \end{array} \right\rbrace +\\
& & \min \left\lbrace \sizepoint{\mathbf{z'}} - \sizepoint{\mathcal{W'}} + 2 \card{\mathcal{W'}} \: ; \begin{array}{l} (\mathbf{z'}, \mathcal{W'}) \in \smbis{\pi'} \textrm{ such that }\\ (\exists \sigma \in \mathcal{S}) \: (\sigma(\mathbf{z'}), \sigma(\mathcal{W'})) = (\mathbf{x'}, \mathcal{V'}) \end{array} \right\rbrace \\
& = & \min \left\lbrace \begin{array}{l} \sizepoint{\mathbf{z}} -\sizepoint{\mathcal{W}} \\ + \sizepoint{\mathbf{z}'} - \sizepoint{\mathcal{W}'} \\ + 2 \card{\mathcal{W} + \mathcal{W'}} \end{array} ; \begin{array}{l}(\mathbf{z}, \mathcal{W}) \in \smbis{\pi}, (\mathbf{z'}, \mathcal{W'}) \in \smbis{\pi'} \textrm{ and } \\ \textrm{there exists } \sigma \in \mathcal{S} \textrm{ such} \\ \textrm{that }\sigma(\mathbf{z}, \mathcal{W}) = (\mathbf{x}, \mathcal{V}) \\ \textrm{and } \sigma(\mathbf{z'}, \mathcal{W'}) = (\mathbf{x'}, \mathcal{V'}) \end{array} \right\rbrace \\
& & \textrm{(the points of $\smbis{\pi}$ and $\smbis{\pi'}$ we look for are among those with disjoint atoms)}.
\end{eqnarray*}
Therefore we have $\sizeexperimentbis{e_0} \leq \sizepoint{\mathbf{z}} -\sizepoint{\mathcal{W}} + \sizepoint{\mathbf{z}'} - \sizepoint{\mathcal{W}'} + 2 \card{\mathcal{W} + \mathcal{W'}}$. Now, we have 
\begin{eqnarray*}
& & \sizepoint{\mathbf{z}} -\sizepoint{\mathcal{W}} + \sizepoint{\mathbf{z'}} - \sizepoint{\mathcal{W'}} + 2 \card{\mathcal{W} + \mathcal{W'}}\\
& = & 2 (\frac{\sizepointbis{\mathbf{z}, \mathcal{W}} + \sizepointbis{\mathbf{z'}, \mathcal{W'}} - \sizebisinf{\smbis{\cutnets{\pi}{\pi'}{c}{c'}}}}{2} - \sizepoint{\mathcal{W} +  \mathcal{W'}}) + \sizebisinf{\smbis{\cutnets{\pi}{\pi'}{c}{c'}}} \quad;
\end{eqnarray*}
remember that $\frac{\sizepointbis{z, \mathcal{W}} + \sizepointbis{z', \mathcal{W'}} - \sizebisinf{\smbis{\cutnets{\pi}{\pi'}{c}{c'}}}}{2} - \sizepoint{\mathcal{W} +  \mathcal{W'}} = \min (\mathcal{C})$, hence 
\begin{eqnarray*}
& & \sizepoint{\mathbf{z}} -\sizepoint{\mathcal{W}} + \sizepoint{\mathbf{z'}} - \sizepoint{\mathcal{W'}} + 2 \card{\mathcal{W} + \mathcal{W'}} \allowdisplaybreaks\\
& = & \min \left\lbrace \sizepoint{\mathbf{z}} -\sizepoint{\mathcal{W}} + \sizepoint{\mathbf{z'}} - \sizepoint{\mathcal{W'}} + 2 \card{\mathcal{W} + \mathcal{W'}} \: ; \: \begin{array}{c}(\mathbf{z}, \mathcal{W}) \in \smbis{\pi},\\ (\mathbf{z'}, \mathcal{W'}) \in \smbis{\pi'}\\ \textrm{ and } \\(\exists \sigma \in \mathcal{S}) \: \sigma(\mathbf{z}_c) = \sigma(\mathbf{z'}_{c'})^\perp \end{array} \right\rbrace \allowdisplaybreaks\\
& = & \min \left\lbrace \begin{array}{l} \sizepoint{\mathbf{z}} -\sizepoint{\mathcal{W}} \\ + \sizepoint{\mathbf{z'}} - \sizepoint{\mathcal{W'}} \\ + 2 \card{\mathcal{W} + \mathcal{W'}} \end{array} ; \begin{array}{l}(\mathbf{z}, \mathcal{W}) \in \smbis{\pi}, (\mathbf{z'}, \mathcal{W'}) \in \smbis{\pi'} \textrm{ and } \\ \textrm{there exist }(\mathbf{x}, \mathcal{V}), (\mathbf{x'}, \mathcal{V'}), \sigma \in \mathcal{S} \textrm{ such} \\ \textrm{that }\sigma(\mathbf{z}, \mathcal{W}) = (\mathbf{x}, \mathcal{V}), \sigma(\mathbf{z'}, \mathcal{W'}) = (\mathbf{x'}, \mathcal{V'}) \\\textrm{and } \mathbf{x}_c = \mathbf{x'}_{c'}^\perp \end{array} \right\rbrace \allowdisplaybreaks \\
& = & \min \left\lbrace \begin{array}{l} \sizepoint{\mathbf{z}} -\sizepoint{\mathcal{W}} \\ + \sizepoint{\mathbf{z'}} - \sizepoint{\mathcal{W'}} \\ + 2 \card{\mathcal{W} + \mathcal{W'}} \end{array} ; \begin{array}{l}(\mathbf{z}, \mathcal{W}) \in \smbis{\pi}, (\mathbf{z'}, \mathcal{W'}) \in \smbis{\pi'} \textrm{ and } \\ \textrm{there exist }(\mathbf{x}, \mathcal{V}) \in \smbis{\pi}, (\mathbf{x'}, \mathcal{V'}) \in \smbis{\pi'}, \sigma \in \mathcal{S} \textrm{ such} \\ \textrm{that }\sigma(\mathbf{z}, \mathcal{W}) = (\mathbf{x}, \mathcal{V}), \sigma(\mathbf{z'}, \mathcal{W'}) = (\mathbf{x'}, \mathcal{V'}) \\\textrm{and } \mathbf{x}_c = \mathbf{x'}_{c'}^\perp \end{array} \right\rbrace \\
& & \textrm{(by Lemma~\ref{lemma : closed by substitution})} \allowdisplaybreaks \\
& = & \min \left\lbrace  \begin{array}{l} \min \{ \sizeexperimentbis{e} \: ; \: e \textrm{ is an $\smbis{}$-experiment of $\pi$ such that }(\vert e \vert, \mathcal{W}(e)) = (\mathbf{x}, \mathcal{V}) \} + \\ \min \{ \sizeexperimentbis{e'} \: ; \: e' \textrm{ is an $\smbis{}$-experiment of $\pi'$ such that }(\vert e' \vert, \mathcal{W}(e')) = (\mathbf{x'}, \mathcal{V'}) \} ; \\ (\mathbf{x}, \mathcal{V}) \in \smbis{\pi}, (\mathbf{x'}, \mathcal{V'}) \in \smbis{\pi'} \textrm{ and } \mathbf{x}_c = \mathbf{x'}_{c'}^\perp \end{array} \right\rbrace \\
& & \textrm{(by applying Proposition~\ref{prop : sizebis of experimentbis of cut-free net} twice)} \allowdisplaybreaks\\
& = & \min \left\lbrace \sizeexperimentbis{e} + \sizeexperimentbis{e'} \: ; \begin{array}{l} 
e \textrm{ is an $\smbis{}$-experiment of $\pi$,} 
e' \textrm{ is an $\smbis{}$-experiment of $\pi'$} \\
\textrm{and } (\exists (\mathbf{x}, \mathcal{V}) \in \smbis{\pi}, (\mathbf{x'}, \mathcal{V'}) \in \smbis{\pi'}) \\ ((\vert e \vert, \mathcal{W}(e)) = (\mathbf{x}, \mathcal{V}) \textrm{ and } (\vert e' \vert, \mathcal{W}(e')) = (\mathbf{x'}, \mathcal{V'}) \textrm{ and } \mathbf{x}_c = {\mathbf{x'}_{c'}}^\perp) \end{array} \right\rbrace \allowdisplaybreaks\\
& = & \min \{ \sizeexperimentbis{e} \: ; \: e \textrm{ is an $\smbis{}$-experiment of } \cutnets{\pi}{\pi'}{c}{c'} \} \leq \sizeexperimentbis{e_0} .
\end{eqnarray*}

So, $\sizeexperimentbis{e_0} = \sizepoint{\mathbf{z}} -\sizepoint{\mathcal{W}} + \sizepoint{\mathbf{z'}} - \sizepoint{\mathcal{W'}} + 2 \card{\mathcal{W} + \mathcal{W'}}=$\\ 
$= \min \{ \sizeexperimentbis{e} \: ; \: e \textrm{ is an $\smbis{}$-experiment of } \cutnets{\pi}{\pi'}{c}{c'} \} .$ 
Since $\mathcal{W}(e_0) = \mathcal{V} + \mathcal{V'}$ and $\card{\mathcal{V} + \mathcal{V'}} = \card{\mathcal{W} + \mathcal{W'}}$, we have $\sizeexperiment{e_0} = \sizeexperimentbis{e_0} - 2 \card{\mathcal{W} + \mathcal{W'}} = \sizepoint{\mathbf{z}} -\sizepoint{\mathcal{W}} + \sizepoint{\mathbf{z'}} - \sizepoint{\mathcal{W'}}$.

By Proposition~\ref{proposition : non-erasing stratified reduction}, we have
$\length{R_1} = (\sizeexperiment{e_0} - \sizebisinf{\smbis{\cutnets{\pi}{\pi'}{c}{c'}}})/2 = 
(\sizepoint{\mathbf{z}} -\sizepoint{\mathcal{W}} + \sizepoint{\mathbf{z'}} - \sizepoint{\mathcal{W'}} - \sizebisinf{\smbis{\cutnets{\pi}{\pi'}{c}{c'}}})/2$. Moreover, by Lemma~\ref{lemma : key-lemma : strat}, there exists $e_1 :_{\smbis{}} \pi_1$ s.t.
\begin{itemize}
\item $\weakeningsofexperiment{e_1} = \mathcal{V} + \mathcal{V'}$ 
\item and $\sizeexperimentbis{e_1} = \min \{ \sizeexperimentbis{e} \: ; \: e \textrm{ is an $\smbis{}$-experiment of } \pi_1 \}$.
\end{itemize}
By Lemma~\ref{lemma : erasing cuts} (applied to $\pi_{1}$ and $e_{1}$), there are $\card{\mathcal{V} + \mathcal{V'}} = \card{\mathcal{W} + \mathcal{W'}}$ (erasing) cuts in $\pi_{1}$. Since $\pi_1$ is $\nonerasing$-normal and $R_{2}$ is antistratified, we have $\length{R_2} = \card{\mathcal{W} + \mathcal{W'}}$. Hence
\begin{eqnarray*}
& & \strong{\cutnets{\pi}{\pi'}{c}{c'}} \\
& = & \length{R_1} + \length{R_2} \allowdisplaybreaks\\
& = & (\sizepoint{\mathbf{z}} -\sizepoint{\mathcal{W}} + \sizepoint{\mathbf{z'}} - \sizepoint{\mathcal{W'}} +2 \card{\mathcal{W} + \mathcal{W'}} - \sizebisinf{\smbis{\cutnets{\pi}{\pi'}{c}{c'}}})/2   \allowdisplaybreaks\\
& = & \frac{\sizepoint{\mathbf{z}} + \sizepoint{\mathcal{W}} + 2 \card{\mathcal{W}} + \sizepoint{\mathbf{z'}} + \sizepoint{\mathcal{W'}} + 2 \card{\mathcal{W'}} - \sizebisinf{\smbis{\cutnets{\pi}{\pi'}{c}{c'}}}}{2} \\
& & - \sizepoint{\mathcal{W} +  \mathcal{W'}} \allowdisplaybreaks\\
& = & \frac{\sizepointbis{\mathbf{z}, \mathcal{W}} + \sizepointbis{\mathbf{z'}, \mathcal{W'}} - \sizebisinf{\smbis{\cutnets{\pi}{\pi'}{c}{c'}}}}{2} - \sizepoint{\mathcal{W} +  \mathcal{W'}} \allowdisplaybreaks\\
& = & \min (\mathcal{C}) .
\end{eqnarray*}
\end{proof}

We now give a concrete example of application of Theorem~\ref{theorem:exactSN}, which has also a theoretical purpose: we want to show that only a little part of $\sm{\pi}$ and $\sm{\pi'}$ is involved in the computation of $\strong{\cutnets{\pi}{\pi'}{c}{c'}}$. In~\cite{MR2926280}, we proved that from $\sm{\pi}$ one can recover much information about $\pi$ (the whole net $\pi$ in the absence of weakenings). And when it is possible to recover $\pi$ from $\sm{\pi}$, a straightforward way to compute $\strong{\cutnets{\pi}{\pi'}{c}{c'}}$ from $\sm{\pi}$ and $\sm{\pi'}$ is to recover $\pi$ and $\pi'$ from $\sm{\pi}$ and $\sm{\pi'}$, and then to apply the cut elimination procedure to the net $\cutnets{\pi}{\pi'}{c}{c'}$! Of course this is not at all what Theorem~\ref{theorem:exactSN} does, and to illustrate this fact, we consider a net $\pi$, two nets $\pi'_1$, $\pi'_2$ with the same conclusions (represented in Figure~\ref{figure:example}) and the two nets $\cutnets{\pi}{\pi'_{1}}{c}{c'}$ and $\cutnets{\pi}{\pi'_{2}}{c}{c'}$. As noticed by Pierre Boudes (see \cite{Boudes-Desequentialized} for a formulation in the framework of Abstract B\"ohm trees), the elements of $\sm{\pi'_1}$ and of $\sm{\pi'_2}$ in which the positive multisets have cardinality $0$ or $1$ are the same (which entails that these points are not enough to recover $\pi'_{1}$ nor $\pi'_{2}$ since clearly $\pi'_{1}\neq\pi'_{2}$). We show here that these points are nevertheless enough to compute $\strong{\cutnets{\pi}{\pi'_1}{c}{c'}}$ and $\strong{\cutnets{\pi}{\pi'_2}{c}{c'}}$, following the method proposed in Theorem~\ref{theorem:exactSN}. This clearly shows that the amount of information required to apply our method is much less than the one required to recover the nets themselves, which is desirable, since the information we obtain (the maximal length of the reduction sequences) is itself less that the complete computation.

\begin{example}\label{example:LessThanInjectivity}
Let $\pi$ (resp. $\pi'_1$, $\pi'_2$) be the net of Figure~\ref{figure:example} with conclusions $d, c$ (resp. $c'$). 
\begin{figure}
\begin{center}
\scalebox{\scalefact}{\input{Boudes.pstex_t}}
\caption{An example}\label{figure:example}
\end{center}
\end{figure}
Notice that we have\\ 
$\langle (-, [(+, \ast), (+, \ast)]), (+, [(-, [(+, [(-, \ast)])]), (-, [(+, [(-, \ast)])])]) \rangle \in \sm{\pi}$ and\\
 $\langle (-, [(+, [(-, [(+, \ast)])]), (+, [(-, [])])]) \rangle \in \sm{\pi'_1}, \sm{\pi'_2}$. 
We thus have, by Proposition~\ref{prop:SembisFromSem}, 
\begin{itemize}
\item 
$(\langle (-, [(+, \ast), (+, \ast)]), (+, [(-, [(+, [(-, \ast)])]), (-, [(+, [(-, \ast)])])]) \rangle, []) \in \smbis{\pi}$ (indeed, since the considered point of $\sm{\pi}$ is exhaustive and does not contain $(-,[])$, intuitively it is also a point of $\smbis{\pi}$)
\item and $(\langle (-, [(+, [(-, [(+, \ast)])]), (+, [(-, [(+, \ast)])])]) \rangle, [(+, \ast)]) \in \smbis{\pi'_1}, \smbis{\pi'_2}$ (here, contrary to the previous case, the function $F$ of Definition~\ref{def:F} really plays a role: we have considered the point obtained by substituting $(-, [])$ with $(-, [(+, \ast)])$, where $(+, \ast)\in D$).
\end{itemize}
We have 
\begin{itemize}
\item $\sizebisinf{\smbis{\cutnets{\pi}{\pi'_1}{c}{c'}}} = \sizebisinf{\smbis{\cutnets{\pi}{\pi'_2}{c}{c'}}} = \sizepointbis{(\langle (-, [(+, \ast), (+, \ast)]) \rangle,
[(+, \ast)])} =\sizepoint{\langle (-, [(+, \ast), (+, \ast)]) \rangle}+(\sizepoint{(+, \ast)}+2) =3+(1+2)=6$, 
\item $\sizepointbis{(\langle (-, [(+, \ast), (+, \ast)]), (+, [(-, [(+, [(-, \ast)])]), (-, [(+, [(-, \ast)])])]) \rangle, [])} = \sizepoint{\langle (-, [(+, \ast), (+, \ast)]), (+, [(-, [(+, [(-, \ast)])]), (-, [(+, [(-, \ast)])])]) \rangle}=10$,
\item $\sizepointbis{(\langle (-, [(+, [(-, [(+, \ast)])]), (+, [(-, [(+, \ast)])])]) \rangle, [(+, \ast)])}$ \\
$= \sizepoint{\langle (-, [(+, [(-, [(+, \ast)])]), (+, [(-, [(+, \ast)])])]) \rangle}+(\sizepoint{(+, \ast)}+2)=7+(1+2)=10$,
\end{itemize}
hence, by Theorem~\ref{theorem:exactSN}, we have $\strong{\cutnets{\pi}{\pi'_1}{c}{c'}}, \strong{\cutnets{\pi}{\pi'_2}{c}{c'}} \leq \frac{10+10-6}{2}-1 = 6$. Actually, one can check that these points are those which give the exact value of $\strong{\cutnets{\pi}{\pi'_1}{c}{c'}}$ and of $\strong{\cutnets{\pi}{\pi'_2}{c}{c'}}$: $\strong{\cutnets{\pi}{\pi'_1}{c}{c'}} = \strong{\cutnets{\pi}{\pi'_2}{c}{c'}} = 6$.
\end{example}

\section*{Conclusion}

We introduced a new interpretation $\smbis{-}$ of nets and showed that, for any net $\pi$, we have $\smbis{\pi} \not= \emptyset$ if, and only if, $\pi$ is strongly normalizing. In order to prove this theorem, we showed by the way, without using confluence, the Conservation Theorem ($\textbf{WN}^{\nonerasing} = \textbf{SN}$) - a key point in several proofs of strong normalization. 

This characterization of strong normalization has been refined with quantitative information relating the exact number of reduction steps of longest reduction sequences and some size of $\smbis{}$-experiments. This relation applied to the case of a net consisting of the cut of two cut-free nets allowed to show that the size of some well-chosen points gives the exact number of reduction steps of longest reduction sequences, even if these points are not enough to reconstruct the net.

Of course, the $\smbis{}$-interpretation does not provide a denotational semantics in that this interpretation is not invariant during the reduction. This new interpretation is actually a variant of a well-known interpretation, the $\sm{}$-interpretation, which does provide a denotational semantics: given the $\sm{}$-interpretation of a cut-free $\pi$, we can compute its $\smbis{}$-interpretation, even wihout reconstructing the net (unlike with $\lambda$-terms, it is not always possible to reconstruct a net from its interpretation in some denotational semantics, and even if it is possible, it is generally very difficult and not trivial at all). The $\smbis{}$-interpretation, when restricted to nets corresponding to $\lambda$-terms, corresponds to some non-idempotent intersection types system, called here System~$R^\textit{ex}$, in the same way as the $\sm{}$-interpetation corresponds to the non-idempotent intersection types system called System~R. System~$R^\textit{ex}$ is very close to the system studied in \cite{bernadetlengrand13} , which identified a measure on typing derivations that, for some specific derivations, provides the exact number of measure of longest reduction sequences of $\beta$-reduction steps, while a similar work was done for System~R and steps of Krivine's machine in \cite{phddecarvalho, Carvalhoexecution}.

Since we showed that only a small part of the semantics is used to determine the number of reduction steps (a small part which -in general- is not enough to recover the syntax), an interesting problem is to know whether we could obtain a similar result using the multiset based coherence semantics, for which we know since~\cite{phdtortora} that it is in general impossible to recover a net from its interpretation.

\paragraph{Acknowledgements.} We thank Alexis Bernadet and St\'ephane Graham-Lengrand for a stimulating discussion on the previous version of this work.

\bibliographystyle{plain}
\bibliography{ll}

\end{document}